\newtheorem{conjecture}{Conjecture}
\newtheorem{corollary}{Corollary}
\newtheorem{definition}{Definition}
\newtheorem{lemma}{Lemma}
\newtheorem{remark}{Remark}
\newtheorem{theorem}{Theorem}
\newtheorem{problem}{Problem}
\newcommand{\mc}{\mathcal}
\renewcommand{\E}{\mathop{\mathbb E\/}}
\newcommand{\Per}{\mathrm{Per}}
\newclass{\sharpp}{\#P}
\newclass{\cocequalp}{coC_{=}P}
\newfunc{\expp}{exp}
\newcommand{\przero}[1]{{\sf{p}}_{\bf{0}}\left(#1\right)}
\newenvironment{nalign}{
    \begin{equation}
    \begin{aligned}
}{  \end{aligned}
    \end{equation}
    \ignorespacesafterend{}}
\definecolor{darkred}  {rgb}{0.5,0,0}
\definecolor{darkblue} {rgb}{0,0,0.5}
\definecolor{darkgreen}{rgb}{0,0.5,0}
\newtheorem*{theorem*}{Theorem}
\begin{document}
\title{Noise and the frontier of quantum supremacy}
\author{
Adam Bouland\thanks{Simons Institute for the Theory of Computing and Department of EECS, UC Berkeley. \href{mailto:abouland@berkeley.edu}{abouland@berkeley.edu}}
\and 
Bill Fefferman\thanks{Department of Computer Science, University of Chicago. \href{mailto:wjf@uchicago.edu}{wjf@uchicago.edu}}
\and Zeph Landau\thanks{Department of EECS, UC Berkeley. \href{mailto:zeph.landau@gmail.com}{zeph.landau@gmail.com}}
\and Yunchao Liu\thanks{Department of EECS, UC Berkeley. \href{mailto:yunchaoliu@berkeley.edu}{yunchaoliu@berkeley.edu}}
}
\date{}

\maketitle
\thispagestyle{empty}

\begin{abstract}
Noise is the defining feature of the NISQ era, but it remains unclear if noisy quantum devices are capable of quantum speedups. Quantum supremacy experiments have been a major step forward, but gaps remain between the theory behind these experiments and their actual implementations. In this work we initiate the study of the complexity of quantum random circuit sampling experiments with realistic amounts of noise.

Actual quantum supremacy experiments have high levels of uncorrected noise and exponentially decaying fidelities. It is natural to ask if there is any signal of exponential complexity in these highly noisy devices. Surprisingly, we show that it remains hard to compute the output probabilities of noisy random quantum circuits without error correction. More formally, so long as the noise rate of the device is below the error detection threshold, we show it is $\sharpp$-hard to compute the output probabilities of random circuits with a constant rate of noise per gate. This hardness persists even though these probabilities are exponentially close to uniform. Therefore the small deviations away from uniformity are hard to compute, formalizing an important intuition behind Google's supremacy claim. 

Interestingly these hardness results also have implications for the complexity of experiments in a low-noise setting. The issue here is that prior hardness results for computing output probabilities of random circuits are not robust enough to imprecision to connect with the Stockmeyer argument for hardness of sampling from circuits with constant fidelity. We exponentially improve the robustness of prior results to imprecision, both in the cases of Random Circuit Sampling and BosonSampling. In the latter case we bring the proven hardness within a constant factor in the exponent of the robustness required for hardness of sampling for the first time. We then show that our results are in tension with one another -- the high-noise result implies the low-noise result is essentially optimal, even with generalizations of our techniques.
\end{abstract}

\newpage
\clearpage
\pagenumbering{arabic}

\section{Introduction}

The last two years have seen the first claimed demonstrations of ``quantum supremacy'' \cite{arute2019quantum,Zhong2020quantum}-- the 
experimental demonstration of an exponential quantum computational speedup. Quantum supremacy is both a necessary step on the path to building useful quantum computers, and also marks the first experimental challenge to the Extended Church-Turing thesis, a bedrock of theoretical computer science.  
Achieving quantum supremacy requires finding a computational task which is experimentally feasible and has strong complexity-theoretic evidence for classical intractability.
Two prominent supremacy proposals have been BosonSampling \cite{Aaronson2011linear} and Random Circuit Sampling \cite{boixo2018characterizing}, both of which involve sampling from the output distribution of a randomly chosen quantum experiment\footnote{Either a random circuit in the case of RCS, or a random linear optical network in the case of BosonSampling.}.

Despite having advantages in being near-term implementable, it is a priori unclear why these random quantum experiments should be difficult to classically simulate.  While we know of many examples of quantum algorithms that attain exponential speedups over classical computation, they all rely on highly structured circuits (such as quantum Fourier transforms) which are far from typical.  Why then should we expect a generic quantum experiment to realize a large computational advantage?

Aaronson and Arkhipov \cite{Aaronson2011linear} (in the setting of BosonSampling), and  Bouland \textit{et al.} \cite{bouland2019complexity} and Movassagh \cite{movassagh2020quantum} (in the setting of Random Circuit Sampling) gave initial evidence that these random experiments might be difficult to simulate classically\footnote{See also the complementary work of \cite{aaronson2017complexity,aaronson2019classical} for alternative evidence that directly addresses the hardness of scoring well on benchmarking tests such as cross-entropy.}.  
These works essentially show that it is $\#\mathsf{P}$-hard to \emph{exactly compute} the output probabilities of ideal, noiseless, quantum experiments on average\footnote{Please see Section \ref{sec:ourresults_intro} for a discussion of the differences between these works.}.  To do this, they establish a worst-to-average-case reduction for computing the output probabilities of random experiments.  Since it is a well-established fact that computing the output probability of a worst-case quantum experiment is $\#\mathsf{P}$-hard (see e.g., \cite{terhal2002adaptive}), these results prove that computing the output probabilities of a random quantum experiment is $\#\mathsf{P}$-hard as well.

These arguments have two primary limitations which keep them from directly matching the constraints of noisy quantum sampling experiments.  
First, they are sensitive to additive imprecision, which prevents them from addressing the hardness of classically mimicking \emph{low noise} experiments. For example, in the case of random circuit sampling experiments with $m$ gates (where we assume $m$ is greater than the number of qubits, $n$) the prior works are only able to prove hardness of computing an estimate to the output probability that achieves an additive error at most $2^{-O(m^3)}$ \cite{bouland2019complexity, movassagh2020quantum}.
In contrast, standard reductions \cite{Aaronson2011linear,Stockmeyer1985on} require hardness to additive imprecision $O(2^{-n})$ in order to show that the more natural sampling task -- which is what a low-noise\footnote{That is, such a hardness result would imply that no classical algorithm can sample from any distribution within a fixed constant total variation distance of the ideal distribution sampled by the noiseless quantum circuit.  Interestingly, we do not even have hardness of \emph{exact} sampling results for random quantum circuits owing to this same imprecision issue!} random quantum circuit experiment solves ``natively'' -- cannot be reproduced classically. 
A similarly large gap exists between the known ($e^{-O(n^4)}$) vs. conjectured ($O(e^{-n\log n})$) robustness of hardness results for BosonSampling.
Proving a hardness result which is robust even to this small amount of noise remains open for all quantum supremacy proposals.

The second and more fundamental limitation of these arguments is that they say nothing about the complexity of \emph{high noise} experiments.
The prior average-case hardness arguments of \cite{Aaronson2011linear, Bremner2016averagecase, bouland2019complexity,movassagh2020quantum} all explore the complexity of mimicking low-noise experiments performed with a high fidelity that  stays constant as the system size grows. This would require improved control over the noise (e.g. error correction) to achieve asymptotically. 
But in the NISQ era we do not have the quantum resources to implement error correction and so experiments are highly noisy.
For example, Google's recent quantum supremacy experiment estimated that their fidelity was merely $\sim 0.2\%$ (i.e., the experiment was $\sim 99.8\%$ noise) \cite{arute2019quantum} and that their fidelity will further decrease as they scale their system. 
Therefore their quantum supremacy claim hinges on whether or not random circuit sampling is intractable in the high noise regime, in which there is only a small signal of the correct experiment in a sea of noise, and this signal diminishes with system size. 
In some cases noise can render the simulation of certain quantum circuits substantially easier (e.g. \cite{bremner2017achieving,zhou2020limits}).
For this reason, it is critical to  examine if these hardness of computation results are robust to noise models that are relevant to near-term quantum experiments.  
For example, can anything be said about the hardness of computing the output probabilities of such circuits in the high noise regime, without error correction?

In this work we make progress on both these limitations.  First, we substantially improve the robustness of prior hardness results.
For example, in the case of RCS, we show that it is $\#\mathsf{P}$-hard to compute the output probabilities of (noiseless) random circuits to additive imprecision $2^{-O(m\log m)}$.
In the regime of constant depth circuits where $m=O(n)$, this robustness nearly reaches that needed to establish the classical hardness of sampling in the low noise regime, though a small gap remains.

\begin{theorem}[Simplified]\label{thm:simplifiedaveragecasehardness}
For any constant $\eta<1/4$, it is $\sharpp$-hard to compute the output probability of random quantum circuits $C$ up to additive error $2^{-O(m\log m)}$, with probability at least $1-\eta$ over the choice of $C$ with $m$ gates. 
\end{theorem}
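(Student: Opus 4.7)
The plan is to follow the worst-to-average-case polynomial interpolation framework of Bouland et al.\ and Movassagh, but with a sharper analysis of the interpolation step. Fix a worst-case circuit $C^*$ on $n$ qubits with $m$ gates whose output probability $p^* = |\langle 0^n | C^* | 0^n \rangle|^2$ is $\sharpp$-hard to compute. The goal is to reduce the computation of $p^*$ to queries of an average-case oracle $\mathcal{O}$ that succeeds on a $1-\eta$ fraction of random circuits up to additive error $\epsilon = 2^{-O(m \log m)}$. We do so by constructing a polynomial path $C(\theta)$ with $C(1) = C^*$ and $C(0)$ distributed according to the random-circuit ensemble, designed so that $p(\theta) = |\langle 0^n | C(\theta) | 0^n \rangle|^2$ is a univariate polynomial of degree $d = O(m)$ in $\theta$; evaluating $p$ at $\theta = 1$ recovers $p^*$.

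First, I would query $\mathcal{O}$ at $k = \Theta(d)$ random points $\theta_1, \dots, \theta_k$ drawn from a small interval $[0, \delta]$ on which the marginal distribution of $C(\theta_i)$ stays within $o(\eta)$ total variation distance of the random-circuit ensemble. This ensures that, apart from an adversarial $\eta$-fraction, each $\mathcal{O}(C(\theta_i))$ is an $\epsilon$-approximation to $p(\theta_i)$. Then I would apply a robust polynomial extrapolation procedure --- for instance, Chebyshev-node based extrapolation combined with a Berlekamp--Welch-style list decoder for low-degree real polynomials --- to recover $p(1)$. The constant $\eta < 1/4$ in the theorem is precisely the unique-decoding radius for evaluations of a low-degree polynomial, so this fraction of adversarial queries is exactly what a Berlekamp--Welch-type decoder can absorb.

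The key improvement over the prior $2^{-O(m^3)}$ bound comes from enlarging the interpolation window $\delta$ and using a better-conditioned extrapolation. The condition number of Lagrange-style extrapolation from $[0,\delta]$ to $\theta = 1$ for a degree-$d$ polynomial scales as $(O(1/\delta))^d$; in prior work, $\delta$ had to be taken as small as $2^{-\mathrm{poly}(m)}$ because the perturbations along the path quickly pushed the circuit away from the random ensemble, yielding the $2^{-O(m^3)}$ bound. To reach the claimed $2^{-O(m \log m)}$ robustness, I would design the path so that closeness to the random ensemble is preserved on a window of width $\delta = 1/\mathrm{poly}(m)$, giving an overall error blowup of $2^{O(d \log(1/\delta))} = 2^{O(m \log m)}$.

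The main technical obstacle is this trade-off between the width of the good interpolation window and the TV distance of the perturbed distribution from the target ensemble. Enlarging $\delta$ from exponentially small to inverse-polynomially small requires either a reparametrized version of the Cayley or Gaussian path in which the local sensitivity of the distribution to $\theta$ is only polynomial rather than exponential, or a tighter moment-type bound on how rapidly the perturbed distribution drifts from the random ensemble as a function of $\theta$. Once this is established, the remaining ingredients --- confirming that $p(\theta)$ is a polynomial of degree exactly $O(m)$, concentrating the number of adversarially wrong queries below the decoding radius, and carrying out the robust extrapolation --- follow from standard techniques, and the bound in the theorem drops out directly.
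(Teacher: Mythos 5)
Your high-level framework---an interpolating path $C(\theta)$ with $C(1)=C^*$ and $C(0)$ random, querying the oracle on a small window $[0,\Delta]$, and using a Berlekamp--Welch-style robust decoder to absorb the $\eta<1/4$ fraction of adversarially wrong values---is indeed the architecture of the paper's proof, and you correctly identify $\eta<1/4$ as the unique-decoding radius. The robust interpolation with corrupted real-valued points is one of the paper's two new ingredients; it is used to enlarge $\Delta$ from $O(1/m^2)$ (prior work) to $O(1/m)$, at the cost of a constant fraction of bad points. The paper proves an information-theoretic version of Berlekamp--Welch over $\mathbb{R}$, which yields only an $\NP$ decoder; this is why the actual theorem is proved under $\BPP^{\NP}$ reductions, a qualifier you should flag. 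You also elide the fact that under the Cayley path $\przero{C(\theta)}$ is a degree-$(O(m),O(m))$ rational function, not a polynomial; the paper handles this by pulling out and bounding the (known, computable) denominator $Q(\theta)$ before interpolating the numerator.

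The more serious issue is that your explanation of where the prior $2^{-O(m^3)}$ bound came from, and of what your improvement buys, is incorrect. You assert that the Lagrange extrapolation condition number from $[0,\delta]$ to $1$ for a degree-$d$ polynomial is $(O(1/\delta))^d$ and that prior work had to take $\delta=2^{-\poly(m)}$. Both claims are false: prior work used $\Delta=O(1/m^2)$ (polynomially small), and the standard tool (Paturi's bound) gives a much larger blowup of $\delta\cdot 2^{O(d/\Delta)}$, which for $d=O(m)$ and $\Delta=O(1/m^2)$ is exactly $2^{O(m^3)}$. If the condition number were really $(1/\Delta)^d=2^{O(d\log\Delta^{-1})}$ as you claim, then prior work would already have obtained $2^{-O(m\log m)}$. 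The fact that the extrapolation error is actually $\delta\cdot e^{d\log\Delta^{-1}+O(d)}$ rather than $\delta\cdot 2^{O(d/\Delta)}$ is the paper's second key contribution (Lemma~\ref{lemma:longdistanceextrapolation}): it amounts to showing Paturi's bound is overly pessimistic when $\Delta$ is asymptotically small, because beyond the Chebyshev regime a degree-$d$ polynomial with bounded coefficients can only grow polynomially, not exponentially, with distance. You take this refined bound for granted, which means your proposal as written is missing the central analytic step; your diagnosis of the ``main technical obstacle'' (reparametrizing the path to widen the window from exponential to polynomial) addresses a problem that does not exist, while the actual obstacle (replacing the Paturi exponent $d/\Delta$ with $d\log\Delta^{-1}$, and proving a robust version of this under a constant corruption rate) goes unaddressed.
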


We note that Kondo, Mori and Movassagh \cite{kondo21} recently claimed the same robustness as our Theorem \ref{thm:simplifiedaveragecasehardness} and arrived at it independently.

We also give an analogous result for BosonSampling, showing that it is hard to compute the output probabilities of (noiseless) random $n$-photon $m=n^2$ mode linear optical networks to additive imprecision $e^{-6n\log n - O(n)}$ (see Corollary~\ref{cor:permanent}).
This nearly matches the desired robustness for BosonSampling ($O(e^{-n\log n}$)) up to a constant factor in the exponent.
To our knowledge, this is the first time that a quantum supremacy proposal exhibits a proven robustness of hardness which is polynomially related to the conjectured robustness in absolute terms.

Second, we show that these hardness results remain true even in the high noise regime -- i.e. in the presence of a constant rate of noise, without error correction.
For example, consider the case of random circuit sampling, and suppose one fixes a noise model, such as i.i.d. depolarizing noise at rate $\gamma=\Theta(1)$ after each 2-qubit gate.
We show that it is also $\#\mathsf{P}$-hard to compute the output probabilities of the noisy circuits\footnote{Here we mean the output probabilities averaged over the noise, i.e. the corresponding diagonal entries of the density matrix after evolution under the circuit and noise channels.} on average to additive imprecision $2^{-O(m\log m)}$.
This holds so long as the noise rate $\gamma$ is below the error detection threshold.

\begin{theorem}[Simplified]\label{thm:simplifiednoisyhardness}
 The same average-case hardness result as in Theorem~\ref{thm:simplifiedaveragecasehardness} applies to noisy random quantum circuits, assuming the noise is gate-independent and can be error-detected.
 \end{theorem}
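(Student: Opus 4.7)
The plan is to lift Theorem~\ref{thm:simplifiedaveragecasehardness} from noiseless to noisy random circuits by exploiting the Pauli structure of gate-independent noise. First, I would expand each noise channel in the Pauli basis, writing the noisy density matrix as a convex mixture over ``Pauli error paths'':
\[
\rho_{\mathrm{noisy}} \;=\; \sum_{\vec{P}} w_{\vec{P}}\, C_{\vec{P}}\, |0\rangle\!\langle 0|\, C_{\vec{P}}^{\dagger},
\]
where $C_{\vec{P}}$ is the ideal circuit with Paulis $\vec{P}$ inserted between the gates and $w_{\vec{P}}$ is the probability of that error pattern. The noisy output probability $p_{\mathrm{noisy}}(x,\vec{\theta})$ then becomes a weighted sum of noiseless output probabilities, one per Pauli history.

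Since each $C_{\vec{P}}$ has the same degree in the gate parameters $\vec{\theta}$ as $C$ itself, $p_{\mathrm{noisy}}(x,\vec{\theta})$ remains a low-degree polynomial of exactly the type handled in the proof of Theorem~\ref{thm:simplifiedaveragecasehardness}. I would therefore apply the same Berlekamp--Welch worst-to-average-case interpolation along a Cayley-like path from an arbitrary circuit to a Haar-random circuit, now to the noisy quantity, reducing average-case hardness for noisy random circuits to \emph{worst-case} hardness of $p_{\mathrm{noisy}}$ at precision $2^{-O(m\log m)}$. Invariance of the Haar measure under Pauli multiplication ensures that inserting Paulis in the middle of a random circuit gives another random circuit, so anti-concentration and the distributional assumptions used in the interpolation step carry over from the noiseless setting.

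The remaining task is worst-case $\sharpp$-hardness of $p_{\mathrm{noisy}}$, and this is where the error-detection hypothesis enters. I would embed an arbitrary $\sharpp$-hard noiseless circuit $C^{\ast}$ into a noisy circuit using an error-detecting code of some distance $d$, chosen so that every weight-$<d$ Pauli pattern is either equivalent to identity on the logical subspace or detectable. Below the error-detection threshold, the undetected-error contribution to the Pauli path sum is exponentially suppressed relative to the ``no detectable error'' contribution, so that $p_{\mathrm{noisy}}$ decomposes as a known coefficient, bounded away from zero by the threshold condition, multiplying the logical noiseless output of $C^{\ast}$, plus a small residual. Computing $p_{\mathrm{noisy}}$ to additive precision $2^{-O(m\log m)}$ then recovers the answer of $C^{\ast}$.

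The principal obstacle is precisely this last step: because $p_{\mathrm{noisy}}$ is exponentially close to uniform, the ``signal'' of the embedded computation sits on an exponentially small scale, and one must use the error-detection threshold \emph{quantitatively} to guarantee that the ``no detectable error'' coefficient stays above $2^{-O(m\log m)}$ uniformly in the circuit size. Obtaining this quantitative estimate, with enough polylogarithmic-in-$m$ slack to match the robustness of Theorem~\ref{thm:simplifiedaveragecasehardness} rather than losing additional factors to the $m$-fold noise composition, is where I expect the technical heart of the argument to lie.
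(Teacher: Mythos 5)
Your proposal follows essentially the same two-part strategy as the paper: (i) the noisy output probability is a convex combination over error trajectories of noiseless output probabilities, each sharing the same low-degree (rational-function) structure in the interpolation parameter, so the robust Berlekamp--Welch worst-to-average reduction applies verbatim; and (ii) worst-case hardness comes from embedding a hard circuit in an error-detecting code and post-selecting on ``no error detected.'' The quantitative obstacle you flag at the end is resolved simply in the paper: the no-error-detected probability is at least the probability that no fault occurs at all, namely $(1-\gamma)^{O(m)} = 2^{-O(\gamma m)}$, which comfortably dominates the target precision $2^{-O(m\log m)}$, so the gap of the embedded (gapped, $\cocequalp$-hard) decision problem survives at the required scale.
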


This second hardness result is even more interesting when we consider additional structure present in the uncorrected noise setting: the output probabilities of random circuits converge exponentially quickly to the uniform distribution. 
More specifically, it is widely conjectured that the output probabilities of such noisy random quantum circuits are $2^{-O(m)}$ close to uniform~\cite{boixo2018characterizing}.
In fact we rigorously prove this convergence in a toy model in which errors are interspersed with global Haar-random unitaries, though only slower exponential convergences have been shown for random circuit sampling \cite{aharonov1996limitations,gao2018efficient}.
Therefore the output probabilities of the noisy circuits are very close to $2^{-n}$, up to some exponentially suppressed deviations.

This rapid convergence to uniformity has several implications.
First, central to Google's supremacy claim was an assumption that these tiny deviations are in the direction of the ideal distribution\footnote{In particular \cite{arute2019quantum} argued for the hardness of their experiment under a highly simplified noise model of global depolarizing noise, since then the output probabilities are a convex combination of the true distribution with the uniform distribution. Our result generalizes this statement in a more realistic, per-gate noise model, and more generally to any error-detectable noise model.}, and their experimental data was consistent with this claim~\cite{arute2019quantum}. 
Our result supports their intractability claim, in the sense that even these exponentially suppressed deviations from uniformity are hard to compute classically, though it remains open if \emph{sampling} from these highly noisy distributions is classically intractable.

Second, this convergence to uniformity implies that Theorem \ref{thm:simplifiednoisyhardness} is optimal up to the log factor in the exponent!
This is because it cannot possibly be hard to compute the output probabilities of high-noise circuits to additive imprecision $2^{-o(m)}$ -- because the trivial guess of $2^{-n}$ is already within $2^{-O(m)}$ of the noisy probability (on average).
Therefore the robustness of our result to imprecision $2^{-O(m\log m)}$ in the high noise case is essentially tight, as one cannot have a sublinear dependence on $m$ in the exponent. Since our low-noise result is proven with very similar techniques, this means it might be very difficult to improve the robustness of the low-noise result as well.
In this sense our two results are complementary to one another.

\section{Our results and techniques}\label{section:techniques}

Showing both of these results require the development of new techniques to deal with increased amounts of imprecision and noise in the output probabilities. 
For simplicity we will first describe our results in the context of random circuit sampling. We will later show that the same techniques immediately give the analogous results for BosonSampling as well.

\label{sec:ourresults_intro}

\paragraph{Previous work on average-case hardness.}

The core idea of prior average case hardness proofs, such as that of Lipton \cite{Lipton91new}, is to find an algebraic structure in the problem and use it as an error correcting code. 
Specifically, for many $\sharpp$-hard problems, the hard function can be written as a low degree polynomial of the input values.
One can think of this low degree polynomial as a Reed-Solomon code, which allows one to infer the (possibly incorrect) value of a worst-case instance from the (typically correct) values of many average-case instances. 
In order to do so one must find a way of relating the value of a worst case instance to the value of several average-case instances in a way which preserves the low degree polynomial structure.
For example, in the case of permanent of matrices over finite fields, given an arbitrary matrix $A\in\mathbb{F}_q^{n\times n}$, if one picks a random matrix $B$ and lets $A(\theta)=A+\theta B$ for $\theta\in\mathbb{F}_q$, one obtains a family of related instances $\{A(\theta)\}_{\theta\in\mathbb{F}_q}$ which are each individually random, but correlated in a way which preserves this low-degree polynomial structure -- namely, $\Per(A(\theta))$ is a low degree polynomial in $\theta$.
This allows one to infer the value of the worst-case instance $\Per(A(0))$ from many values at $\theta\neq 0$ by polynomial interpolation.

For random quantum circuits similar ideas can be made to work, but creating these correlated families of instances is not straightforward, due to the fact that the algebraic structure of quantum circuits is very different than that of permanents.  
There are currently two different ways to establish this polynomial structure \cite{bouland2019complexity,movassagh2020quantum}.
The original approach of \cite{bouland2019complexity} accomplished this by taking a family of correlated quantum circuits without a low degree polynomial structure, and imposing one by truncating a particular Taylor series.
They used this structure to show the output probabilities of ``truncated'' circuits are hard to compute on average.
While at first glance this average-case result might appear unrelated to the average-case hardness of random circuits -- as the ``input type'' to this problem is different than the random circuit itself, due to the truncations -- surprisingly Bouland \textit{et al.} showed this average-case result is related to the original conjecture, in that robust hardness with respect to this truncated Taylor distribution is equivalent to robust hardness with respect to random circuits (see SI of \cite{bouland2019complexity}). 
Interestingly, while Bouland \textit{et al.}'s result provided the first evidence in support of the approximate average-case supremacy conjecture for random circuits, their work left open whether or not the \emph{true} output probabilities of these circuits are hard to compute on average, as pointed out in the SI of \cite{bouland2019complexity} and discussed in more detail by \cite{movassagh2020quantum,napp2020efficient}.
More recently, Movassagh \cite{movassagh2020quantum} found a clever workaround to avoid the truncation issue.
By simply choosing a different path between the worst and average case instances -- known as the ``Cayley path'' -- he showed it is hard to compute the exact output probabilities of random circuits, using a similar flow of argument to \cite{bouland2019complexity}.
As Movassagh's proof techniques are simpler, and the differences between these approaches could matter in high noise setting (see later discussion in Section \ref{subsec:avgcasehardnessdetails} and Appendix \ref{appendix:truncatedtaylor}), we use Movassagh's interpolation method throughout this work when discussing random circuit sampling results.

More generally, the strategy of establishing average-case hardness via polynomial interpolation is well studied in the context of problems over finite fields.
For example, using the Berlekamp-Welch algorithm \cite{welch1986error} and list decoding techniques, it has been shown that it is hard to compute the permanent of random matrices in $\mathbb{F}_q^{n\times n}$ with constant \cite{gemmell1992highly} or even inverse polynomial \cite{cai1999hardness} probability, so long as the field size $q$ is sufficiently large  (see also \cite{guruswami2006list} for an overview).
However, in the context of quantum supremacy, the relevant polynomials are all over the fields $\mathbb{C}$ or $\mathbb{R}$.
For example, the real-valued output probabilities of quantum circuits are low degree polynomials in the complex gate entries of the input circuit.
This complicates the picture for proving average-case hardness, because first, one needs to consider that some of the values might be incorrect, and second on these correct values, there may be some imprecision in the value, i.e. the correct values may only be within $\pm\delta$ of the true real value\footnote{Indeed, some amount of imprecision is inevitable in any digital model of computation, where real numbers are approximated with poly-sized binary representations to $\delta=2^{-\poly(n)}$ accuracy.}.

\subsection{The bottleneck in prior works}

Existing average-case hardness results over the reals, such as those in \cite{Aaronson2011linear,bouland2019complexity,movassagh2020quantum}, can tolerate only a tiny amount ($\delta=2^{-\poly(n)}$) of such imprecision.
This is for two reasons.
First, many of the sophisticated tools used to show strong average-case hardness over finite fields, such as Berlekamp-Welch and list decoding techniques, do not immediately port over to the real case.
Therefore one is forced to use straightforward Lagrangian interpolation/extrapolation techniques in the average case argument\footnote{We note that if one adopts a model of computation which admits exact representations of real numbers, then the Berlekamp-Welch algorithm does work, as noted in \cite{bouland2019complexity}, but this is not known to work in a digital model of computation where one cannot exactly represent real numbers.}, following the original arguments of Lipton \cite{Lipton91new}. 
Second, it is well known that polynomial interpolation/extrapolation over the reals is exponentially sensitive to noise.
This is essentially because there exist low degree polynomials (like Chebyshev polynomials) which are very close to one another in some interval but diverge rapidly outside of that interval.
In fact very small errors in the value of average-case instances can blow up exponentially when extrapolated to the value of the worst-case instance.

Because of these two limitations, the strongest existing average-case hardness results for computing the output probabilities of random quantum circuits can only tolerate imprecision of order $2^{-O(m^3)}$ in the output probabilities of $n$-qubit $m$-gate quantum circuits. More precisely, Movassagh \cite{movassagh2020quantum}, building on prior arguments of Aaronson and Arkhipov \cite{Aaronson2011linear} and Bouland \textit{et al.} \cite{bouland2019complexity}\footnote{More precisely, \cite{bouland2019complexity} showed (using the techniques of Aaronson and Arkhipov \cite{Aaronson2011linear}) that it is hard to compute  the ``truncated'' circuit probabilities to accuracy $2^{-\text{poly}(m)}$, but did not quantify the polynomial in their argument as it depends on a number of unspecified parameter settings. Movassagh \cite{movassagh2020quantum} improved this to show hardness of computing the true output probabilities, and quantified the robustness to $2^{-O(m^3)}$. }, showed the following:

 \begin{theorem*}[\cite{movassagh2020quantum}, Simplified]\label{thm:oldavgcaseresults}
 It is $\sharpp$-hard to compute the output probability of random quantum circuits $C$ up to additive error $2^{-O(m^{3})}$, with probability $\geq 1-\frac{1}{\poly(n)}$ over the choice of $C$ with $n$ qubits and $m$ gates. 
 \end{theorem*}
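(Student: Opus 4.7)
The plan is a Lipton-style worst-to-average-case self-reduction, implemented using Movassagh's Cayley path to interpolate between a hard instance and a genuinely random circuit while preserving low-degree algebraic structure. First, fix a worst-case circuit $C^\star$ on $n$ qubits with $m$ gates such that $|\langle 0^n | C^\star | 0^n\rangle|^2$ is known to be $\sharpp$-hard to compute---this is standard via encoding \#SAT into quantum circuits. The goal is to bootstrap an oracle that, on a $1-\eta$ fraction of random circuits, returns the output probability to additive error $2^{-O(m^3)}$ into an exact algorithm for this worst-case probability, which would yield the claimed $\sharpp$-hardness.

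Second, construct the Cayley path $C(\theta)$. Sample the target random circuit $C$ gate-by-gate, and for each corresponding pair $(U_i^\star, V_i)$ of worst-case and Haar-random gates, define an interpolating gate $U_i(\theta)$ via the Cayley transform applied to the unitary rotation carrying $U_i^\star$ to $V_i$. Two structural facts are crucial: (i) $U_i(\theta)$ is unitary for all real $\theta$, with $U_i(0)=U_i^\star$ and $U_i(1)=V_i$, and each matrix entry is a rational function of $\theta$ of constant numerator and denominator degree; (ii) for $\theta$ drawn uniformly from a short interval $[1-\Delta,1]$, the induced distribution of $C(\theta)$ lies within total variation distance $O(m\Delta)$ of the genuinely Haar-random ensemble. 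Property (ii) allows the oracle to be invoked on each $C(\theta_j)$ with only an $O(m\Delta)+\eta$ loss in success probability per query.

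Third, exploit the polynomial structure. Clearing denominators, $p(\theta)=|\langle 0^n|C(\theta)|0^n\rangle|^2$ satisfies $q(\theta)p(\theta)=\tilde p(\theta)$ for polynomials $q,\tilde p$ of total degree $d=O(m)$. Query the oracle at $k\gg d$ points $\theta_1,\ldots,\theta_k\in[1-\Delta,1]$, discard a small fraction of oracle errors via a robust majority/median scheme (list-decoding over $\mathbb{R}$ is unavailable, so one is forced into this weaker approach), recover $\tilde p$ and $q$ by Lagrange interpolation, and evaluate at $\theta=0$ to recover $p(0)=|\langle 0^n|C^\star|0^n\rangle|^2$.

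The main obstacle is the error accounting in this last step, and this is what ultimately pins the robustness at $2^{-O(m^3)}$. Two error sources compound: the oracle returns each value only up to additive error $\delta$, and Lagrange extrapolation of a degree-$\Theta(m)$ polynomial from nodes crammed into $[1-\Delta,1]$ all the way out to $\theta=0$ amplifies these errors by (essentially) the Vandermonde condition number, which scales as $(1/\Delta)^{\Theta(m)}$. Property (ii) forces $\Delta=1/\poly(m)$ to keep the per-query TV-distance loss below $\eta$, and the robust interpolation step further inflates both the effective degree and the number of queries needed for a successful majority vote. Propagating these three ingredients---polynomial degree, Vandermonde condition number on a shrinking interval, and sample complexity for the robust vote---yields an amplification factor of $2^{O(m^3)}$ in Movassagh's accounting, so oracle accuracy $\delta = 2^{-O(m^3)}$ is exactly what is needed for the reduction to recover the worst-case probability to within $1$ (and hence exactly, since it is a rational number with bounded denominator). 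Tightening this exponent---either by a cleverer choice of interpolation nodes or a tighter condition-number bound---is precisely the route to the $2^{-O(m\log m)}$ robustness that the present paper establishes.
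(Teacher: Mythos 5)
Your high-level architecture --- Cayley path, degree-$O(m)$ rational structure with a known denominator, per-gate $O(\theta)$ total-variation bound, interpolation from a short interval followed by long-range extrapolation --- is the right skeleton and matches how this result is obtained. But the quantitative core, the part that actually produces the exponent $m^3$, has a genuine gap, and your own accounting is internally inconsistent. You attribute the error amplification to a Vandermonde/Lebesgue constant scaling as $(1/\Delta)^{\Theta(m)}$; with $\Delta = 1/\mathrm{poly}(m)$ that quantity is $2^{\Theta(m\log m)}$, not $2^{O(m^3)}$, so your stated bound does not yield your stated conclusion. The $m^3$ in the cited result comes from a different estimate: Paturi's inequality, which bounds the extrapolation blowup of a degree-$d$ polynomial that is small on an interval of width $\Delta$ by $e^{O(d\Delta^{-1})}$ --- exponential in $d/\Delta$, not $(1/\Delta)^{d}$ --- combined with the forced choice $\Delta = \Theta(1/m^2)$. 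That choice of $\Delta$ is itself something you never derive: plain Lagrange interpolation needs \emph{all} $O(m)$ evaluation points to be correct, and since the queried circuits are highly correlated only a union bound is available, so each query must fail with probability $O(1/m)$; each query fails with probability $\eta + O(m\Delta)$, which forces both $\eta = O(1/m)$ (this is exactly why the success probability in the statement is $1-1/\mathrm{poly}(n)$ rather than a constant, a feature your proposal does not explain) and $\Delta = O(1/m^2)$. Plugging $d=O(m)$, $\Delta^{-1}=O(m^2)$ into Paturi gives the $2^{O(m^3)}$ amplification. Ironically, if your $(1/\Delta)^{\Theta(m)}$ claim were rigorously established, you would be most of the way to the stronger $2^{-O(m\log m)}$ robustness that is a new contribution of this paper; but establishing it requires first uniformly bounding the difference polynomial on the sampling interval and then bounding its coefficients, not a bare appeal to a condition number.

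Two further steps fail as written. First, the ``robust majority/median scheme'' for discarding faulty oracle answers is not a real mechanism here: the queries are made at \emph{distinct} values $\theta_j$ of a polynomial, so there is nothing to take a majority over, and the oracle's failures may be adversarially placed on a correlated family of circuits. Tolerating a constant fraction of bad evaluation points is precisely the hard step (handled in this paper by a Berlekamp--Welch-type argument with an $\NP$ decoder); for the $1-1/\mathrm{poly}(n)$ version you are proving, one simply union-bounds and insists that every point be good, and no ``robust vote'' contributes to the exponent. Second, the endgame ``recover the worst-case probability to within $1$, hence exactly, since it is a rational with bounded denominator'' is not correct: additive error $1$ is trivially achievable, the output probabilities in question are not bounded-denominator rationals, and what the reduction must actually deliver is the worst-case probability to within $2^{-O(m)}$, enough to resolve the exponentially small gap in the underlying worst-case-hard decision problem.
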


The origin of this $O(m^3)$ factor in the exponent is simple to understand.
A well-known result due to Paturi states that if two degree-$d$ polynomials are $\delta$-close to one another in the interval $[0,\Delta]$, then their values at $1$ can differ by at most $\delta\cdot 2^{O(d\Delta^{-1})}$ \cite{Paturi1992on}, which has an exponential dependence in the relative distance $1/\Delta$ and can be saturated by Chebyshev polynomials in many parameter regimes. 
The degree of the relevant polynomial in the context of random quantum circuits is $O(m)$.
The relative distance of interpolation $1/\Delta$ in existing arguments is $O(m^2)$.
Therefore, the Paturi argument shows that polynomial extrapolation over this interval incurs a $2^{O(m^3)}$ blowup in the imprecision when going from the average-case instance to the worst-case instance, which means that a precision of at least $2^{-O(m^3)}$ is required in the average-case output value to infer the worst-case value (which is bounded in $[0,1]$) to any nontrivial level of precision.

Let us be slightly more precise as to why the relative distance of interpolation is $O(m^2)$. 
In these average case arguments, given an arbitrary (worst-case) circuit $C$, one defines a (randomly chosen) family of circuits $C(\theta)$ ($\theta\in[0,1]$) which essentially\footnote{Technically speaking neither the techniques of \cite{bouland2019complexity} nor \cite{movassagh2020quantum} have all of these properties, but the authors found workarounds to this limitation.} has the following properties:
\begin{itemize}
    \item $C(1)=C$, that is, the worst-case circuit corresponds to $\theta=1$.
    \item $C(0)$ is random, and $C(\theta)$ is close to randomly distributed when $\theta$ is small. More specifically, $C(\theta)$ is $O(m\theta)$ close to random in total variation distance for small values of $\theta$.
    \item The probability that the circuit $C(\theta)$ outputs the all-zero outcome $0^n$, denoted as $\przero{C(\theta)}$, is a degree $d=O(m)$ polynomial in $\theta$. 
\end{itemize}
The basic idea of the average-case hardness arguments is to perform standard Lagrangian interpolation.
Consider circuits $C(\theta)$ for many small values of $\theta>0$ that are less than some maximum value $\theta_{\max}=\Delta$ -- say $d+1$ of them.
One then uses the algorithm which works in the average case to compute $\przero{C(\theta)}$ for these circuits.
If the algorithm can correctly compute $\przero{C(\theta)}$ within $\pm \delta$ imprecision on all of them, then these points uniquely define a polynomial $q(\theta)$ by Lagrangian interpolation which is close to the polynomial $\przero{C(\theta)}$ at these evaluated points.
By the Paturi error bound, the polynomial $q$ is $\delta\cdot 2^{O(m\Delta^{-1})}$ close to $\przero{C(\theta)}$ in the interval $[0,1]$\footnote{This is not entirely straightforward, because Paturi's bound on error blowup is for polynomials which are uniformly close in some interval, and these polynomials are only close at particular points, but this turns out not to be an issue as we will describe later. }. 
Therefore one simply uses $q(1)$ as an estimate for the worst case output probability $\przero{C(1)}=\przero{C}$, which is within additive error $\delta\cdot 2^{O(m\Delta^{-1})}$. So long as $\delta \ll 2^{-O(m\Delta^{-1})}$, this will be a good approximation to the desired value.

The reason that existing arguments set $\Delta=O(1/m^2)$  -- leading to a net imprecision tolerance of $\delta=2^{-O(m^3)}$ -- is that Lagrangian interpolation requires \emph{all} of the $O(m)$ evaluated points to be correct (that is, up to imprecision $\delta$). 
So each evaluation point needs to be correct with probability at least $1-O(1/m)$, so that all points are correct with high probability by a union bound\footnote{These points are highly correlated with one another, so unfortunately a union bound is the best tool we know how to use here.}.
However, if one has an algorithm which works on random instances, it will only work with probability $\geq 1-O(m\Delta)$ on each evaluated points, because these points are not actually distributed randomly, but merely $O(m\Delta)$ close to random, as given in the second property defined above.
Therefore one must set $\Delta = O(1/m^2)$ for this argument, to ensure that all of the evaluated points are correct with high probability.

\subsection{Going beyond the imprecision bottleneck}

One can easily see that the value of $\Delta$, as well as the degree of the base polynomial $d=O(m)$, are the controlling factors of the error in this argument.
A natural question is whether or not either of these can be improved.
Increasing the value of $\Delta$ in the interpolation argument would cause some of the evaluated points to be incorrect, which means standard Lagrangian interpolation cannot be used.
In the finite field case, it is well-known how to handle a constant fraction of incorrect points via the Berlekamp-Welch algorithm \cite{welch1986error,gemmell1992highly}.
It is natural to ask if this algorithm, or a variant of it, could be made to work over the Reals.
Critically, in order for such an algorithm to be relevant to our context, we would need the algorithm itself to \emph{not blow up errors too badly} -- or else the gains we make from choosing closer evaluation points could be swamped by the error blowup of the Real version of Berlekamp-Welch.

\subsubsection{First step: robust Berlekamp-Welch over the Reals}

Our first result is to fill in this gap, and show that a variant of Berlekamp-Welch argument still holds over $\mathbb{R}$ or $\mathbb{C}$, without too much blowup in error.
That is, suppose there exists a degree-$d$ polynomial $p(x)$ over the reals, as well as $\poly(d)$ evaluation points $\{x_i\}$ which are equally spaced over the interval $[0,1]$ and claimed values of the polynomial $y_i$ at those points, of which only $1-\eta$ fraction are correct within additive error $\pm\delta$, where $\eta<\frac{1}{4}$ is a constant.
Then our results show that any other polynomial $q(x)$ which also $\delta$-agrees with (possibly different) $1-\eta$ fraction of these evaluation points can diverge from $p(x)$ by at most $\delta\cdot 2^{O(d)}$ at all points in the interval.

Fortunately this error blowup happens to be less than that incurred by our interpolation argument, so is irrelevant to the asymptotics of our result.
This statement is relatively easy to prove if all of the points the polynomials agree on are ``bunched" together on one side of the interval, but extending this to the case where the correct/incorrect points are interspersed with one another requires a very careful argument. 
In particular our argument inductively shows that one can slowly ``bunch" the correct and incorrect points together one by one without loss of generality, without blowing up the error. From there one can then appeal to the bunched case which follows from more standard arguments combining well-known statements about the maximality of Chebyshev polynomial blowups with Markov's inequality for the maximum derivative of polynomials (see Section~\ref{section:robustBW} for details). 

The astute reader will notice that this argument is information theoretic in nature -- that is, if one were to be handed a polynomial which agrees with a large constant fraction of the points, then that polynomial can't diverge ``too badly'' from the true polynomial, but it provides no way of \emph{finding} such a polynomial (which is part of the result of the Berlekamp-Welch algorithm). 
Interestingly, in the context of proving average-case $\sharpp$ hardness, this is all that we need!
That is to say, we only need a form of Berlekamp-Welch error correction with an $\mathsf{NP}$ decoder.
This is because for the purposes of the complexity-theoretic arguments relevant to quantum supremacy, it suffices to prove that computing the output probabilities of random circuits lies outside of the polynomial hierarchy\footnote{This is because the whole point of these arguments is that the output probabilities of classical sampling algorithms can be approximated in the polynomial hierarchy - see Appendix \ref{appendix:truncatedtaylor} for details.}.
Therefore using an $\mathsf{NP}$ decoder in our argument simply means we are showing these probabilities are $\sharpp$ hard to compute under $\mathsf{BPP}^{\mathsf{NP}}$ reductions, which is still outside of $\mathsf{PH}$, unless $\mathsf{PH}$ collapses to a finite level.
Whether an algorithmic version of this result holds over the Reals remains unclear, and is left as an intriguing open problem.

\subsubsection{The next barrier at \texorpdfstring{$2^{-O(m^2)}$}{2\^{}\{-O(m\^{}2)\}} precision}

With this in hand, we have now shown that it is $\#\mathsf{P}$-hard (under $\mathsf{BPP}^\mathsf{NP}$ reductions) to compute the output probabilities of random circuits to additive imprecision\footnote{We note that Movassagh \cite{movassagh2020quantum} independently claimed a similar result, but was missing the robust Berlekamp-Welch theorem necessary for it to be rigorously proven (as also noted in \cite{oszmaniec2020fermion}). Our robust Berlekamp-Welch argument (derived independently from Movassagh's work) completes his proof of this claim, but with the same weakening to hardness under $\mathsf{BPP}^{\mathsf{NP}}$ reductions.} $2^{-O(m^2)}$. 
We did so by decreasing the relative distance $1/\Delta$ we needed to interpolate in our argument from $O(m^2)$ to $O(m)$ in relative terms, by allowing a constant fraction of the interpolation points to be incorrect. 
This proof proceeded by generalizing techniques developed for finite fields to the Reals, but with significant modifications to accommodate the imprecision of real values.

We now wish to push this result closer to achieving hardness to $2^{-O(m)}$ imprecision.
Once again the degree of our polynomial is $O(m)$, so unless we find a different polynomial to work with, the only way forward seems to be to increase the interpolation distance further -- namely to set $\Delta=O(1)$.
A natural question is if more sophisticated techniques developed for finite fields, such as list decoding, might help us to do this, by allowing us to handle a larger fraction of incorrect points, and hence handle a larger interpolation distance.
Unfortunately these sort of ideas will not help us here by more than constant factors.
The issue is that the distribution of circuits $C(\theta)$ are only $O(m\theta)$ close to random in total variation distance.
Therefore, we cannot guarantee we can correctly evaluate \emph{any} points beyond $\Delta=\Theta(1/m)$ with any substantial probability\footnote{For example, if one considers setting $\Delta=\Theta(1)$, the total variation distance between the distribution of $C(\theta)$ and random becomes $1-1/\mathrm{exp}$, and none of the evaluated points will necessarily be correct with non-negligible probability.}.
In fact even proving a variant of list decoding over the Reals would only improve our result by a factor of two in the constant prefactor in the exponent.
As a result, new ideas will be required to tackle this barrier which are not derived from analogues in the finite field case.

\subsubsection{Second step: improved robustness of interpolation over the Reals}\label{sec:rescaling}

Therefore, our second step is to show improved robustness for real-valued polynomial interpolation.
We will first show this using a novel error suppression technique based on variable rescaling which is specific to the Reals.
We will then show that this technique proves that in prior robustness results, the Paturi bound was ``overly pessimistic'' as it was being applied in a parameter regime in which it is not tight!
Therefore one can accurately interpolate degree $d$ polynomials to $2^{-O(d\log \Delta^{-1})}$ additive error, and moreover significantly simplify prior proofs of robustness. 
We provide a pedagogical explanation of this proof below.

To see this, we first note that a variable rescaling trick already shows improved robustness using the standard Paturi bound. 
The basic idea is to consider ``rescaling'' the variables in our problem by replacing our interpolation variable $\theta$ with the variable $x^k$ for some large integer $k\in\mathbb{N}$. 
We then perform our polynomial interpolation algorithm over the variable $x$ instead of the variable $\theta$.
That is, we still consider using the average-case algorithm to compute the output probabilities of many circuits $C(\theta)$ for many values of $\theta<\Delta=O(1/m)$.
But when running the robust Berlekamp-Welch argument, instead of asking the $\NP$ oracle for the degree $O(m)$ polynomial in $\theta$ which $\delta$-approximates the correct values on $1-\eta$ fraction of the evaluating points, and using that to extrapolate our value at $\theta=1$, we instead ask for the same, degree $O(mk)$ polynomial in the variable $x$, and use that polynomial in $x$ to extrapolate our value at $x=1$.

At first glance this sounds strange -- how could simple renaming variables improve the robustness of interpolation? And why would we be using higher degree polynomials, when they increase the error of extrapolation in general?
What is going on here is that we are performing a continuous transformation of the interval $[0,1]$ over which we are interpolating.
That is, consider the map $\Phi:[0,1]\rightarrow[0,1]$ defined as $\Phi(x)=x^k$.
This transformation is bijective, continuous, and preserves the endpoints of the interval. 
Critically, this transformation ``stretches out'' the part of the interval near $0$, and compresses it near $1$.
Therefore, for a fixed value of $\Delta=O(1/m)$, the corresponding value of $x_{\max}$ is substantially increased -- it becomes $1/m^{1/k}$. 
In other words, the interpolation distance is much shorter for this new variable, which decreases the blowup in error from interpolation. 
Of course we pay a price for performing this transformation, as the degree of our polynomial in $x$ is a factor of $k$ larger than the degree of the polynomial in $\theta$.
However this blowup due to an increased polynomial degree is less important than the savings from decreased extrapolation distance,
and in net this results in a substantially reduced error in our extrapolation.

To see this more concretely, recall that the extrapolation error of a degree $d$ polynomial interpolated over distance $\Delta$ is $\delta\cdot  2^{O(d\Delta^{-1})}$ \cite{Paturi1992on}.
Our robust Berlekamp-Welch based algorithm performs interpolation with $d=O(m)$ and $1/\Delta = O(m)$.
After applying our variable rescaling, the degree becomes $O(mk)$, while $\Delta$ becomes $1/m^{1/k}$.
Therefore after this transformation this extrapolation error becomes only $\delta\cdot 2^{k \cdot O(m) \cdot m^{1/k} }$.
By choosing $k=\log m$ this becomes merely a blowup of $2^{O(m\log m)}$. 
Crucially the error $\delta$ in the \emph{output} probability of the quantum circuit is not affected by this rescaling of the \emph{input} variables of the polynomial (namely $\theta$) -- and therefore the value of $\delta$ is the same in the rescaled polynomial.
This therefore allows us to show our main result:
\setcounter{theorem}{0}
 \begin{theorem}[Restated, simplified]\label{thm:simplifiedaveragecasehardness2}
 For any constant $\eta<1/4$, it is $\sharpp$-hard under a $\BPP^{\NP}$ reduction to compute the output probability of random quantum circuits $C$ up to additive error $2^{-O(m\log m)}$, with probability at least $1-\eta$ over the choice of $C$ with $m$ gates. 
 \end{theorem}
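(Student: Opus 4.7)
The plan is to reduce the well-known worst-case $\sharpp$-hardness of computing $\przero{C}$ for an arbitrary circuit $C$ (via postselected quantum computation, as in \cite{terhal2002adaptive}) to hardness on random circuits by polynomial interpolation along Movassagh's Cayley path. Fix an arbitrary $n$-qubit, $m$-gate circuit $C$ and construct a randomized family $C(\theta)$ satisfying the three properties listed in the excerpt: $C(1)=C$; the marginal distribution of $C(\theta)$ is within $O(m\theta)$ total variation distance of the intended random-circuit ensemble; and $p(\theta):=\przero{C(\theta)}$ is a polynomial in $\theta$ of degree $d=O(m)$.

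Next, I would pick an interpolation radius $\Delta=c/m$ with $c$ a small constant, and query the assumed average-case oracle $\mathcal{O}$ at $N=\poly(m)$ equally spaced points $\theta_1,\dots,\theta_N\in(0,\Delta]$, obtaining claimed values $y_i$ purporting to approximate $p(\theta_i)$ to within $\delta$. Since each $C(\theta_i)$ is $O(c)$-close in total variation distance to random, an averaging argument (after amplifying the success probability of $\mathcal{O}$ on truly random circuits to $1-\eta_0$ with $\eta_0$ much smaller than the target $\eta$) shows that for any fixed $\eta<1/4$ one can pick $c$ small enough that, with overwhelming probability, at least a $1-\eta$ fraction of the $y_i$ satisfy $|y_i-p(\theta_i)|\le \delta$.

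The heart of the argument is the rescaling described in Section~\ref{sec:rescaling}. Substitute $\theta=x^k$ with $k=\lceil\log m\rceil$, so that $\tilde p(x):=p(x^k)$ has degree $D=dk=O(m\log m)$, while the rescaled evaluation points $x_i=\theta_i^{1/k}$ now lie in $[0,\,m^{-1/k}]=[0,\Theta(1)]$. Using the robust Berlekamp--Welch theorem over $\mathbb{R}$ proven earlier, an $\NP$ oracle produces a degree-$D$ polynomial $\tilde q$ that $\delta$-agrees with the $y_i$ on at least a $1-\eta$ fraction of the $x_i$ (the true $\tilde p$ witnesses existence). The theorem bounds $|\tilde q(x)-\tilde p(x)|\le \delta\cdot 2^{O(D)}$ throughout the interpolation interval, and Paturi's extrapolation bound then yields
\[
|\tilde q(1)-\tilde p(1)| \;\le\; \delta\cdot 2^{O(D/x_{\max})} \;=\; \delta\cdot 2^{O(mk\cdot m^{1/k})} \;=\; \delta\cdot 2^{O(m\log m)}
\]
at $k=\log m$. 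Since $\tilde p(1)=p(1)=\przero{C}$, taking $\delta=2^{-Cm\log m}$ for a sufficiently large constant $C$ recovers $\przero{C}$ to any desired inverse polynomial accuracy, establishing $\sharpp$-hardness under a $\BPP^{\NP}$ reduction (the $\NP$ oracle is used only to decode the robust Berlekamp--Welch witness).

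The main technical obstacle is the precise trade-off controlled by $k$: enlarging $k$ pushes $x_{\max}$ from $1/m$ up to $\Theta(1)$, but multiplicatively inflates the polynomial degree to $mk$. The product $mk\cdot m^{1/k}$ governing the net blowup is minimized essentially at $k\asymp \log m$, and it is exactly at this sweet spot that the robustness improves from $2^{-O(m^2)}$ to the target $2^{-O(m\log m)}$. Two secondary checks are needed: that the $2^{O(D)}$ blowup inside the interval from robust Berlekamp--Welch is dominated by the Paturi extrapolation term (which holds because $D/x_{\max}=\Omega(D)$ once $x_{\max}=\Theta(1)$), and that the strong correlations among the $C(\theta_i)$ do not spoil the averaging argument --- the latter is handled by the union/expectation bound already used in prior average-case arguments and carries over unchanged.
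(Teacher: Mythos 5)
Your proposal is essentially correct and reaches the target $2^{-O(m\log m)}$ robustness, but you have chosen the paper's \emph{alternative} route (the variable-rescaling trick of Sections~\ref{sec:rescaling} and~\ref{sec:alternativeproof}) rather than its primary proof. The paper's actual proof of Theorem~\ref{thm:averagecasehardness} never performs the substitution $\theta=x^k$; instead, the extrapolation bound $\delta\,e^{d\log\Delta^{-1}+O(d)}$ is produced directly inside the robust Berlekamp--Welch theorem (Theorem~\ref{Thm:robustBW}), by first translating the interval $[0,\Delta]$ to $[-1,1]$, bounding the coefficients of the difference polynomial by $4^d\delta$ (Lemma~\ref{lemma:boundedcoefficient}), and then evaluating termwise at $2/\Delta-1$ (Lemma~\ref{lemma:longdistanceextrapolation}). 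The two routes agree asymptotically --- the paper itself re-derives the rescaling argument in Section~\ref{sec:alternativeproof} --- but the direct coefficient-bound proof is shorter and yields substantially smaller constants in the exponent (e.g., $16$ vs.\ roughly $147$ for RCS). You do correctly identify the sweet spot $k\asymp\log m$, and correctly observe that the in-interval $2^{O(D)}$ blowup from robust Berlekamp--Welch is dominated by the Paturi term once $x_{\max}=\Theta(1)$.

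Two technical slips need attention. First, along the Cayley path $\przero{C(\theta)}$ is a degree-$(O(m),O(m))$ \emph{rational} function, not a polynomial (Lemma~\ref{lemma:lowdegreerational}). You must multiply the oracle's reported value by the explicitly computable denominator $Q(\theta)$ to obtain noisy evaluations of the numerator polynomial $P(\theta)$, and you must condition on $\beta$-goodness of the eigenphases of the random seeds so that $Q(\theta)\le\exp(O(m\log m))$; this contributes a further multiplicative factor to the error budget (still within $2^{-O(m\log m)}$, but it has to be tracked, and the $\beta$-goodness failure probability must be folded into the union bound). Second, if the $\theta_i$ are equally spaced then the rescaled points $x_i=\theta_i^{1/k}$ are not, whereas Theorem~\ref{Thm:robustBW} requires equally spaced nodes. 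The fix is simply to choose the $x_i$ equally spaced in $[0,x_{\max}]$ and set $\theta_i=x_i^k$; the total-variation bound only cares that $\theta_i\le\Delta$, so the reordering is harmless, but as written your step does not type-check against the robust Berlekamp--Welch statement.

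Finally, the phrase ``amplifying the success probability of $\mathcal{O}$ on truly random circuits to $1-\eta_0$'' is not something you can do: the oracle's $\eta$-fraction failure set of circuits is fixed, not stochastic, so repetition does not help. What actually makes the argument go through is exactly the robustness of Theorem~\ref{Thm:robustBW} to a constant fraction $\eta'<\tfrac14$ of corrupted nodes, where $\eta'$ must absorb the oracle's $\eta$, the $O(m\Delta)$ total-variation slack, and the $\beta$-goodness failure probability; this is where the constraint $\eta<\tfrac14$ in the theorem statement comes from.
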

 It also shows an analogous hardness result in the case of BosonSampling, as we will describe in Section~\ref{section:avghardness} (see Corollary~\ref{cor:permanent}).

A deeper examination of the above results reveals that this argument proves the Paturi bound itself it not tight in the parameter regime in which it was being applied to quantum supremacy arguments.
The standard justification of the Paturi bound -- that the error of a degree $d$ polynomial interpolated over distance $1/\Delta$ blows up by $ 2^{O(d\Delta^{-1})}$ -- is that Chebyshev polynomials saturate the bound.
This is true in the regime of $\Delta=\Theta(1)$ -- essentially because Chebyshev polynomials have coefficients of order $2^{O(d)}$, and the individual monomials happen to ``cancel out'' in the unit interval, but this carefully orchestrated cancellation quickly falls apart as one leaves the unit interval, and the value of the polynomial approaches $2^{O(d)}$. This behavior is what the Paturi bound tightly captures.

However our variable rescaling trick illustrates that in the regime of asymptotically decreasing $\Delta$, this bound cannot possibly be tight, because nothing in our argument invalidated Chebyshev polynomials\footnote{That is, if $T_d$ is the $d$th Chebyshev polynomial, our errors could uniquely specify $T_d(x^k)$ as our error polynomial, as all we have done is replaced $x$ with $x^k$.}.
In other words, our variable rescaling argument proved that Chebyshev polynomials cannot blow up faster than $2^{O(d\log d)}$ when extrapolated to distance $1/\Delta=O(d)$. 
In retrospect this has a much simpler proof, and it follows because Chebyshev polynomials are after all only polynomials, and not exponential functions!
That is, while Chebyshev polynomials appear to grow exponentially as they ``leap out'' of the small interval, this exponential growth cannot be sustained over longer ranges -- because the function can only grow polynomially quickly in the large distance limit.
To put this more formally, if one has a degree-$d$ polynomial $p(x)=\sum_{i=0}^d c_i x^i$, then  the value of the polynomial at $\Delta^{-1}$ must obey $\left|p(\Delta^{-1})\right| \leq\sum_i |c_i| \Delta^{-d}$. 
So if $\Delta^{-1}=d^c$ for some constant $c>0$, and we know $c_i=2^{O(d)}$, we immediately get that Chebyshev polynomials -- and hence our interpolation error -- is bounded by $d2^{O(d)}\left(d^c\right)^{d} = 2^{O(d\log d)}$.
That is to say, over longer extrapolation distances, Chebyshev polynomials (and in fact any polynomials with bounded coefficients, which is guaranteed in our settings) only have a $2^{O(d\log d)}$ blowup.

Armed with this insight, we provide a significantly simpler proof of robustness of polynomial interpolation to $2^{-O(d\log d)}$ additive imprecision starting from first principles.
This proof is both simpler than our variable rescaling trick, and results in better constants in the exponent.
Moreover, it substantially simplifies prior robustness of interpolation results, as once it is combined with our robust Berlekamp-Welch arguments, it removes the need to invoke heavy-duty techniques of Rahmanov \cite{Rakhmanov2007bounds} in the technical portion of the result and instead only requires applying the much simpler Markov's inequality\footnote{More formally, the previous arguments based on Lagrangian interpolation used a union bound to argue all evaluated points were correct, then invoked Rakhmanov's argument to argue that the polynomial was uniformly bounded in that interval, then invoked Paturi's bound to show that the polynomial remained bounded when extrapolated. Our new proof first argues (using our robust Berlekamp-Welch argument) that the polynomial is uniformly bounded, then applies a simple argument to obtain a bound on the coefficients of the difference polynomial, then directly bounds the error at extrapolation, without the need to invoke Rakhmanov's results.}.
Instead our proof is self-contained and based on fundamental facts about real-valued polynomials.

In the end our final result for random circuit sampling with $m$ 2-qubit gates shows robustness to additive imprecision $\delta=e^{-16m\log m-O(m)}$. For an $n$-photon $m=O(n^c)$-mode BosonSampling experiment, our result shows robustness to additive imprecision $\delta=e^{-(c+4)n\log n-O(n)}$. In contrast the prior leading constants in the exponent of our results for RCS and BosonSampling were 147 and $(c+70)$, respectively, when using the variable rescaling error suppression technique (see Section~\ref{sec:alternativeproof} for a detailed discussion).

\subsection{Extending these arguments to the high-noise regime}

Our second result is to extend these average-case hardness arguments to the high noise regime.
That is, suppose we fix a noise model, such as i.i.d. depolarizing noise at rate $\gamma$.
We show that it is also $\sharpp$-hard (under $\mathsf{BPP}^\mathsf{NP}$ reductions) to estimate the output probabilities of noisy random circuits under this noise model to additive error $2^{-O(m\log m)}$ as well.
This holds so long as the noise rate $\gamma$ is below the error \emph{detection} (not correction!) threshold. 
More generally, our result holds for any stochastic noise model which admits error detection, and which is gate-independent.

There are two key ideas which enable this strengthening of our result.
The first is to note that our worst to average-case reduction techniques apply essentially immediately to the noisy case!
This is because for stochastic noise models like i.i.d. depolarizing noise, the output probability of the noisy circuit is a convex combination of the output probabilities of many different noiseless circuits.
To be more precise, recall that depolarizing noise with Pauli noise rate $\gamma$ after a two-qubit gate is equivalent to applying one of the $15$ different nontrivial Pauli operators (such as $I\otimes X$ or $Y\otimes Z$) with probability $\gamma/15$ after that gate, or else applying the identity operator with probability $1-\gamma$.
Therefore, applying depolarizing noise after every gate of an $m$-gate circuit is equivalent to taking the original circuit $C$, and inserting into the circuit one of the $16^m$ different ``noise patterns'' $\xi$ that could have occurred in the circuit, with appropriately weighted probabilities -- call these circuits $C_\xi$. 
The output probability of the noisy circuit is the appropriately weighted convex combination of the output probabilities of $C_\xi$. 
In other words, if $\przero{C,\mc N}$ is the probability that the noisy circuit outputs $0^n$ under a noise model $\mc N$, then we have that
\begin{equation}
    \przero{C,\mc N} = \E_{\xi \sim \mc T_{\mc N}} \left[\przero{C_\xi}\right]
\end{equation}
where $\mc T_{\mc N}$ is the distribution over error patterns arising from the particular noise model $\mc N$.
This is true in general for stochastic noise models.

The key point to note is that our average-case hardness arguments are invariant under taking convex combinations of circuits.
This is because the key fact we use in our reduction is that, for some appropriately defined family of circuits $C(\theta)$, $\przero{C(\theta)}$ is a low degree polynomial in $\theta$.
The set of low degree polynomials is convex.
Therefore, if one simply defines $C_\xi(\theta)$ to be the same circuits $C(\theta)$ with the noise pattern $\xi$ inserted into the circuit, then we have that
\begin{equation}
    \przero{C(\theta),\mc N} = \E_{\xi \sim \mc T_{\mc N}}\left[\przero{C_\xi(\theta)}\right]
\end{equation}
is also a low degree polynomial in $\theta$.
As a result the exact same worst-to-average case reduction arguments work in the noisy setting (see Section~\ref{sec:noisyworstcase} and \ref{sec:noisyaveragecasehardness} for more details).
Interestingly, for this to work there is no need to know the noise rate $\gamma$ or noise model to apply this argument -- because the low degree polynomial structure is preserved by \emph{any} noise model that is stochastic and gate-independent.

We have therefore established that average-case noisy output probabilities are as hard to estimate as worst-case ones.
The second step required is to establish that these noisy worst-case probabilities are indeed $\sharpp$-hard to compute\footnote{More formally, we prove $\cocequalp$ hardness, which is hard enough to suffice for quantum supremacy arguments as $\cocequalp$ is outside of the polynomial hierarchy unless $\PH$ collapses.}.
To do so we follow the observations first developed by Fujii \cite{fujii2016noise}.
In particular, Fujii observed that no polynomial time algorithm can exactly\footnote{That is, up to constant multiplicative error on each of the exponentially small output probabilities.} sample from the output distribution of noisy quantum circuits in the worst case unless $\PH$ collapses, so long as the noise rate $\gamma$ is below the error detection threshold.
The basic idea is that quantum error detection is highly efficient.
With polynomial overhead, one can create a quantum error detecting code which is exponentially sensitive to errors. 
That is, if no error is detected, then one is certain with probability $1-1/\mathrm{exp}$ that no error indeed occurred.
Using this, Fujii showed that one can post-select the ``no error detected" event so that the post-selected distribution is exponentially close to the ideal output distribution. 
By standard post-selection arguments along the lines of \cite{Aaronson2005quantum,Bremner2011classical} this suffices to show worst-case hardness.
A modification of this argument can be used to show that computing these output probabilities is essentially $\sharpp$-hard.
This establishes the worst-case hardness of noisy circuits, which allows us to show the following theorem:

\begin{theorem}[Restated, simplified]\label{thm:simplifiednoisyhardness2}
 The same average-case hardness result as in Theorem~\ref{thm:simplifiedaveragecasehardness2} applies to noisy random quantum circuits, assuming the noise is gate-independent and can be error-detected.
 \end{theorem}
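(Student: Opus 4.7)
The plan is to combine two observations that together let the same machinery used for Theorem~\ref{thm:simplifiedaveragecasehardness2} go through essentially unchanged in the noisy setting: a structural observation about how stochastic gate-independent noise preserves the low-degree polynomial structure exploited by the worst-to-average-case reduction, and Fujii-style error-detection to obtain worst-case hardness of the noisy output probabilities. I would state the argument as a two-step reduction: noisy average case $\le$ noisy worst case (via interpolation), and noisy worst case $\le$ noiseless worst case (via error-detecting encoding and post-selection).

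For the first step, fix a stochastic gate-independent noise channel $\mc N$ with noise rate below the error-detection threshold and recall that it can be written as a convex mixture over ``error patterns'' $\xi$ drawn from some distribution $\mc T_{\mc N}$, so that $\przero{C,\mc N}=\E_{\xi\sim\mc T_{\mc N}}\przero{C_\xi}$. Given an adversarial worst-case noisy circuit $C$, build the same interpolation family $C(\theta)$ used by Movassagh's Cayley path in the noiseless proof; inserting the \emph{same} noise pattern $\xi$ into each $C(\theta)$ produces a family $C_\xi(\theta)$ whose zero-outcome probability is a polynomial of degree $d=O(m)$ in $\theta$. Since low-degree polynomials form a linear (hence convex) space, the averaged quantity
\begin{equation}
\przero{C(\theta),\mc N}=\E_{\xi\sim\mc T_{\mc N}}\przero{C_\xi(\theta)}
\end{equation}
is again a degree-$O(m)$ polynomial in $\theta$. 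Moreover, the distribution of the noisy circuit $C(\theta)$ is $O(m\theta)$-close in total variation to the distribution of a random noisy circuit, because this closeness is a property of the underlying circuit distribution and is unaffected by applying the same fixed noise channel after each gate. Thus all the hypotheses of the robust Berlekamp--Welch step and of the $2^{-O(d\log d)}$ extrapolation bound from Section~\ref{sec:rescaling} apply verbatim, with exactly the same choice of $\Delta=\Theta(1/m)$ and $\delta=2^{-O(m\log m)}$. The result is that an average-case $\BPP^{\NP}$ oracle for noisy output probabilities to additive error $2^{-O(m\log m)}$ computes $\przero{C,\mc N}$ in the worst case.

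For the second step, I would invoke Fujii's observation that below the error-detection threshold the worst-case output probabilities of noisy circuits are essentially as hard to compute as the noiseless ones. Concretely, given an arbitrary noiseless circuit $C'$ on $n$ qubits, encode it into a distance-$2$ (or polynomially concatenated) error-detecting code with polynomial overhead, so that in the encoded noisy circuit the ``no error flagged'' event occurs with probability $1-2^{-\Omega(n)}$ and, conditioned on it, the output distribution is exponentially close to that of $C'$. Post-selecting on this event therefore lets one express $\przero{C'}$ as a ratio of two probabilities appearing in the noisy circuit, so computing noisy output probabilities to sufficient precision is $\cocequalp$-hard (hence $\sharpp$-hard under $\BPP^{\NP}$ reductions via Toda), which is what is needed for the polynomial-hierarchy-collapse style supremacy argument. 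Composing this worst-case hardness with the first step yields the theorem.

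The main technical obstacle is not the convexity calculation, which is essentially immediate, but rather bookkeeping around the precision in the error-detection step: the additive imprecision $2^{-O(m\log m)}$ in the average-case estimator must be pushed through both the post-selection normalization (which divides by the noisy acceptance probability, exponentially close to $1$) and the gap between the post-selected distribution and the ideal distribution, while still being small enough to distinguish output probabilities of a $\cocequalp$-hard computation on a polynomially-blown-up encoded circuit. The code's overhead and concatenation depth must be chosen so that the resulting encoded instance has at most $O(m)$ gates, so that the target precision $2^{-O(m\log m)}$ coming out of the average-case step still suffices at the worst-case end of the reduction; this is feasible precisely because we are below the error-detection threshold and the encoding has polynomial (in fact constant-rate) overhead.
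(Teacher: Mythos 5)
Your proposal is correct and follows essentially the same two-step route as the paper: the convexity of the noise channel over error trajectories preserves the interpolation structure (Lemma~\ref{lemma:noisylowdegreerational}), and Fujii-style error detection with post-selection gives $\cocequalp$-hardness of the worst-case noisy probabilities (Theorem~\ref{thm:mainworstcasehardness}). One small imprecision: with the Cayley path each per-trajectory probability $\przero{C_\xi(\theta)}$ is a degree-$(O(m),O(m))$ \emph{rational} function rather than a polynomial, so the convex-combination step requires the additional observation (made explicit in the paper) that the denominator $Q(\theta)$ depends only on the eigenvalues of the perturbing unitaries and not on the noise trajectory $\xi$, so that averaging the shared-denominator numerators keeps the degree at $O(m)$.
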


Note that there is one restriction which applies to this theorem which is not present in the statement of Theorem~\ref{thm:simplifiedaveragecasehardness}.
Namely, this theorem assumes that the family of random circuits considered is deep enough to encode a quantum error detecting code which works against the chosen noise model and noise rate.
This is because the worst-case circuits used in the argument of Theorem~\ref{thm:simplifiednoisyhardness} are error detecting circuits, while constant depth circuits suffice for Theorem~\ref{thm:simplifiedaveragecasehardness}.
Therefore our result essentially says that the output probability of noisy random circuits are as hard to compute as the \emph{largest error detected circuit one could have fit} into that particular random circuit architecture.

\section{Discussion}
\label{sec:discussion}

In this work we have sought to resolve a fundamental disconnect between the theoretical basis of quantum supremacy -- based on very \emph{low noise} random quantum circuits \cite{terhal2002adaptive,Bremner2011classical,Aaronson2011linear,Bremner2016averagecase,bouland2019complexity,movassagh2020quantum} -- and the highly noisy experimental implementations performed so far \cite{arute2019quantum,Zhong2020quantum}.
Namely, the initial theoretical attempts at proving ``quantum supremacy'' for sampling from random quantum circuits sought to do so assuming one can maintain a large constant fidelity as the system size grows -- which would require error correction to achieve scalably.
Even establishing this is classically intractable remains a major theoretical challenge, due to the sensitivity of the average-case hardness of computing results to small amounts of imprecision -- a limitation that we made substantial progress on in our first result (Theorem \ref{thm:simplifiedaveragecasehardness}) but did not resolve.
However the initial experimental attempts at demonstrating supremacy showed a fidelity that \emph{decreases} exponentially with system size, because quantum error correction is not being performed. 
As these sampling experiments become exponentially close to uniform asymptotically \cite{aharonov1996limitations,gao2018efficient}, one cannot argue that these experiments are hard to reproduce in an approximate sense.

Consequently, if there is any hope of finding a hard quantum signal present in these near-term experiments, it must be present in the exponentially diminishing deviations from the uniform distribution.
A natural question is whether or not these small deviations from uniformity are hard to reproduce classically.
Our second main result (Theorem \ref{thm:simplifiednoisyhardness}) provides evidence in this direction, by extending the prior hardness of exact computation results for random quantum circuits to the noisy setting.  That is, we prove it is hard to exactly \emph{compute} the output probability of a noisy random quantum circuit, to finer precision than the rate of convergence to uniform.  This can be interpreted as proving the existence of a quantum signal that survives the noise, thereby formalizing an intuition that underlies the optimism that the recent Google experiment is solving a hard problem.

A natural hope is that one might be able to use these results to show that one cannot \emph{exactly} reproduce the results of the Google experiment in a \emph{sampling} sense.
That is, can one argue that no classical algorithm could sample from the output distribution of noisy random quantum circuits nearly exactly?
Unfortunately we know we cannot do this using current techniques \cite{Aaronson2011linear}.
The reason is that these techniques reduce hardness of exact (or approximate) sampling to the hardness of \emph{approximately} computing output probabilities using Stockmeyer's algorithm \cite{Stockmeyer1985on}, and the amount of imprecision required to be tolerated ($O(2^{-n})$) is intrinsic in the argument, even in the exact sampling case\footnote{Technically speaking, the imprecision incurred by Stockmeyer's algorithm is multiplicative, and so depends on the output probability being estimated, which for random circuits is $O(2^{-n})$ with very high probability due to known ``anti-concentration" results \cite{Harrow2009random,brandao2016local,harrow2018approximate,barak2020spoofing,dalzell2020random}.}. 
As previously discussed, the output probabilities of noisy random circuits cannot be hard to compute to this degree of imprecision, due to the fact the noisy distributions are exponentially close to uniform.
While \cite{aaronson2017complexity,aaronson2019classical} took steps to try to address the simulation of such noisy experiments via a different angle, their approach assumes the fidelity remains a small constant as the system size scales, and requires making non-standard assumptions.
Therefore, we believe fundamentally new ideas will be required -- beyond the Stockmeyer approach -- if one wishes to address the hardness of exactly simulating highly noisy circuits with decreasing fidelities.

Our main results also offer an interesting new message about the low-noise setting: any potential means for proving the hardness of sampling in the \emph{low noise} setting cannot be extended to hold in the \emph{high noise} setting.
Our barrier holds in the setting of random circuits with superconstant depth\footnote{More formally, the conjectured rate of convergence of noisy circuits to the uniform distribution is $O(e^{-dn\gamma})$ where $n$ is the number of qubits and $d$ is the depth of the circuit, and $\gamma$ is the depolarizing noise rate per qubit. The robustness required for proving hardness of sampling in the low noise setting is $O(2^{-n})$. Therefore, for any circuit with $d>1/\gamma_0$, where $\gamma_0$ is the error detection threshold, any noise-agnostic proof technique cannot prove sufficient robustness to prove hardness of sampling in the low noise setting. Note the depth $d$ must be sufficiently large to perform error detection as well.}.
This offers a significant barrier for improving the status quo of quantum supremacy results: the only chance of resolving the hardness of sampling problem (using the standard and only approach we have -- via a reduction to approximate computing using Stockmeyer's theorem) is by inventing new techniques that are explicitly \emph{not resilient} to uncorrected noise!  This immediately rules out any ``convex method'' -- i.e. any method where the underlying error correcting code structure is invariant under convex combinations -- including polynomial interpolation, which we prove in Theorem \ref{thm:simplifiednoisyhardness} is noise resilient.

Our results are complementary, but distinct from past barrier results, starting with that of Aaronson and Arkhipov \cite{Aaronson2011linear}, who first realized that polynomial interpolation cannot close this gap if the ensemble of circuits anticoncentrates (See Appendix~\ref{sec:barrier} for more details).
While there are subtle technical differences between our barriers\footnote{In particular our barrier does not require anticoncentration, but instead requires a distribution over random circuits which converges to uniformity under depolarizing noise. Both properties are related to the scrambling nature of random circuits, but it is not \emph{a priori} clear they are identical.}, they arise for similar reasons -- namely the set of low degree polynomials is convex.
Later, Aaronson and Chen \cite{aaronson2017complexity} proved an oracle separation relative to which approximate quantum sampling is classically easy, but the $\PH$ remains infinite.  While this rules out any relativizing technique for resolving the hardness of approximate sampling, their result is very sensitive to the distinction between approximate and exact sampling -- indeed, worst-case hardness of exact sampling results do exist and relativize \cite{Bremner2011classical}!  Therefore it is unclear how to extend their barrier result to our setting in which we are interested in the hardness of exact sampling from the output distribution of high-noise and/or ideal random quantum circuits.

Most recently, a barrier result for hardness of sampling was shown by Napp \textit{et al.} \cite{napp2020efficient} who give an approximation algorithm for computing the output probabilities of very shallow (i.e., fixed constant) depth 2D random quantum circuits.  This is interesting, because one can use the methods of Bouland \textit{et al.} and Movassagh to prove the average case hardness of \emph{exactly} computing the output probabilities of the same ensemble of circuits.  The implication is that any technique that establishes hardness of computing output probabilities of random quantum circuits to within additive error $O(2^{-n})$ must not work for very shallow depth quantum circuits.  This is a limitation, since current techniques that use polynomial interpolation are not sensitive to depth.  This result is nicely complementary to our barrier, and in fact we can use it to conclude that any technique that extends the hardness of computation results to hardness of sampling results must be sensitive to both depth \emph{and} noise.

Interestingly the noise barrier identified in this paper, and the depth barrier identified by Napp \textit{et al.} \cite{napp2020efficient}, rely on specific properties of random quantum circuits. We leave it as an open problem if analogous barriers apply to the BosonSampling case\footnote{We note this might be possible using the techniques of e.g. \cite{garcia2019simulating,qi2020regimes,oh2021classical} for simulating noisy BosonSampling experiments, though there are many details to be worked out.}. Of course it is already known that polynomial interpolation will not prove hardness of sampling for BosonSampling \cite{Aaronson2011linear} -- in the case of $m=n^2$ modes this barrier shows polynomial interpolation cannot prove robust hardness beyond $O(e^{-3n\log n})$ additive error. This leaves open the possibility of improving the exponent in our robustness result for BosonSampling from $6$ to $3$ using current techniques, but not beyond. 
However, the fact that our results for BosonSampling are closer to achieving the desired robustness -- due to structural differences in the algebraic properties of BosonSampling vs. RCS --  indicates that BosonSampling could have certain advantages for proving ``quantum supremacy'' in the low-noise regime going forward.

\section{Average-case hardness}\label{section:avghardness}
In this section we develop the average-case hardness of computing the output probability of random quantum circuits. We first establish improved results in the noiseless setting, and then show that our techniques can be generalized to the noisy case. An important ingredient of our proof is a robust polynomial interpolation technique, which is independently developed in Section~\ref{section:robustBW}.

\subsection{Worst-to-average-case reduction}
\label{subsec:avgcasehardnessdetails}

Our first result is to establish a worst-to-average-case reduction for computing the output probability of random quantum circuits in the noiseless setting. That is, if there exists an algorithm that can approximate the output probability of \emph{most} random quantum circuits over an architecture, then there exists an algorithm that can approximate the output probability of \emph{every} quantum circuit over the same architecture. Our average-case hardness result tolerates a constant failure probability (i.e. it is hard to compute the output probability for a large constant fraction of random circuits), and is robust up to a small additive error $2^{-O(m\log m)}$ for random circuits with $m$ gates. Our result is therefore an improvement over previous results~\cite{bouland2019complexity,movassagh2020quantum} in both aspects.

The main idea of the worst-to-average-case reduction works as follows. Consider a circuit architecture over which computing the output probability up to an exponentially small additive error is $\sharpp$-hard in the worst case. Then, to prove average-case hardness, it suffices to construct a reduction that belongs to some finite level of the Polynomial Hierarchy. That is, given an arbitrary circuit $C_0$, our goal is to compute $\przero{C_0}$ using a procedure in the Polynomial Hierarchy, with access to an oracle $\mc O$ that can compute $\przero{C}$ for a large fraction of random circuits $C$. For a circuit architecture $\mc A$, let $\mc H_{\mc A}$ be the distribution over circuits such that each gate is independently drawn from the Haar measure. Let $\{G_i\}_{i=1,\dots,m}$ be the quantum gates in $C_0$. We create a new circuit $C_1$ by applying a ``one-time pad" to $C_0$, where each gate $G_i$ is replaced by
\begin{equation}
    G_i\to H_i G_i,
\end{equation}
where $\{H_i\}$ is independently drawn from the Haar measure over the unitary group and has the same dimension as $G_i$. By the invariance of Haar measure, $C_1$ is distributed the same as $\mc H_{\mc A}$. Therefore $\{H_i\}$ can be understood as the ``random seed" used for the one-time pad.

By definition, $\mc O$ can compute an accurate approximation of $\przero{C_1}$ with high probability. However, this number alone does not contain any information on our desired quantity $\przero{C_0}$. The main insight that allows us to correlate average-case solutions to the worst-case quantity, which was originally developed to show the average-case hardness for permanents~\cite{Lipton91new,Gemmell1991self}, is to create many random instances $\{C_i\}$ by the following procedure: first sample a ``random seed" $\{H_i\}$, then apply small and different perturbations to the random seed, and then apply each perturbed random seed to $C_0$. As a result, while each random instance is marginally distributed approximately according to $\mc H_A$, they are close to each other and are correlated in a way that reveals the worst-case quantity $\przero{C_0}$.

More specifically, suppose there is a way of perturbing the circuit $C_1$ into a new circuit $C(\theta)$ ($\theta\in[0,1]$), such that $C(\theta)\approx C_1$ when $\theta\ll 1$, and $C(1)=C_0$. Moreover, suppose $\przero{C(\theta)}$ for different values of $\theta$ are correlated in a way such that $\przero{C(1)}$ can be inferred from $\przero{C(\theta)}$ for small values of $\theta$. Then, to compute $\przero{C_0}$, it suffices to query $\mc O$ with $C(\theta_i)$ for many small $\theta_i$s.

One way to develop such a procedure is by perturbing the random seeds $\{H_i\}$ used in the construction of $C_1$, which we define as follows.

\begin{definition}[$\theta$-perturbed random circuit distribution]\label{def:thetaperturbeddistribution}
Suppose there is a transform on unitary matrices $H\mapsto H(\theta)$ parameterized by a real number $\theta\in[0,1]$, that satisfies $H(0)=H$ is unchanged, and $H(1)=I$ is the identity matrix. For any circuit $C_0$ with gates $\{G_i\}_{i=1,\dots,m}$ and a random seed $\{H_i\}_{i=1,\dots,m}$, the circuit $C(\theta)$ is defined by replacing $G_i$ with 
\begin{equation}
    G_i\mapsto H_i(\theta) G_i.
\end{equation}
Denote the distribution induced by $C(\theta)$ as $\mc H_{\mc A,\theta}$. Note that $\mc H_{\mc A,0}=\mc H_{\mc A}$ and $\mc H_{\mc A,1}=\mathbf{1}_{C_0}$.
\end{definition}

As discussed above, in order for such a perturbation to be useful for the worst-to-average-case reduction, we require that the following (informal) properties are satisfied.

\vspace{2mm}\noindent\textbf{Properties of the perturbed circuit distribution (informal).}
\begin{enumerate}
    \item When $\theta$ is small, $\mc H_{\mc A,\theta}$ is close to $\mc H_{\mc A}$ in total variation distance. Therefore, when $\mc O$ is given an input $C(\theta)\sim \mc H_{\mc A,\theta}$ from the perturbed distribution, it is guaranteed to return a correct approximation of $\przero{C(\theta)}$ with high success probability.
    \item An approximation of $\przero{C(1)}$ can be inferred from approximations of $\przero{C(\theta)}$ for small values of $\theta$ with a procedure in the Polynomial Hierarchy.
\end{enumerate}

Finding such a good perturbation method is non-trivial. Two proposals were developed in previous work in the context of noiseless circuits. In~\cite{bouland2019complexity}, they considered the truncated Taylor series of $H e^{-\theta\log H}$, which is given by
\begin{equation}\label{eq:taylortruncation}
    H(\theta)=H\cdot\left(\sum_{k=0}^K\frac{(-\theta\log H)^k}{k!}\right),
\end{equation}
and showed that $\przero{C(\theta)}$ is a degree $O(mK)$ polynomial in $\theta$. Property 2 is then satisfied by using polynomial interpolation techniques. In this approach, $H(\theta)$ as defined by Eq.~\eqref{eq:taylortruncation} is not unitary, and the resulting average-case hardness is for a circuit family that has a small deviation from $\mc H_{\mc A}$ given by the truncation error of the Taylor series. While this truncation error does not affect the applicability of this approach to address hardness of approximate sampling, it becomes evident when applying to noisy circuits, as the requirement for accuracy is much higher (see appendix~\ref{appendix:truncatedtaylor} for a detailed discussion).

Later, Movassagh~\cite{movassagh2020quantum} developed a new perturbation method, called Cayley transform, that is able to stay within the unitary path.

\begin{definition}[Cayley transform~\cite{movassagh2020quantum}]\label{def:cayleytransform}
The Cayley transform of a unitary matrix $H$ parameterized by $\alpha\in[0,1]$ is a unitary matrix defined as
\begin{equation}
    H(\alpha):=\frac{\alpha I+(2-\alpha)H}{(2-\alpha)I+\alpha H},
\end{equation}
where $A/B$ means $A\cdot B^{-1}$. Consider the diagonalization of a unitary matrix $H=\sum_j e^{i\varphi_j}\ketbra{\psi_j}$, the following is an equivalent form of the Cayley transform,
\begin{equation}
    H(\alpha)=\sum_j\frac{1+i(1-\alpha)\tan\frac{\varphi_j}{2}}{1-i(1-\alpha)\tan\frac{\varphi_j}{2}}\ketbra{\psi_j}.
\end{equation}
Note that $H(0)=H$ and $H(1)=I$.
\end{definition}

Note that we are using slightly different notations from~\cite{movassagh2020quantum}. A self-consistent presentation of the properties of the Cayley transform is given in Appendix~\ref{appendix:cayleytransform}. 

While not having an evident polynomial structure as in the construction of~\cite{bouland2019complexity}, in~\cite{movassagh2020quantum} they applied Cayley transform to the gates $\{H_i\}$ parameterized by $\alpha=\theta$, and showed that the resulting $\przero{C(\theta)}$ is a degree $(O(m),O(m))$ rational function in $\theta$. A similar polynomial interpolation technique can be applied to infer $\przero{C(1)}$ from $\przero{C(\theta)}$ with small values of $\theta$.

In particular \cite{movassagh2020quantum}, building on~\cite{bouland2019complexity}, obtained the following average-case hardness results (also see Appendix~\ref{appendix:truncatedtaylor}).

\begin{theorem}[\cite{bouland2019complexity,movassagh2020quantum}]
Let $\mc A$ be a circuit architecture so that computing $\przero{C}$ to within additive error $2^{-O(m)}$ is $\sharpp$-hard in the worst case. The following results hold:
\begin{enumerate}
    \item It is $\sharpp$-hard to compute $\przero{C}$ exactly on input $C\sim \mc H_{\mc A}$, with success probability at least $1-\eta$ over the choice of $C$, for any constant $\eta<\frac{1}{4}$.
    \item It is $\sharpp$-hard to compute $\przero{C}$ up to an additive imprecision $2^{-O(m^3)}$ on input $C\sim \mc H_{\mc A}$, with success probability at least $1-O(1)/m$ over the choice of $C$.
\end{enumerate}
\end{theorem}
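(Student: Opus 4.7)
The plan is to instantiate the classical Lipton-style worst-to-average-case reduction template with Movassagh's Cayley-transform family, applied after a Haar one-time pad on the gates. Fix an arbitrary worst-case circuit $C_0$ with gates $\{G_i\}_{i=1}^m$ whose value $\przero{C_0}$ is $\sharpp$-hard to compute in the worst case. Sample $\{H_i\}$ i.i.d.\ from Haar and set $H_i(\theta)$ to be the Cayley transform of Definition~\ref{def:cayleytransform} with $\alpha=\theta$, so $H_i(0)=H_i$ and $H_i(1)=I$. Defining $C(\theta)$ as in Definition~\ref{def:thetaperturbeddistribution} gives $C(0)\sim\mc H_{\mc A}$, $C(1)=C_0$, and, as shown in~\cite{movassagh2020quantum}, the crucial algebraic fact that $\przero{C(\theta)}$ is a rational function in $\theta$ of type $(d,d)$ with $d=O(m)$. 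The only other quantitative property required is that the total variation distance between $\mc H_{\mc A,\theta}$ and $\mc H_{\mc A}$ is $O(m\theta)$, which follows from the fact that the Cayley transform deforms each gate by $O(\theta)$ in operator norm combined with a hybrid argument over the $m$ gates.

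For part 1, I would pick $N=4d+1=O(m)$ equally spaced evaluation points $\theta_1,\ldots,\theta_N$ in $[0,\Delta]$ with $\Delta=c/m$ for a small constant $c$ chosen so that $\eta + O(m\Delta) < 1/4$. Each perturbed circuit is within TV distance $O(m\Delta)$ of Haar, so the average-case oracle returns the exact value of $\przero{C(\theta_i)}$ with probability at least $1-\eta-O(m\Delta)$ on each query. A Markov-style bound then shows that with constant probability (amplifiable by repetition) at most $\eta' N$ of the answers are wrong, with $\eta'<1/4$. Now apply the standard rational-function variant of Berlekamp--Welch: introduce an error-locator $E(\theta)$ and seek polynomials $\tilde A,\tilde B$ of degrees $d+e$ satisfying the linear system $\tilde A(\theta_i)=y_i\tilde B(\theta_i)$; since all arithmetic is exact and $N\geq 2(d+e)+1$, the solution is unique up to scaling and recovers the full rational function $\przero{C(\theta)}$, whose evaluation at $\theta=1$ is exactly $\przero{C_0}$. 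Together with the worst-case $\sharpp$-hardness assumption, this gives part~1 under a $\BPP$ reduction.

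For part 2, I would instead take $N=d+1=O(m)$ equally spaced points in $[0,\Delta]$ with $\Delta=c'/m^{2}$, small enough that each point is $O(1/m)$-close to $\mc H_{\mc A}$ in TV. Since the oracle succeeds on each individual query with probability $\geq 1-O(1)/m$, a union bound over the $N=O(m)$ queries gives $\delta$-accurate answers on \emph{every} query with constant probability. Clearing the denominator of $\przero{C(\theta)}$ turns it into a polynomial relation of degree $O(m)$; ordinary Lagrangian interpolation on the $N$ points produces a degree-$d$ polynomial $q$ that agrees with the true polynomial $p$ within $\delta$ on the grid. A standard argument upgrades pointwise closeness on an equispaced grid to uniform closeness on $[0,\Delta]$ (combining a Rakhmanov-type bound for bounded polynomials with Markov's inequality for derivatives), and then Paturi's extrapolation bound gives $|p(1)-q(1)|\leq\delta\cdot 2^{O(d/\Delta)}=\delta\cdot 2^{O(m^3)}$. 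Choosing $\delta=2^{-Cm^{3}}$ for a sufficiently large constant $C$ yields an estimate of $\przero{C_0}$ within $2^{-O(m)}$, which is enough to invoke the worst-case $\sharpp$-hardness hypothesis.

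The main obstacle is precisely the extrapolation blowup in part 2. The rational-function degree $d=O(m)$ and the extrapolation ratio $1/\Delta$ enter the Paturi exponent multiplicatively, while the $O(m\theta)$ drift of $\mc H_{\mc A,\theta}$ forces $1/\Delta\geq\Omega(m^{2})$ because \emph{every} evaluation point must be correct under the union bound -- which in turn requires TV distance $O(1/m)$ per point to beat the $O(1/m)$ oracle failure rate over $O(m)$ queries. This is the source of the $2^{-O(m^3)}$ bottleneck in~\cite{movassagh2020quantum}. The later sections of this paper break it in two stages: first by tolerating a constant fraction of wrong evaluations via a robust real-valued Berlekamp--Welch decoded by an $\NP$ oracle (allowing $\Delta=\Theta(1/m)$ and hence $2^{-O(m^2)}$ robustness), then by a sharper interpolation bound -- via variable rescaling or a direct coefficient argument -- that exposes the Paturi bound as loose in this regime and drives the final robustness down to $2^{-O(m\log m)}$ of Theorem~\ref{thm:simplifiedaveragecasehardness}.
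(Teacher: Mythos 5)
Your reconstruction matches the argument of the cited works exactly as this paper presents it: a Haar one-time pad composed with the Cayley path, exact real-valued Berlekamp--Welch (tolerating the $<1/4$ fraction of wrong but exactly-valued points guaranteed by Markov's inequality) for part 1, and Lagrange interpolation with a union bound over all evaluation points, a Rakhmanov/Markov-type uniform bound, and Paturi extrapolation over $\Delta=\Theta(1/m^{2})$ for part 2 --- which is precisely where the $2^{-O(m^{3})}$ arises, and you correctly identify how Sections~\ref{section:avghardness} and~\ref{section:robustBW} later remove it. One minor caveat: the $O(m\theta)$ total-variation bound for $\mc H_{\mc A,\theta}$ does not follow from operator-norm closeness of the individual perturbed gates, but from the fact that the Cayley map perturbs the Haar eigenvalue density by $O(\theta)$ (Lemma~\ref{lemma:totalvariationdistance}); the conclusion you use is nevertheless correct.
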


We improve these results by proving an average-case hardness that tolerates a larger additive imprecision $2^{-O(m\log m)}$, while only requiring a constant success probability over the choice of random circuits. These improvements follows from two new techniques: the first is a robust polynomial interpolation argument which tolerates a constant failure probability, and the second is an improved error bound for long distance polynomial extrapolation. Both results are developed in Section~\ref{section:robustBW} and are used as black boxes here.

To establish our reduction, we first consider the $\theta$-perturbed random circuit distribution instantiated by the Cayley transform.

\begin{definition}[$\theta$-Cayley perturbed random circuit distribution]\label{def:thetacayleyperturbation}
For any circuit $C_0$ with gates $\{G_i\}_{i=1,\dots,m}$ and a random seed $\{H_i\}_{i=1,\dots,m}$, the circuit $C(\theta)$ is defined by replacing $G_i$ with 
\begin{equation}
    G_i\mapsto H_i(\theta) G_i,
\end{equation}
where $H_i(\theta)$ denotes the Cayley transform of $H_i$ parameterized by $\theta$. Denote the distribution induced by $C(\theta)$ as $\mc H_{\mc A,\theta}$. Note that $\mc H_{\mc A,0}=\mc H_{\mc A}$ and $\mc H_{\mc A,1}=\mathbf{1}_{C_0}$.
\end{definition}
As previously shown by~\cite{movassagh2020quantum}, the output probability $\przero{C(\theta)}$ is a low-degree rational function in $\theta$.

\begin{lemma}[\cite{movassagh2020quantum}]\label{lemma:lowdegreerational}
For any circuit $C_0$, let $C(\theta)$ be a circuit from the $\theta$-Cayley perturbed random circuit distribution as in Definition~\ref{def:thetacayleyperturbation}. Then $\przero{C(\theta)}$ is a degree $(O(m),O(m))$ rational function in $\theta$.
\end{lemma}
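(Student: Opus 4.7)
The plan is to track the rational-function structure of $\przero{C(\theta)}$ gate by gate. Using the spectral form in Definition~\ref{def:cayleytransform}, each eigenvalue $e^{i\varphi_j}$ of a $d$-dimensional unitary $H$ is sent by the Cayley transform to the Möbius function
$\mu_j(\theta) = \frac{1+i(1-\theta)\tan(\varphi_j/2)}{1-i(1-\theta)\tan(\varphi_j/2)}$,
a rational function of degree $(1,1)$ in $\theta$. Summing the $d$ rank-one contributions $\mu_j(\theta)\ketbra{\psi_j}$ and clearing the common denominator $\prod_j (1-i(1-\theta)\tan(\varphi_j/2))$ shows that every entry of $H(\theta)$ is a rational function of degree $(d,d)$ in $\theta$. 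Equivalently, from $H(\theta) = (\theta I + (2-\theta)H)((2-\theta)I + \theta H)^{-1}$ and Cramer's rule, the common scalar denominator is $\det((2-\theta)I + \theta H)$, a polynomial of degree $d$, and the numerator is a matrix of polynomials of degree $d$. For $2$-qubit gates $d=4$, so each $H_i(\theta)$ contributes degree $O(1)$.

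Next I would extend from a single gate to the full circuit. The Cayley transform commutes with padding by identity, i.e. $(H_i \otimes I)(\theta) = H_i(\theta) \otimes I$, as one checks directly from the matrix definition. Thus the full evolution $U(\theta) = \prod_{i=1}^m (H_i(\theta) \otimes I_{\bar i})\, G_i$ inherits the common scalar denominator $\prod_{i=1}^m \det((2-\theta)I + \theta H_i)$ of total degree $O(m)$. Multiplying the $m$ matrix numerators together (each of entrywise degree $O(1)$, with the constant matrices $G_i$ interleaved) yields a numerator matrix whose entries are polynomials in $\theta$ of degree $O(m)$, since each matrix product adds at most a constant to the entrywise degree. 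Extracting the $(0^n,0^n)$ matrix element, the amplitude $a(\theta) := \bra{0^n} U(\theta) \ket{0^n}$ is therefore a rational function of degree $(O(m), O(m))$ in $\theta$.

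Finally, for real $\theta \in [0,1]$ one has $\przero{C(\theta)} = |a(\theta)|^2 = a(\theta)\,\overline{a(\theta)}$. The function $\overline{a(\theta)}$ for real $\theta$ is just the rational function obtained by conjugating every coefficient in $a(\theta)$, hence has the same numerator and denominator degrees. The product is then a rational function of degree $(O(m), O(m))$, which is the claim. The argument involves no real obstacle beyond careful bookkeeping of degree under (i) taking a common denominator at the single-gate level, and (ii) propagating through $O(m)$ matrix multiplications and the final conjugate-square; the only conceptual ingredient is the spectral observation that the Cayley map acts on each eigenvalue as a Möbius transformation in $\theta$.
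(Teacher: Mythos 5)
Your proposal is correct and takes essentially the same route as the paper: both identify the gate-wise common scalar denominator $\prod_{j,l}\bigl(1-i(1-\theta)\tan\frac{\varphi_{jl}}{2}\bigr)$ (equivalently $\prod_j\det((2-\theta)I+\theta H_j)$ up to a $\theta$-independent constant) coming from the Möbius action of the Cayley map on each eigenvalue, and then propagate numerator degree $O(m)$ through the circuit before taking $|a(\theta)|^2$. The paper organizes the degree bookkeeping via the Feynman path sum rather than via entrywise degrees of the matrix product, but this is the same computation.
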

\begin{proof}
Here we give a self-consistent proof as the details are useful for our later developments.

First, notice that the output amplitude can be written as the Feynman path integral,
\begin{equation}\label{eq:feynmanpathsum}
    \expval{C(\theta)}{0^n}=\sum_{y_1,\dots,y_{m-1}\in\{0,1\}^n}\prod_{j=1}^m \mel{y_j}{\left(H_j(\theta)G_j\otimes I_{else}\right)}{y_{j-1}},
\end{equation}
where we have $y_0=y_m=0^n$, $H_j(\theta)G_j$ is a local gate and $I_{else}$ denotes identity on all the other qubits. Let the diagonalization of $H_j$ be $H_j=\sum_{l}e^{i\varphi_{jl}}\ketbra{\psi_{jl}}$. Then, consider an individual term in the above sum,
\begin{nalign}\label{eq:feynmanpathrational}
    \prod_{j=1}^m \mel{y_j}{H_j(\theta)G_j}{y_{j-1}}&=\prod_{j=1}^m \bra{y_j}\sum_l\frac{\left(1+i(1-\theta)\tan\frac{\varphi_{jl}}{2}\right)\ketbra{\psi_{jl}}}{1-i(1-\theta)\tan\frac{\varphi_{jl}}{2}}G_j \ket{y_{j-1}}\\
    &=\prod_{j=1}^m \bra{y_j}\frac{\sum_l\left(1+i(1-\theta)\tan\frac{\varphi_{jl}}{2}\right)\ketbra{\psi_{jl}}\prod_{t\neq l}\left(1-i(1-\theta)\tan\frac{\varphi_{jt}}{2}\right)}{\prod_l\left(1-i(1-\theta)\tan\frac{\varphi_{jl}}{2}\right)}G_j \ket{y_{j-1}}.
\end{nalign}
Let
\begin{nalign}
    Q_0(\theta)&:=\prod_{j=1}^m \prod_l\left(1-i(1-\theta)\tan\frac{\varphi_{jl}}{2}\right),\\
    Q(\theta)&:=|Q_0(\theta)|^2.
\end{nalign}
Then Eq.~\eqref{eq:feynmanpathrational} is a degree $(O(m),O(m))$ rational function in $\theta$ with $Q_0(\theta)$ being the denominator. Furthermore, notice that $Q_0(\theta)$ does not depend on the Feynman path $\{y_j\}$, and each term in the sum in Eq.~\eqref{eq:feynmanpathsum} shares the same denominator $Q_0(\theta)$. Therefore, $\expval{C(\theta)}{0^n}$ is a degree $(O(m),O(m))$ rational function in $\theta$, and so is $\przero{C(\theta)}=|\expval{C(\theta)}{0^n}|^2$. $Q(\theta)$ is then the denominator of $\przero{C(\theta)}$. Finally, note that if $C_0$ only consists of 2-qubit gates, then $\przero{C(\theta)}$ is a degree $(8m,8m)$ rational function in $\theta$.

\end{proof}

Lemma~\ref{lemma:lowdegreerational} formally supports Property 2 of the $\theta$-Cayley perturbed random circuit distribution. Intuitively, this low-degree rational function structure allows us to apply polynomial interpolation techniques to obtain an approximation to the worst-case quantity, even though the worst-case point ($\theta=1$) is far from the average-case data points ($0<\theta\ll 1$). Details of this interpolation technique are presented in the proof of our main result (Theorem~\ref{thm:averagecasehardness}) and in Section~\ref{section:robustBW}. Before presenting the main result, it remains to establish Property 1 of the $\theta$-Cayley perturbed random circuit distribution, which is given in the following lemma.

\begin{lemma}[\cite{movassagh2020quantum}]\label{lemma:circuittvd}
Let $\mc H_{\mc A,\theta}$ be the $\theta$-Cayley perturbed random circuit distribution as in Definition~\ref{def:thetacayleyperturbation}, and $\mc H_{\mc A}$ be the distribution of Haar random circuits over $\mc A$. Then we have
\begin{equation}
    D_{\mathrm{TV}}(\mc H_{\mc A,\theta},\mc H_{\mc A})= O(m\theta),
\end{equation}
where $D_{\mathrm{TV}}(\cdot,\cdot)$ denotes the total variation distance between probability distributions and $m$ is the number of gates in $\mc A$.
\end{lemma}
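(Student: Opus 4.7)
The plan is to reduce the product-distribution TV bound to a single-gate estimate via a hybrid argument, then control the single-gate distance using the Weyl integration formula together with an explicit Jacobian computation for the Cayley transform. A hybrid from $\mc H_{\mc A}$ to $\mc H_{\mc A,\theta}$ that flips one gate slot at a time from $H_i G_i$ (which is Haar by right-invariance of Haar) to $H_i(\theta) G_i$ expresses the overall TV distance as a sum of single-gate distances. Since right-multiplication by the fixed gate $G_i$ is a measure-preserving bijection on $U(d_i)$, each hybrid step simplifies to $D_{\mathrm{TV}}(H_i(\theta), H_i)$ with $H_i$ Haar. The triangle inequality for TV under product measures then yields
\[ D_{\mathrm{TV}}(\mc H_{\mc A,\theta}, \mc H_{\mc A}) \leq \sum_{i=1}^{m} D_{\mathrm{TV}}(H_i(\theta), H_i), \]
so it suffices to show $D_{\mathrm{TV}}(H(\theta), H) = O(\theta)$ for Haar $H$ on each constant-dimensional $U(d_i)$.

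For the single-gate bound, I would invoke the Weyl integration formula, which factors Haar on $U(d)$ into an independent uniform choice of diagonalizing coset in $U(d)/T^d$ together with the eigenphase joint density $p(\varphi_1,\dots,\varphi_d) \propto \prod_{j<k}|e^{i\varphi_j}-e^{i\varphi_k}|^2$ on the torus. Because the Cayley transform $H \mapsto H(\theta)$ fixes eigenvectors and sends each eigenphase $\varphi$ to $\psi := 2\arctan\!\big((1-\theta)\tan(\varphi/2)\big)$, the eigenbasis marginal is preserved and the single-gate TV distance equals the TV distance between $p$ and its coordinate-wise pushforward $\tilde p$. By change of variables $\tilde p(\psi) = p(\varphi(\psi)) \prod_j |d\varphi_j/d\psi_j|$, so the bound reduces to two uniform estimates. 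First, the per-coordinate Jacobian $d\varphi/d\psi = (1-\theta)(1+\tan^2(\psi/2))/((1-\theta)^2+\tan^2(\psi/2))$ lies in $[1-\theta,\,1/(1-\theta)]$ and is therefore $1 + O(\theta)$. Second, a mean-value bound on $\arctan$ gives $|\varphi - \psi| \leq \theta/(1-\theta) = O(\theta)$ uniformly in $\varphi \in [-\pi,\pi]$, so the smooth Weyl density (which is $O_d(1)$-Lipschitz in constant dimension) satisfies $|p(\varphi) - p(\psi)| = O(\theta)$. Combining the two estimates yields $|\tilde p - p| = O(\theta)$ pointwise, hence $D_{\mathrm{TV}}(H(\theta), H) = O(\theta)$ per gate, and summing over $m$ gates proves the lemma.

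The main obstacle is the apparent blow-up of $\tan(\varphi/2)$ near $\varphi = \pm\pi$, where one might initially worry that small Cayley perturbations translate into large eigenphase shifts and hence large density changes. The uniform mean-value estimate $|\varphi - \psi| \leq \theta/(1-\theta)$, obtained by substituting $t = \tan(\varphi/2)$ and optimizing $2\theta|t|/(1+((1-\theta)t)^2)$ over $t$ (the maximum is attained at $|t| = 1/(1-\theta)$), is the delicate ingredient that tames this boundary behavior and is the technical core of the argument. The rest is routine change-of-variables in constant dimension, and the hidden constants only degrade with the gate dimension $d_i$, which is $O(1)$ for the two-qubit architectures considered throughout the paper.
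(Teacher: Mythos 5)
Your proposal is correct and follows essentially the same route as the paper: subadditivity of total variation distance over the $m$ independent gates reduces everything to the single-gate bound, which the paper proves (Lemma~\ref{lemma:totalvariationdistance} in Appendix~\ref{appendix:cayleytransform}) exactly as you do — by factoring Haar measure into independent eigenvectors and the Weyl eigenvalue density, then showing the eigenphase shift and the Jacobian of $f_\theta^{-1}$ are both uniformly $1+O(\theta)$-controlled. Your explicit optimization showing $|\varphi-\psi|\leq \theta/(1-\theta)$ with the maximum at $|t|=1/(1-\theta)$ is a nice quantitative sharpening of the paper's stated bound $|f_\theta^{-1}(\varphi)-\varphi|\leq O(\theta)$, but the argument is otherwise identical.
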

\begin{proof}
This total variation distance can be bounded by considering the distribution of each individual gates. In~\cite{movassagh2020quantum}, it was shown that the total variation distance between the Cayley transformed random unitary $H(\theta)$ and Haar random unitary is $O(\theta)$ (also see Lemma~\ref{lemma:totalvariationdistance} in Appendix~\ref{appendix:cayleytransform}). Therefore by additivity of the total variation distance we have $D_{\mathrm{TV}}(\mc H_{\mc A,\theta},\mc H_{\mc A})=O(m\theta)$.
\end{proof}

Having established Property 1 and 2, we are now ready to state and prove our first result on the average-case hardness of random circuits.

\begin{theorem}\label{thm:averagecasehardness}
Let $\mc A$ be a circuit architecture so that computing $\przero{C}$ to within additive error $2^{-O(m)}$ is $\sharpp$-hard in the worst case. Then the following problem is $\sharpp$-hard under a $\BPP^{\NP}$ reduction: for any constant $\eta<\frac{1}{4}$, on input a random circuit $C\sim\mc H_{\mc A}$ with $m$ gates, compute the output probability $\przero{C}$ up to additive error $\delta=\exp\left(-O(m\log m)\right)$, with probability at least $1-\eta$ over the choice of $C$.
\end{theorem}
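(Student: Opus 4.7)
The plan is to instantiate the worst-to-average-case template of this section using the robust-over-the-Reals Berlekamp--Welch argument and the improved real-extrapolation bound developed in Section~\ref{section:robustBW}, applied to the rational-function structure from Lemma~\ref{lemma:lowdegreerational}. Given an arbitrary worst-case circuit $C_0$ with $m$ gates, I would sample a fresh Haar random seed $\{H_i\}$ and form the Cayley-perturbed family $C(\theta)$ of Definition~\ref{def:thetacayleyperturbation}, so that $C(1)=C_0$ while $C(0) \sim \mc H_{\mc A}$. I would then set $\Delta = c/m$ for a small constant $c>0$ depending on $\eta$, pick $L = \Theta(m)$ evenly spaced evaluation points $\theta_1,\dots,\theta_L \in (0,\Delta]$, and query the average-case oracle $\mc O$ on each $C(\theta_i)$ to obtain approximations $\tilde p_i$ of $\przero{C(\theta_i)}$.

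The next step is to reduce the rational interpolation to a polynomial one. The proof of Lemma~\ref{lemma:lowdegreerational} already exhibits $\przero{C(\theta)} = N(\theta)/Q(\theta)$ where $Q(\theta) = |Q_0(\theta)|^2$ is a degree-$O(m)$ polynomial that depends only on the eigenphases of the seed, and is hence classically computable by the reduction to $\mathrm{poly}(m)$-precision. Because $H_i(1)=I$, we have $Q(1)=1$ and therefore $N(1) = \przero{C_0}$ is exactly the target value. Setting $q_i := \tilde p_i \cdot Q(\theta_i)$ yields approximations of $N(\theta_i)$ with error at most $\delta \cdot \max_{\theta \in [0,1]}|Q(\theta)| \leq \delta \cdot e^{O(m)}$ on typical seeds. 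By Lemma~\ref{lemma:circuittvd}, the oracle fails on each $\theta_i$ with probability at most $\eta + O(m\Delta) = \eta + O(c)$; choosing $c$ small enough makes this less than some fixed $\eta'' < 1/4$, so Markov's inequality yields, with constant probability over the seed, at least $(1-\eta'')L$ pairs $(\theta_i, q_i)$ that are $\delta \cdot e^{O(m)}$-close to $N(\theta_i)$, which I would amplify by independently repeating over fresh seeds.

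The final step uses an $\NP$ oracle to search for any polynomial $\tilde N$ of degree $O(m)$ that $(\delta \cdot e^{O(m)})$-agrees with at least $(1-\eta'')L$ of the pairs $(\theta_i, q_i)$; the true $N$ witnesses that such $\tilde N$ exists. The robust Berlekamp--Welch guarantee of Section~\ref{section:robustBW} forces every such $\tilde N$ to agree with $N$ uniformly on $[0,\Delta]$ up to error $\delta \cdot 2^{O(m)}$. I would then apply the improved real-extrapolation bound of Section~\ref{section:robustBW} to the degree-$O(m)$ polynomial $\tilde N - N$, which is uniformly small on an interval of length $\Theta(1/m)$, to conclude $|\tilde N(1) - N(1)| \leq \delta \cdot 2^{O(m\log m)}$. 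Outputting $\tilde N(1)$ therefore approximates $\przero{C_0} = N(1)$ to additive error $\delta \cdot 2^{O(m\log m)}$; picking $\delta = 2^{-\Omega(m\log m)}$ with a sufficiently large hidden constant drives this down to $2^{-O(m)}$, which the worst-case hardness hypothesis identifies as $\sharpp$-hard.

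The main obstacle I expect is the joint calibration of three parameters --- the failure rate $\eta$, the perturbation radius $\Delta$, and the robust-BW threshold $\eta''<1/4$ --- because the total-variation slack $O(m\Delta)$ must be added to $\eta$ before the chain fits strictly below $1/4$. This forces $\Delta = \Theta(1/m)$ with a constant depending on $1/4 - \eta$, and this same linear scaling $1/\Delta = \Theta(m)$ is precisely what drives the final precision to $2^{-O(m\log m)}$ rather than $2^{-O(m)}$. A secondary subtlety is that the $L$ queries share a single random seed and are hence strongly correlated, ruling out Chernoff-style concentration; Markov on the expected count of bad queries, together with independent-seed amplification, is what rescues the $\BPP^{\NP}$ reduction.
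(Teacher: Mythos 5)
Your proposal follows essentially the same route as the paper's proof: Cayley perturbation, reduction of the rational function to a polynomial by multiplying by the explicit denominator $Q(\theta)$, robust Berlekamp--Welch via an $\NP$ oracle, and the improved real-extrapolation bound. Two small technical slips are worth flagging, neither of which changes the final asymptotics. First, you claim $\max_{\theta\in[0,1]}|Q(\theta)|\leq e^{O(m)}$ ``on typical seeds,'' but this is off by a log factor: to keep the bad-seed probability a small constant one must require all eigenphases to avoid $\pm\pi$ by $\Omega(1/m)$, and then $Q(\theta)\leq\left(1+O(m^2)\right)^{O(m)}=e^{O(m\log m)}$, not $e^{O(m)}$ (the $e^{O(m)}$ bound holds only on an exponentially rare set of seeds). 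The paper makes this explicit by conditioning on $\beta$-good seeds with $\beta=O(1/m)$. This doesn't hurt you because the extrapolation term already dominates with a $2^{O(m\log m)}$ blowup. Second, you take $L=\Theta(m)$ evaluation points, whereas the paper's Theorem~\ref{Thm:robustBW} as proved needs $|D|=100d^2=\Theta(m^2)$ equally spaced points (the elementary proof via Lemma~\ref{Lemma:zeph} and the Markov-brothers inequality); $\Theta(m)$ points would require invoking Rakhmanov's theorem as noted in Remark~\ref{rmk:rakhmanov}. Since any $\mathrm{poly}(m)$ point count is affordable in a $\BPP^{\NP}$ reduction, this too is harmless, but your phrasing should be corrected. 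With those adjustments the proposal and the paper's proof coincide.
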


\begin{remark}
    Another way of stating Theorem~\ref{thm:averagecasehardness} is the following: suppose there exists an algorithm that belongs to some finite level of $\PH$ that, on input a random circuit $C\sim\mc H_{\mc A}$ with $m$ gates, computes the output probability $\przero{C}$ up to additive error $\delta=\exp\left(-O(m\log m)\right)$, with probability at least $1-\eta$ ($\eta<\frac{1}{4}$) over the choice of $C$ as well as the randomness of the algorithm. Then $\PH$ collapses to a finite level.
\end{remark}

\begin{proof}
Let $\mc O$ be an algorithm that correctly approximates $\przero{C}$ of a random circuit $C\sim\mc H_{\mc A}$ up to additive error $\delta$, with success probability at least $1-\eta$ over the choice of $C$. In the following, we show that there exists a $\BPP^{\NP^{\mc O}}$ procedure that on input \emph{any} circuit $C_0$, computes $\przero{C_0}$ up to additive error $\delta'=\delta\exp\left(O(m\log m)\right)$, with success probability at least $\frac{2}{3}$. The theorem statement then follows from the worst-case hardness of computing $\przero{C_0}$ over $\mc A$.

Consider any circuit $C_0$ with $m$ gates over the architecture $\mc A$. Create a new circuit $C_1$ as follows: for each gate $G_i$ ($i=1,\dots,m$) in $C_0$, we replace $G_i$ with $H_i G_i$, where $\{H_i\}$ is independently drawn from the Haar measure over the unitary group and has the same dimension as $G_i$. By the invariance of Haar measure, $C_1$ is distributed the same as $\mc H_{\mc A}$.

Fix the random unitary gates $\{H_i\}$. Next, we apply the $\theta$-Cayley perturbation on $C_0$ using the random seed $\{H_i\}$ as in Definition~\ref{def:thetacayleyperturbation} to get the perturbed circuit $C(\theta)$. By definition, we have $C(0)=C_1\sim \mc H_{\mc A}$ and $C(1)=C_0$.

By Lemma~\ref{lemma:lowdegreerational}, $\przero{C(\theta)}$ is a degree $(O(m),O(m))$ rational function in $\theta$. Let $P(\theta)$, $Q(\theta)$ be the numerator and denominator of $\przero{C(\theta)}$, respectively, then $\przero{C(\theta)}=\frac{P(\theta)}{Q(\theta)}$. Note that from the proof of Lemma~\ref{lemma:lowdegreerational}, we have that
\begin{equation}
    Q(\theta)=\prod_{j=1}^m \prod_l\left|1-i(1-\theta)\tan\frac{\varphi_{jl}}{2}\right|^2.
\end{equation}

The goal is to recover $\przero{C(1)}=\frac{P(1)}{Q(1)}$ from values of $\przero{C(\theta)}$ for small $\theta$. To do this, first note that $Q(\theta)$ is a known polynomial whose value can be computed in time $O(m)$ for any $\theta$. Therefore, the problem can be reduced to a polynomial interpolation for $P(\theta)$.

To analyze the error for the polynomial interpolation, it is useful to establish bounds for $Q(\theta)$. We can write $Q(\theta)$ as
\begin{equation}
    Q(\theta)=\prod_{j=1}^m \prod_l\left|1-i(1-\theta)\tan\frac{\varphi_{jl}}{2}\right|^2=\prod_{j=1}^m \prod_l\left(1+(1-\theta)^2\tan^2\frac{\varphi_{jl}}{2}\right),
\end{equation}
where it is easy to see that $Q(\theta)\geq 1$. Next, call the set of Haar random gates $\{H_j\}_{j=1,\dots,m}$ $\beta$-good, if all eigenvalues $\varphi_{jl}$ of all gates lie in the range $[-\pi+\beta,\pi-\beta]$. By Lemma~\ref{lemma:haardistributionrange}, this happens with probability at least $1-O(m\beta)$. Suppose we choose $\beta =O(m^{-1})$ such that $\{H_j\}$ is $\beta$-good with high constant probability. Then conditioned on $\{H_j\}$ being $\beta$-good, we have
\begin{nalign}
    Q(\theta)&=\prod_{j=1}^m \prod_l\left(1+(1-\theta)^2\tan^2\frac{\varphi_{jl}}{2}\right)\\
    &\leq \prod_{j=1}^m \prod_l\left(1+\tan^2\frac{\pi-\beta}{2}\right)\\
    &\leq \prod_{j=1}^m \prod_l\left(1+\frac{4}{\beta^2}\right)\\
    &= \left(1+O(m^2)\right)^{O(m)}\\
    &=\exp\left(O(m\log m)\right).
\end{nalign}
We also define the above upper bound of $Q(\theta)$ as $K$, where $K=\exp\left(O(m\log m)\right)$.

Next we apply the algorithm $\mc O$ on input $C(\theta)$. Notice that by definition, $\mc O$ works on inputs from the distribution $\mc H_{\mc A}$, while $C(\theta)$ is distributed according to $\mc H_{\mc A,\theta}$ as in Definition~\ref{def:thetacayleyperturbation}. Therefore, the success probability of $\mc O$ depends on the distance between the two distributions,
\begin{equation}
    \Pr_{C(\theta)\sim \mc H_{\mc A,\theta}}\left[\left|\mc O(C(\theta))-\przero{C(\theta)}\right|\geq \delta\right]\leq \eta +D_{\mathrm{TV}}(\mc H_{\mc A,\theta},\mc H_{\mc A}),
\end{equation}
where $D_{\mathrm{TV}}$ denotes total variation distance. By Lemma~\ref{lemma:circuittvd}, we have $D_{\mathrm{TV}}(\mc H_{\mc A,\theta},\mc H_{\mc A})=O(m\theta)$. Let $\Delta=O(m^{-1})$ and restrict $\theta$ in the interval $[0,\Delta]$, such that $D_{\mathrm{TV}}(\mc H_{\mc A,\theta},\mc H_{\mc A})=O(m\Delta)$ is upper bounded by a small constant.

Conditioned on $\mc O$ being successful and $\{H_j\}$ being $\beta$-good, we have $\left|\mc O(C(\theta))-\frac{P(\theta)}{Q(\theta)}\right|\leq \delta$. After seeing the output $\mc O(C(\theta))$ of $\mc O$, we multiply it with $Q(\theta)$ to get an estimate of $P(\theta)$, which satisfies
\begin{equation}
    \left|\mc O(C(\theta))Q(\theta)-P(\theta)\right|\leq \delta Q(\theta)\leq \delta K.
\end{equation}
By a simple union bound, the above equation holds with failure probability at most
\begin{nalign}\label{eq:oraclefailureprob}
    \Pr\left[\left|\mc O(C(\theta))Q(\theta)-P(\theta)\right|\geq \delta K\right]&\leq \Pr\left[\left|\mc O(C(\theta))-\przero{C(\theta)}\right|\geq \delta\vee \{H_j\}\text{ not }\delta\text{-good}\right]\\
    &\leq \eta+O(m\Delta)+O(m\beta)\leq\eta'<\frac{1}{4},
\end{nalign}
for a suitable choice of constants in the $O$-notation for $\Delta=O(m^{-1})$ and $\beta=O(m^{-1})$, such that $\eta'<\frac{1}{4}$.

To compute $P(1)$, we apply the algorithm $\mc O$ to a set of circuits $\{C(\theta_i)\}$, where $\theta_i$ ($i=1,\dots,O(m^2)$) is a set of equally spaced points in the interval $[0,\Delta]$, and different perturbed circuits $C(\theta_i)$ shares the same random seed $\{H_j\}$. By Eq.~\eqref{eq:oraclefailureprob}, we obtain a set of points $\{(\theta_i,y_i)\}$ such that
\begin{equation}
    \Pr\left[|y_i-P(\theta_i)|\geq \delta K\right]\leq\eta'<\frac{1}{4}.
\end{equation}
The problem is then reduced to a polynomial interpolation for $P(\theta)$, a degree $O(m)$ polynomial, with the noisy data $\{(\theta_i,y_i)\}$. Using our robust Berlekamp-Welch theorem which we develop in the next section (see Theorem~\ref{Thm:robustBW}), on input $\{(\theta_i,y_i)\}$ we can compute a number $p\approx P(1)$ with access to an $\NP$ oracle, where the error can be bounded by
\begin{nalign}
    |p-P(1)|&\leq \delta K \exp\left(O(m\log \Delta^{-1})+O(m)\right)\\
    &=\delta\cdot \exp(O(m\log m)).
\end{nalign}
Finally, notice that 
\begin{equation}
    \przero{C_0}=\przero{C(1)}=\frac{P(1)}{Q(1)}=P(1)
\end{equation}
as $Q(1)=1$, therefore by choosing $\delta=\exp\left(-O(m\log m)\right)$ with a sufficiently large constant, we can compute the worst-case output probability $\przero{C_0}$ up to additive error $\exp\left(-O(m\log m)\right)$. The overall procedure is in $\BPP^{\NP^{\mc O}}$. If $\mc O$ is an algorithm that belongs to some finite level of $\PH$, then this procedure also belongs to a finite level of $\PH$, and therefore by the worst-case $\sharpp$ hardness the $\PH$ collapses to a finite level.
\end{proof}

In the above proof we used a result developed in Section~\ref{section:robustBW} as a black box, which we refer to as robust Berlekamp-Welch algorithm. This algorithm allows us to do polynomial interpolation on noisy data, while tolerating a constant fraction of data points that can be arbitrarily wrong (see Section~\ref{section:techniques} for a detailed discussion). Similar to the Learning with Errors problem for solving noisy linear equations, the polynomial interpolation problem with both noise and corruption seems unlikely to be solved in polynomial time. However, recall that given the $\sharpp$ hardness of computing $\przero{C}$ in the worst case, in our worst-to-average-case reduction we are allowed to use any finite level of $\PH$. We show how to do such a polynomial interpolation in Section~\ref{section:robustBW} having access to a $\NP$ oracle. As a result, we are able to tolerate a constant failure probability in our average-case hardness.

Furthermore, our proof techniques can also be naturally applied to BosonSampling, where a fundamental question is to prove robust average-case hardness results for computing permanents of matrices with i.i.d. Gaussian entries~\cite{Aaronson2011linear,Aaronson2014Bosonsampling}. Consider a BosonSampling experiment with $n$ photons and $m=n^c$ ($c>2$) output modes. Then assuming no collision, the output probability corresponding to a output pattern $S=(s_1,\dots,s_m)$ ($s_i\in\{0,1\}$, $\sum_i s_i=n$) is
\begin{equation}
    \Pr[S]=\left|\Per(A_S)\right|^2,
\end{equation}
where $A$ is a $m\times n$ submatrix of a $m\times m$ Haar random unitary matrix, and $A_S$ is the $n\times n$ submatrix of $A$ whose rows are selected according to $S$. When $c$ is a large enough constant, $A_S$ is distributed close in total variation distance to matrices with i.i.d. complex Gaussian entries of mean 0 and variance $\frac{1}{m}$. We then obtain the following result using similar proof techniques:

\begin{corollary}\label{cor:permanent}
For any constant $\eta<\frac{1}{4}$, it is $\sharpp$-hard under a $\BPP^{\NP}$ reduction to compute the output probability of a $n$-photon $m=n^c$-mode BosonSampling experiment up to additive error $\delta=e^{-(c+4)n\log n-O(n)}$, with success probability $1-\eta$.
\end{corollary}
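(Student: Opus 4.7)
The plan is to follow the template of Theorem~\ref{thm:averagecasehardness} essentially verbatim, with the Cayley perturbation on quantum circuits replaced by the simpler linear interpolation on matrices, and with the robust polynomial interpolation machinery of Section~\ref{section:robustBW} reused as a black box. The worst-case ingredient is the $\sharpp$-hardness of approximating $|\Per(A)|^2$ for an arbitrary matrix $A$ with $|A_{ij}|\le 1$, which is standard from~\cite{Aaronson2011linear}.

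Given such a worst-case $A$ and an independent sample $Y$ from the iid complex Gaussian ensemble (entries of variance $1/m$), I would define
\begin{equation}
A(\theta) = (1-\theta)\,Y + \theta A, \qquad \theta\in[0,1],
\end{equation}
so that $A(0)=Y$ is drawn from the target average-case distribution (after invoking the standard fact that a random $n\times n$ submatrix of a Haar-random $m\times m$ unitary with $m=n^c$, $c>2$, is $o(1)$-close in TV distance to complex Gaussian) and $A(1)=A$ is the worst-case instance. Because each entry of $A(\theta)$ is linear in $\theta$, the quantity $p(\theta):=|\Per(A(\theta))|^2$ is a polynomial in $\theta$ of degree $d=2n$. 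For each $\theta$ the entries of $A(\theta)$ are independent complex Gaussians with per-entry mean $\theta A_{ij}$ and variance $(1-\theta)^2/m$; comparing with the target (mean $0$, variance $1/m$), the mean-shift contribution dominates and gives per-entry TV distance $O(\theta\sqrt{m})$. Summing over the $n^2$ independent entries,
\begin{equation}
D_{\mathrm{TV}}\bigl(\mathrm{law}(A(\theta)),\,\mathrm{Gaussian}\bigr) = O\bigl(\theta\, n^2\sqrt{m}\bigr) = O\bigl(\theta\, n^{c/2+2}\bigr),
\end{equation}
so restricting $\theta\in[0,\Delta]$ with $\Delta=\Theta(n^{-c/2-2})$ keeps the oracle's failure probability on each queried $\theta_i$ below a constant $\eta'<1/4$.

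I would then query the oracle at $\Theta(n)$ equally spaced points $\theta_i\in[0,\Delta]$, all sharing the same $Y$, and feed the noisy data into the robust Berlekamp-Welch extrapolation of Theorem~\ref{Thm:robustBW} combined with the improved polynomial bound of Section~\ref{section:robustBW}. This returns an estimate of $p(1)=|\Per(A)|^2$ with additive error at most $\delta\cdot\exp\bigl(O(d\log\Delta^{-1})+O(d)\bigr)$, which for $d=2n$ and $1/\Delta=O(n^{c/2+2})$ evaluates to $\exp\bigl((c+4)\,n\log n + O(n)\bigr)$. Choosing $\delta = e^{-(c+4)n\log n - O(n)}$ with a sufficiently large hidden constant therefore recovers $|\Per(A)|^2$ to precision sufficient for the worst-case $\sharpp$-hardness, completing the $\BPP^{\NP}$ reduction.

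The main obstacle will be carefully tracking normalizations to extract the claimed leading constant: the scale mismatch between worst-case entries $O(1)$ and Gaussian entries $O(m^{-1/2})$ contributes a $\sqrt{m}=n^{c/2}$ factor in the TV bound, which together with the $n^2$ union bound is precisely what fixes $1/\Delta = n^{c/2+2}$, while the factor of $2$ in the degree of $|\Per|^2$ (as opposed to $\Per$) is what turns $c/2+2$ into $c+4$ in the final exponent. All remaining components---the Lagrange-style setup, the $\NP$ decoder in the Berlekamp-Welch step, and the union-bound failure analysis---carry over from the random-circuit proof without modification, so once the BosonSampling analogues of Definition~\ref{def:thetacayleyperturbation} and Lemma~\ref{lemma:circuittvd} are in place, the argument is essentially a bookkeeping exercise.
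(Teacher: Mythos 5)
Your proposal is correct and follows the same route as the paper's proof: a linear interpolation between the worst-case matrix and a Gaussian sample, a per-entry total-variation bound summed over the $n^2$ independent entries to fix $\Delta$, and the robust Berlekamp--Welch extrapolation with $d=2n$. The only difference is bookkeeping: the paper rescales to standard complex Gaussians, so the worst-case endpoint has $O(1)$ entries relative to unit variance, $\Delta=O(n^{-2})$ suffices, and the factor $e^{cn\log n}$ enters through the $m^n$ normalization of the output probability, whereas you keep the variance-$1/m$ normalization and absorb that same factor into the longer interpolation distance $\Delta=\Theta(n^{-c/2-2})$ --- the two accountings agree and both yield the exponent $(c+4)n\log n$; note only that Theorem~\ref{Thm:robustBW} as stated requires $100d^2=\Theta(n^2)$ evaluation points rather than the $\Theta(n)$ you quote (immaterial, since both are polynomial).
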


\begin{remark}\label{rmk:bosonsampling}
    As shown in \cite{Aaronson2011linear}, to prove the hardness of approximate sampling for BosonSampling experiments, it suffices to improve the constant in our result from $c+4$ to $c-1$. This is because on average the output probability is roughly
    \begin{equation}
        \frac{1}{\binom{m}{n}}\approx \frac{n!}{m^n}\approx e^{-(c-1)n\log n}.
    \end{equation}
    However, the barrier result shown by \cite{Aaronson2011linear} implies that this improvement cannot be obtained using similar techniques based on polynomial interpolation. See Appendix~\ref{sec:barrier} for a detailed discussion, where we also show that this barrier result does not rule out improving the constant to $c+1$.
\end{remark}

\begin{proof}
Let $\mc G^{n\times n}$ denote the distribution over $n\times n$ matrices where each entry is independently distributed according to $\mc{CN}(0,1)$, the standard complex Gaussian distribution. Suppose $c$ is a large enough constant\footnote{The proof in \cite{Aaronson2011linear} requires $c$ to be at least 5, and it was conjectured that $c>2$ suffices.}, then an algorithm for approximating the output probability of a BosonSampling experiment can be used to approximate
\begin{equation}
    p_X:=\frac{\left|\Per(X)\right|^2}{m^n},\,\,\,\,X\sim \mc G^{n\times n}
\end{equation}
with a small loss in the success probability.

Let $\mc O$ be an algorithm that, on input a random matrix $X\sim\mc G^{n\times n}$, correctly approximates $p_X$ up to additive error $\delta$, with success probability at least $1-\eta$ over the choice of $X$. In the following, we show that there exists a $\BPP^{\NP^{\mc O}}$ procedure that on input \emph{any} $0/1$ matrix $X_0$, computes $\left|\mathrm{Per}(X_0)\right|^2$ up to additive error $\delta'=\delta\exp\left((c+4)n\log n+O(n)\right)$, with success probability at least $\frac{2}{3}$. The theorem statement then follows from the worst-case $\sharpp$ hardness of computing $\mathrm{Per}(X_0)$.

Let $X_1\sim\mc G^{n\times n}$ and define
\begin{equation}
    X(\theta):=(1-\theta)X_1+\theta X_0.
\end{equation}
Then $\left|\mathrm{Per}(X(\theta))\right|^2$ is a degree $d=2n$ polynomial in $\theta$. By a result of \cite{Aaronson2011linear}, the total variation distance between the distribution of $X(\theta)$ and $\mc G^{n\times n}$ is $O(n^2\theta)$. This can be shown by calculating the distance for one entry, which is $O(\theta)$, and then multiply by $n^2$ as the entries are independently distributed.

Consider $O(n^2)$ uniformly spaced points $\{\theta_i\}$ in $[0,\Delta]$ with $\Delta=O(n^{-2})$. For a suitable choice of constants, we can guarantee that for each data point $\theta_i$,
\begin{equation}
    \Pr[\left|\mc O(X(\theta_i))-\frac{\left|\Per(X(\theta_i))\right|^2}{m^n}\right|\geq\delta]\leq \eta+O(n^2\Delta)\leq\eta'
\end{equation}
for some constant $\eta'<\frac{1}{4}$. The procedure then follows by sending the points $\{(\theta_i,\mc O(X(\theta_i)))\}$ to the robust Berlekamp-Welch algorithm (Theorem~\ref{Thm:robustBW}) to obtain the desired approximation to the worst-case quantity $\left|\mathrm{Per}(X_0)\right|^2=\left|\mathrm{Per}(X(1))\right|^2$. Overall we have a $\BPP^{\NP^{\mc O}}$ procedure for computing $\left|\mathrm{Per}(X_0)\right|^2$ up to additive error
\begin{equation}
    \delta\cdot m^n\cdot \exp\left(d\log\Delta^{-1}+O(d)\right)=\delta\cdot \exp\left((c+4)n\log n+O(n)\right),
\end{equation}
which concludes the proof.
\end{proof}

\subsection{Noisy circuits and error detection}\label{sec:noisyworstcase}
Next, we generalize our average-case hardness result to noisy random quantum circuits. We first define the computational problem of approximating the output probability of noisy quantum circuits, and show that worst-case hardness can be established via error detection. Then in Section~\ref{sec:noisyaveragecasehardness} we show that our previous reduction can be directly applied to noisy circuits, therefore obtaining average-case hardness.

For simplicity, here we consider noise models that are \emph{local} and \emph{stochastic}. Namely, for a circuit $C$ with $m$ gates, there are $r=O(m)$ independent noise channels, where each noise channel $\mc N_k$ ($k=1,\dots,r$) acts on a constant number of neighboring qubits and has the following form,
\begin{equation}\label{eq:noisemodel}
    \mc N_k=(1-\gamma)\mc I+\gamma \mc E_k,
\end{equation}
where $\mc I$ is the identity channel, $\mc E_k$ is an arbitrary CPTP map, and $\gamma$ is the noise rate. We assume that the noise model $\mc N=\{\mc N_k\}_{k=1,\dots,r}$ is fixed \emph{a priori} and does not depend on the choice of gates. In principle, our hardness proof can be applied to more general noise models, as long as they are gate-independent and feasible for error correction/detection. In particular, the choice of a fixed noise rate for each noise channel is not essential and is for simplicity, as otherwise we can define $\gamma$ as the maximum of all noise rates. 

Our noise model \eqref{eq:noisemodel} can be understood as follows: for each $k=1,\dots,r$, with probability $1-\gamma$, do nothing; with probability $\gamma$, apply $\mc E_k$. The overall probability of having no error is at least $(1-\gamma)^r$. A simulation of the noisy circuit needs to take expectation over these stochastic paths. Equivalently, we can think of $\mc N_k$ as quantum channels that deterministically acts on input density matrices. Simulating the output probability of a noisy circuit is therefore equivalent to computing the diagonal elements of the output density matrix of the noisy circuit.

\begin{definition}
Fix a noise model $\mc N$. An algorithm simulates a circuit $C$ of $n$ qubits under the noise model $\mc N$ up to additive error $\delta$, if on input $C$, it computes a number $p$ that satisfies
\begin{equation}
    \left|p-\przero{C,\mc N}\right|\leq \delta,
\end{equation}
where $\przero{C,\mc N}:=\Tr\left[\ketbra{0^n}\cdot\mc C_{\mc N}(\ketbra{0^n})\right]$, and $\mc C_{\mc N}$ denotes the quantum channel induced by applying $\mc N$ to $C$. We use $\przero{C}$ to denote the same output probability in the noiseless case.
\end{definition}

To show the computational hardness of simulating noisy circuits, one straightforward idea is to apply fault-tolerant quantum error correction. Suppose $\mc A$ is an architecture over which fault-tolerant error correction is possible. For example, $\mc A$ can consist of nearest neighbor two qubit gates over a 1D or 2D grid. Let $C_0$ be a quantum circuit whose output probability is $\sharpp$-hard to approximate within an exponentially small additive error. Then by fault-tolerance threshold theorems~\cite{Kitaev_1997,aharonov2008fault}, when the noise rate of $\mc N$ is below a constant threshold, we can construct a quantum circuit $C$ over $\mc A$ with a polynomial overhead in the size of $C_0$, such that even with the presence of noise, the output distribution of $C$ is exponentially close to the output distribution of $C_0$ in total variation distance. The computational hardness of simulating $C$ with the noise model $\mc N$ then follows from the hardness of simulating $C_0$.

An improvement of the above argument developed by Fujii~\cite{fujii2016noise} is to identify the fact that fault-tolerant \emph{error detection} suffices. This has the benefit of having a smaller overhead in implementing fault-tolerance, and more importantly, a higher noise threshold~\cite{reichardt2006error,aliferis2007accuracy}. Fujii's result, which establishes the hardness of sampling from the output distribution of noisy circuits in the worst case, follows from a similar post-selection argument by Bremner, Jozsa and Shepherd~\cite{Bremner2011classical}: following Aaronson's $\PostBQP=\PP$ result~\cite{Aaronson2005quantum}, to prove that a family of circuits are hard to sample from, it suffices to show that they are universal for quantum computation under post-selection. To apply this argument, note that applying standard threshold theorems for an error detecting code, we have that when post-selecting the ``no error detected" event, the output distribution of a noisy circuit that implements error detection is exponentially close to the underlying logical circuit in total variation distance, which means that the error-detected noisy circuits are universal for $\BQP$ under post-selection. Therefore, when noise is below the error detection threshold, the output distribution of noisy circuits are hard to sample from in the worst case.

Different from Fujii's result, here we consider the worst-case hardness of computing the output probability of noisy circuits. We show that here error detection also suffices: when the noise rate is below the error detection threshold, computing the output probability up to an exponentially small additive error is hard for the Polynomial Hierarchy.

\begin{theorem}\label{thm:mainworstcasehardness}
Fix a noise model $\mc N$, let $\gamma_{\mathrm{th}}$ be the corresponding fault-tolerant error detection threshold, and suppose the noise rate of $\mc N$ satisfies $\gamma<\gamma_{\mathrm{th}}$. Suppose there exists a $\PH$ procedure (that belongs to some finite level of $\PH$) that on input $C$ with $m$ gates, computes $\przero{C,\mc N}$ up to additive error $\delta=\exp(-O(m\log m))$. Then $\PH$ collapses to a finite level.
\end{theorem}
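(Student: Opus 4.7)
The plan is to adapt Fujii's error-detection argument~\cite{fujii2016noise}, originally developed to prove worst-case hardness of \emph{sampling} from noisy circuits, to the setting of \emph{computing} output probabilities. Suppose the hypothesized $\PH$ procedure $\mc O$ exists; I will use $\mc O$ inside a $\PH$ reduction to compute $\przero{C_0}$ within additive error $2^{-\poly(n_0)}$ for an arbitrary $n_0$-qubit noiseless quantum circuit $C_0$. Since this worst-case task is $\cocequalp$-hard, and $\cocequalp \not\subseteq \PH$ unless $\PH$ collapses, such a reduction suffices. Given $C_0$, I would first use a fault-tolerant error-detecting compilation---available over $\mc A$ because $\gamma < \gamma_{\mathrm{th}}$, see e.g.\ \cite{reichardt2006error,aliferis2007accuracy}---to produce a physical circuit $\tilde C$ over $\mc A$ of size $m = \poly(n_0)$. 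The compiled circuit performs all syndrome measurements and places the decoded logical output in a designated register, so that measuring all its output qubits yields both a logical string and a syndrome string.

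Let $S$ denote the event ``all syndromes are zero.'' Unpacking the threshold theorem gives
\begin{align*}
    \Pr[\,\text{logical}=0^{n_0}\wedge S\,] &= \Pr[S]\cdot \przero{C_0}\;\pm\; e^{-\Omega(m)}, \\
    \Pr[S] &\geq e^{-O(m)},
\end{align*}
where the slack bounds the (exponentially small) probability of an undetected logical error. Both probabilities are themselves output probabilities of noisy circuits over $\mc A$: the first is $\przero{\tilde C,\mc N}$ after relabeling output registers so that the target outcome is all-zeros, while the second is $\przero{\tilde C_{\mathrm{id}},\mc N}$, where $\tilde C_{\mathrm{id}}$ compiles the identity logical circuit through the \emph{same} encoded architecture so that its denominator $\Pr[S]$ coincides with that of $\tilde C$ (see obstacle below).

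Invoking $\mc O$ on $\tilde C$ and $\tilde C_{\mathrm{id}}$ produces estimates $\tilde N,\tilde D$ of $N:=\Pr[\,\text{logical}=0^{n_0}\wedge S\,]$ and $D:=\Pr[S]$ with $|\tilde N-N|,|\tilde D-D|\leq \delta = e^{-O(m\log m)}$. Because $D\geq e^{-O(m)}\gg\delta$, standard error propagation gives
\[
    \left|\frac{\tilde N}{\tilde D}-\przero{C_0}\right| \;\leq\; \frac{\delta}{D} \;+\; \frac{N\delta}{D^{2}} \;+\; e^{-\Omega(m)} \;=\; e^{-\Omega(m)} \;=\; 2^{-\poly(n_0)}.
\]
Since $\mc O$ lives in a fixed level of $\PH$ and the surrounding reduction (including the ratio) runs in $\BPP^{\mc O}$, the whole procedure sits in a fixed level of $\PH$. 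As this computes worst-case noiseless probabilities to $\cocequalp$-hard precision, we obtain $\cocequalp\subseteq \PH$ and the promised collapse.

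The main obstacle is extracting from the threshold theorem precisely the form stated above. Its usual phrasing gives closeness of the \emph{post-selected} distribution to the ideal in total variation distance, whereas I need closeness of the \emph{joint}, un-normalized probability $\Pr[\text{logical}=0^{n_0}\wedge S]$ to $\Pr[S]\cdot\przero{C_0}$ with additive slack $e^{-\Omega(m)}$ (rather than the weaker $e^{-\Omega(m)}\cdot\Pr[S]$). This should follow by unpacking the standard concatenated-code analysis, since every uncorrected fault pattern that both spoofs $S$ and flips the logical output carries exponentially small probability weight, but the bookkeeping requires care. A secondary subtlety is making $\tilde C$ and $\tilde C_{\mathrm{id}}$ share the same denominator $\Pr[S]$; for gate-independent stochastic noise this holds because syndrome outcomes depend only on the error pattern propagated through the (Clifford) code, provided the two circuits are compiled with identical gate layouts, ancilla structures, and timing schedules.
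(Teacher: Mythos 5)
Your plan is essentially the right idea---error detection plus a post-selection argument \`a la Fujii---but it takes a somewhat more roundabout route than the paper's proof, and the two obstacles you flag receive opposite verdicts: the one you call ``main'' is a non-issue, while the ``secondary subtlety'' is a genuine gap that the paper's choice of reduction sidesteps entirely.

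The paper's proof (Theorem~\ref{thm:worstcasehardness} in Appendix~\ref{appendix:worstcasehardness}) does not reconstruct $\przero{C_0}$ by dividing two oracle estimates. It reduces directly from the gapped decision problem of Fenner \emph{et al.}: decide whether $\przero{C_0}=0$ or $\przero{C_0}\geq 2^{-2n_0}$ (Problem~\ref{prob:decision}, which is $\cocequalp$-hard). Using $\przero{C,\mc N}=\Pr[S]\cdot\left(\przero{C_0}\pm\exp(-p(m_0))\right)$ together with the explicit, \emph{circuit-independent} bounds $\exp(-O(\gamma m))\leq\Pr[S]\leq\exp(-O(\gamma m))\left(1+\exp(-\poly(m_0))\right)$, the two promise cases separate $\przero{C,\mc N}$ by a gap of order $2^{-2n_0}\exp(-O(\gamma m))\gg\exp(-O(m\log m))$. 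A single query to $\mc O$ followed by a comparison against an explicitly computable threshold suffices. No second query, no division, and no need to match syndrome statistics across two different compiled circuits.

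On your two obstacles: the ``main obstacle'' is trivially resolved. Lemma~\ref{lemma:errordetection} already gives $\left|\przero{C_0}-\przero{C,\mc N}/\Pr[S]\right|\leq\exp(-p(m_0))$, and multiplying through by $\Pr[S]\leq 1$ produces precisely the joint-probability inequality you want, with \emph{absolute} slack $\exp(-p(m_0))$ (not the weaker $\exp(-p(m_0))\cdot\Pr[S]$ you were hoping for --- you actually get a better bound for free). Your ``secondary subtlety,'' however, is a real issue. The noise model permits arbitrary CPTP $\mc E_k$, not just Pauli channels, so the syndrome distribution is not determined by a discrete error string independently of the state being processed; it is not automatic that $\tilde C$ and $\tilde C_{\mathrm{id}}$ have the same $\Pr[S]$, and you would need a careful architecture-dependent argument to establish it. The cheapest repair within your framework is to drop the second query entirely and divide instead by the explicitly computable quantity $\Pr[\text{no error occurs}]$, which differs from $\Pr[S]$ only by a factor $1+\exp(-\poly(m_0))$. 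But once you do that you have essentially recovered the paper's argument: you are no longer estimating a denominator, just bounding one, at which point thresholding a single value of $\mc O(\tilde C)$ is simpler than forming a ratio.
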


The proof follows from a similar application of the threshold theorem, and is presented in Appendix~\ref{appendix:worstcasehardness}.

\subsection{Average-case hardness of noisy circuits}\label{sec:noisyaveragecasehardness}
We prove average-case hardness of noisy circuits by extending our worst-to-average-case reduction to the noisy case. We start by re-examining the two properties of the $\theta$-Cayley perturbed random circuit distribution in the noisy setting. First, note that Property 1 is unaffected, as we have a fixed noise model that is independent of the choice of gates. Second, in the following we show that Property 2 still holds, which follows from the linearity of quantum channels and the convex structure of the Feynman path integral.

\begin{lemma}\label{lemma:noisylowdegreerational}
For any circuit $C_0$ and any fixed noise model $\mc N$, let $C(\theta)$ be a circuit from the $\theta$-Cayley perturbed random circuit distribution as in Definition~\ref{def:thetacayleyperturbation}. Then $\przero{C(\theta),\mc N}$ is a degree $(O(m),O(m))$ rational function in $\theta$.
\end{lemma}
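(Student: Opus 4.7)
The plan is to reduce the noisy case to a sum of noiseless-style amplitudes, each of which is controlled by the Feynman path argument of Lemma~\ref{lemma:lowdegreerational}. The key structural feature to exploit is that the noise model $\mc N$ is fixed \emph{a priori}: only the Cayley-perturbed gates $H_j(\theta) G_j$ depend on $\theta$, while every operator arising from a noise channel is a $\theta$-independent constant.

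First I would pick an operator-sum representation $\mc N_k(\rho) = \sum_a K_{k,a} \rho K_{k,a}^\dagger$ for each local noise channel. Interleaving these with the gates of $C(\theta)$ lets one write
\begin{equation}
\mc C_{\mc N}(\rho) = \sum_{\vec a} U_{\vec a}(\theta)\,\rho\,U_{\vec a}^\dagger(\theta),
\end{equation}
where $\vec a$ ranges over the joint choice of Kraus indices and each $U_{\vec a}(\theta)$ is a product of the $O(m)$ Cayley-transformed gates with $\theta$-independent Kraus operators. Consequently
\begin{equation}
\przero{C(\theta),\mc N} = \sum_{\vec a} \bigl|\bra{0^n}U_{\vec a}(\theta)\ket{0^n}\bigr|^2.
\end{equation}

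Next I would apply the Feynman path expansion to each amplitude $\bra{0^n}U_{\vec a}(\theta)\ket{0^n}$ exactly as in Eqs.~\eqref{eq:feynmanpathsum}--\eqref{eq:feynmanpathrational}. Since the $\theta$-dependence lives entirely in the $H_j(\theta)$ factors, the same computation yields $\bra{0^n}U_{\vec a}(\theta)\ket{0^n} = P_{\vec a}(\theta)/Q_0(\theta)$, with $P_{\vec a}(\theta)$ a polynomial of degree $O(m)$ and denominator
\begin{equation}
Q_0(\theta) = \prod_{j=1}^m \prod_l \Bigl(1 - i(1-\theta)\tan\tfrac{\varphi_{jl}}{2}\Bigr)
\end{equation}
which is \emph{the same} for every Kraus choice $\vec a$---it depends only on the eigenphases of the random unitaries $H_j$, not on the Kraus operators or on the chosen Feynman path. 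Summing the squared amplitudes then gives
\begin{equation}
\przero{C(\theta),\mc N} = \frac{\sum_{\vec a}|P_{\vec a}(\theta)|^2}{|Q_0(\theta)|^2} = \frac{P(\theta)}{Q(\theta)},
\end{equation}
where $P(\theta)$ has degree $O(m)$ in $\theta$ and $Q(\theta) := |Q_0(\theta)|^2$ is exactly the denominator from Lemma~\ref{lemma:lowdegreerational}.

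The one subtle point to verify is that the $\vec a$-sum does not inflate the degree: although the number of Kraus tuples is exponential in $m$, each $|P_{\vec a}(\theta)|^2$ has degree $O(m)$, and the set of such polynomials is closed under nonnegative linear combinations. This is exactly the convex-structure observation highlighted in Section~\ref{section:techniques}: any gate-independent stochastic noise preserves the low-degree rational structure established in the noiseless case. I do not expect a serious obstacle beyond careful bookkeeping; the crucial input is just that the noise operators introduced are $\theta$-independent, which is built into the hypothesis that $\mc N$ is fixed and gate-independent.
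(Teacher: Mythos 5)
Your proposal is correct and follows essentially the same route as the paper's proof: Kraus-decompose each (fixed, $\theta$-independent) noise channel, expand $\przero{C(\theta),\mc N}$ as a sum over noise trajectories of squared amplitudes, observe via the Feynman path computation of Lemma~\ref{lemma:lowdegreerational} that every term is a degree-$(O(m),O(m))$ rational function with the \emph{same} trajectory-independent denominator $Q(\theta)=|Q_0(\theta)|^2$, and conclude that the sum retains this structure. The only cosmetic difference is notation (your Kraus tuple $\vec a$ versus the paper's noise trajectory $l_1,\dots,l_m$).
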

\begin{proof}
The proof follows a similar argument as in Lemma~\ref{lemma:lowdegreerational}. Consider a noise model $\mc N=\{\mc N_j\}$. We prove the lemma statement in a special case where $r=m$ and each gate $H_j(\theta)G_j$ is followed by a noise channel $\mc N_j$, which can be directly generalized to the general case where the noise channels are arbitrarily placed.

As each noise channel $\mc N_j$ only acts on a constant number of qubits, it can be written as $\mc N_j(\rho)=\sum_{l_j}E_{jl_j}\rho E_{jl_j}^\dag$ where the summation is over constant number of terms, and $\{E_{jl_j}\}_{l_j}$ are the noise operators of $\mc N_j$ satisfying $\sum_{l_j}E_{jl_j}^\dag E_{jl_j}=I$. Let $\mc U_j$ be the unitary channel of the gate $H_j(\theta)G_j$.  Then, we have
\begin{nalign}
    \przero{C(\theta),\mc N}&=\Tr\left[\ketbra{0^n}\cdot \mc N_m\circ \mc U_m\circ\cdots\circ \mc N_1\circ \mc U_1(\ketbra{0^n}) \right]\\
    &=\sum_{l_1,\dots,l_m}\left|\expval{\prod_{j=1}^m E_{jl_j}H_j(\theta)G_j}{0^n}\right|^2.
\end{nalign}
Here for simplicity we omit the identity operators acting on qubits outside the support of the gates. The summation index $l_1,\dots,l_m$ can be understood as a \emph{noise trajectory}, where the noise operators are stochastically applied. Note that for each term in the above summation, the same analysis for $\przero{C(\theta)}$ can be applied (cf. Lemma~\ref{lemma:lowdegreerational}), which shows that $\left|\expval{\prod_{j=1}^m E_{jl_j}H_j(\theta)G_j}{0^n}\right|^2$ is a degree $(O(m),O(m))$ rational function in $\theta$ with denominator $Q(\theta)=\prod_{j=1}^m \prod_l\left|1-i(1-\theta)\tan\frac{\varphi_{jl}}{2}\right|^2$. Note that this denominator does not depend on the noise trajectory. Therefore, each term in the summation shares the same denominator, and the sum $\przero{C(\theta),\mc N}$ is also a degree $(O(m),O(m))$ rational function in $\theta$ with denominator $Q(\theta)$.
\end{proof}

Having established worst-case hardness of noisy circuits (Theorem~\ref{thm:mainworstcasehardness}) and the low-degree rational function property (Lemma~\ref{lemma:noisylowdegreerational}) for the reduction, the proof of average-case hardness then follows from a similar argument as in Theorem~\ref{thm:averagecasehardness}.

\begin{theorem}\label{thm:noisyaveragecasehardness}
Fix a noise model $\mc N$. Let $\mc A$ be a circuit architecture that can implement universal fault-tolerant error detection, and suppose the noise rate $\gamma<\gamma_{\mathrm{th}}$ is smaller than the error detection threshold. Suppose there exists a $\PH$ procedure (that belongs to some finite level of $\PH$) that, on input a random circuit $C\sim\mc H_{\mc A}$ with $m$ gates, computes the output  probability $\przero{C,\mc N}$ up to additive error $\delta=\exp\left(-O(m\log m)\right)$, with probability at least $1-\eta$ ($\eta<\frac{1}{4}$) over the choice of $C$ as well as the randomness of the algorithm. Then $\PH$ collapses to a finite level.
\end{theorem}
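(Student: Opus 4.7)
The plan is to replay the proof of Theorem~\ref{thm:averagecasehardness} almost verbatim, swapping every appearance of $\przero{\cdot}$ for $\przero{\cdot,\mc N}$, and verify that each of the three ingredients carries over: (i) Property~1 of the $\theta$-Cayley perturbed distribution (closeness in total variation), (ii) Property~2 (low-degree rational structure), and (iii) worst-case $\sharpp/\PH$-hardness at the right precision. Ingredient (i) is immediate because $\mc N$ is fixed and gate-independent, so the map $C\mapsto \mc C_{\mc N}$ is deterministic and the TV bound of Lemma~\ref{lemma:circuittvd} survives unchanged. Ingredient (ii) is exactly Lemma~\ref{lemma:noisylowdegreerational}, and crucially the denominator $Q(\theta)=\prod_{j,l}|1-i(1-\theta)\tan(\varphi_{jl}/2)|^2$ is \emph{identical} to the noiseless one because it depends only on the random seeds $\{H_j\}$, not on which Kraus operators are inserted along any noise trajectory. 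Ingredient (iii) is Theorem~\ref{thm:mainworstcasehardness}, which gives $\PH$-hardness of worst-case noisy output probabilities to precision $\exp(-O(m\log m))$ whenever $\gamma<\gamma_{\mathrm{th}}$ and $\mc A$ supports fault-tolerant error detection.

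Concretely, given an arbitrary target circuit $C_0$ over $\mc A$, I would draw Haar-random unitaries $\{H_j\}$ (matching the gate dimensions of $C_0$), form the Cayley-perturbed family $C(\theta)$ as in Definition~\ref{def:thetacayleyperturbation}, and note that $C(0)\sim\mc H_{\mc A}$ and $C(1)=C_0$. Choose $O(m^2)$ equally spaced points $\theta_i\in[0,\Delta]$ with $\Delta=O(1/m)$. Condition on the $\beta$-good event for $\{H_j\}$ with $\beta=O(1/m)$, which holds with large constant probability by Lemma~\ref{lemma:haardistributionrange} and gives the uniform bound $Q(\theta)\leq K=\exp(O(m\log m))$. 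Querying $\mc O$ on each $C(\theta_i)$ produces values $y_i$ satisfying $|y_i-\przero{C(\theta_i),\mc N}|\leq\delta$ with failure probability at most $\eta+O(m\Delta)+O(m\beta)$, which for suitable constants is bounded away from $1/4$ by a fixed $\eta'<1/4$; multiplying by $Q(\theta_i)$ gives noisy evaluations of the numerator polynomial $P(\theta)$ with error $\leq \delta K$ and the same failure probability.

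I would then feed $\{(\theta_i,y_i\cdot Q(\theta_i))\}$ into the robust Berlekamp-Welch theorem (Theorem~\ref{Thm:robustBW}) with an $\NP$ oracle. Since $P(\theta)$ has degree $O(m)$ and the extrapolation distance is $1/\Delta=O(m)$, the error at $\theta=1$ is bounded by $\delta K\cdot\exp(O(m\log\Delta^{-1})+O(m))=\delta\cdot\exp(O(m\log m))$. Using $Q(1)=1$ this yields an estimate of $\przero{C_0,\mc N}=P(1)$ to additive error $\exp(-O(m\log m))$ (for $\delta$ chosen with a sufficiently large constant in the exponent), via a $\BPP^{\NP^{\mc O}}$ procedure. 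Invoking Theorem~\ref{thm:mainworstcasehardness} completes the collapse.

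The main conceptual obstacle, which Lemma~\ref{lemma:noisylowdegreerational} already handles, is that a priori one might worry that noise destroys the algebraic structure exploited in the interpolation. The point I want to emphasize in writing up the proof is that the argument only uses two things about the noisy probability: linearity of the map $\theta\mapsto\przero{C(\theta),\mc N}$ in the Feynman-path expansion, and the fact that the denominator $Q(\theta)$ comes entirely from the Cayley transform of the seeds $\{H_j\}$ and is therefore independent of the noise trajectory. Once these are in hand, the reduction is a genuinely convex lift of the noiseless argument, and no part of the robust Berlekamp-Welch machinery or the Paturi-style error bounds needs to be revisited. The only genuinely new appeal is to Theorem~\ref{thm:mainworstcasehardness} for worst-case hardness, which in turn requires that $\mc A$ be deep/expressive enough to host an error-detecting code against $\mc N$ at rate $\gamma<\gamma_{\mathrm{th}}$ — this is precisely the extra hypothesis over Theorem~\ref{thm:averagecasehardness}.
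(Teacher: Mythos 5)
Your proposal matches the paper's proof essentially step for step: it invokes Lemma~\ref{lemma:noisylowdegreerational} for the rational structure of $\przero{C(\theta),\mc N}$, observes that the denominator $Q(\theta)$ and the total-variation bound of Lemma~\ref{lemma:circuittvd} are unchanged by fixed gate-independent noise, reuses the $\beta$-good conditioning and robust Berlekamp--Welch interpolation exactly as in Theorem~\ref{thm:averagecasehardness}, and closes via the noisy worst-case hardness of Theorem~\ref{thm:mainworstcasehardness}. The paper simply states ``the rest of the proof is the same as Theorem~\ref{thm:averagecasehardness}'' where you spell it out, so you have correctly reconstructed the intended argument in full.
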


\begin{proof}
Let $\mc O$ be an algorithm that correctly approximates $\przero{C,\mc N}$ of a random circuit $C\sim\mc H_{\mc A}$ up to additive error $\delta$, with success probability at least $1-\eta$ over the choice of $C$. In the following, we show that there exists a $\BPP^{\NP^{\mc O}}$ procedure that on input \emph{any} circuit $C_0$, computes $\przero{C_0,\mc N}$ up to additive error $\delta'=\delta\exp\left(O(m\log m)\right)$, with success probability at least $\frac{2}{3}$. The theorem statement then follows from Theorem~\ref{thm:mainworstcasehardness}, the worst-case hardness result.

Following the same reduction technique as in Theorem~\ref{thm:averagecasehardness}, we sample a set of random unitary gates $\{H_i\}$ and apply the $\theta$-Cayley perturbation on $C_0$ using the random seed $\{H_i\}$ as in Definition~\ref{def:thetacayleyperturbation} to get the perturbed circuit $C(\theta)$. By definition, we have $C(0)=C_1\sim \mc H_{\mc A}$ and $C(1)=C_0$.

By Lemma~\ref{lemma:noisylowdegreerational}, $\przero{C(\theta),\mc N}$ is a degree $(O(m),O(m))$ rational function in $\theta$. Let $P(\theta)$, $Q(\theta)$ be the numerator and denominator of $\przero{C(\theta),\mc N}$, respectively, then $\przero{C(\theta),\mc N}=\frac{P(\theta)}{Q(\theta)}$. Note that from the proof of Lemma~\ref{lemma:noisylowdegreerational}, we have that $Q(\theta)=\prod_{j=1}^m \prod_l\left|1-i(1-\theta)\tan\frac{\varphi_{jl}}{2}\right|^2$ with $Q(1)=1$.

The goal is to recover $\przero{C_0,\mc N}$ from values of $\przero{C(\theta),\mc N}$ for small $\theta$. To do this, first note that $Q(\theta)$ is a known polynomial whose value can be computed in time $O(m)$ for any $\theta$. Therefore, the problem can be reduced to a polynomial interpolation for $P(\theta)$, as 
\begin{equation}
    \przero{C_0,\mc N}=\przero{C(1),\mc N}=\frac{P(1)}{Q(1)}=P(1).
\end{equation}
The rest of the proof is therefore the same as in Theorem~\ref{thm:averagecasehardness}.
\end{proof}

\begin{figure}[t]
\tikzset{phase/.append style={blue}}
 \centering
 \begin{subfigure}[b]{0.49\textwidth}
     \centering
     	\begin{quantikz}[column sep=0.25cm,row sep=0.2cm]
		\lstick{$\ket{0}$} & \gate[2]{} & \phase{} & \qw 		& \phase{} & \gate[2]{} & \phase{} & \qw		& \phase{} & \meter{} \\
		\lstick{$\ket{0}$} & 			& \phase{} & \gate[2]{} & \phase{} &			& \phase{} & \gate[2]{} & \phase{} & \meter{} \\
		\lstick{$\ket{0}$} & \gate[2]{} & \phase{} & 			& \phase{} & \gate[2]{} & \phase{} &			& \phase{} & \meter{} \\
		\lstick{$\ket{0}$} & 	 		& \phase{} & \gate[2]{} & \phase{} &			& \phase{} & \gate[2]{} & \phase{} & \meter{} \\
		\lstick{$\ket{0}$} & \qw 		& \phase{} & 			& \phase{} & \qw	    & \phase{} & 			& \phase{} & \meter{}
		\end{quantikz}
     \caption{}
     \label{fig:localrandom}
 \end{subfigure}
 \begin{subfigure}[b]{0.49\textwidth}
     \centering
     	\begin{quantikz}[column sep=0.25cm,row sep=0.2cm]
		\lstick{$\ket{0}$} & \gate[5]{} & \phase{} & \gate[5]{} & \phase{} & \gate[5]{} & \phase{} & \gate[5]{}	& \phase{} & \meter{} \\
		\lstick{$\ket{0}$} & 			& \phase{} & 			& \phase{} &			& \phase{} &			& \phase{} & \meter{} \\
		\lstick{$\ket{0}$} & 			& \phase{} & 			& \phase{} & 			& \phase{} &			& \phase{} & \meter{} \\
		\lstick{$\ket{0}$} & 	 		& \phase{} & 			& \phase{} &			& \phase{} & 			& \phase{} & \meter{} \\
		\lstick{$\ket{0}$} & 	 		& \phase{} & 			& \phase{} & 		    & \phase{} & 			& \phase{} & \meter{}
		\end{quantikz}
     \caption{}
     \label{fig:globalrandom}
 \end{subfigure}
    \caption{Noisy random circuit sampling, where each white box is a unitary gate independently drawn from the Haar measure, and blue dots represent noise. (a) Experimental implementation of random circuit sampling, where the circuit consists of nearest-neighbor two-qubit gates in an alternating architecture. (b) A toy model for studying the convergence to uniform distribution, where each layer is one global random unitary gate.}
    \label{fig:randomcircuitsampling}
\end{figure}
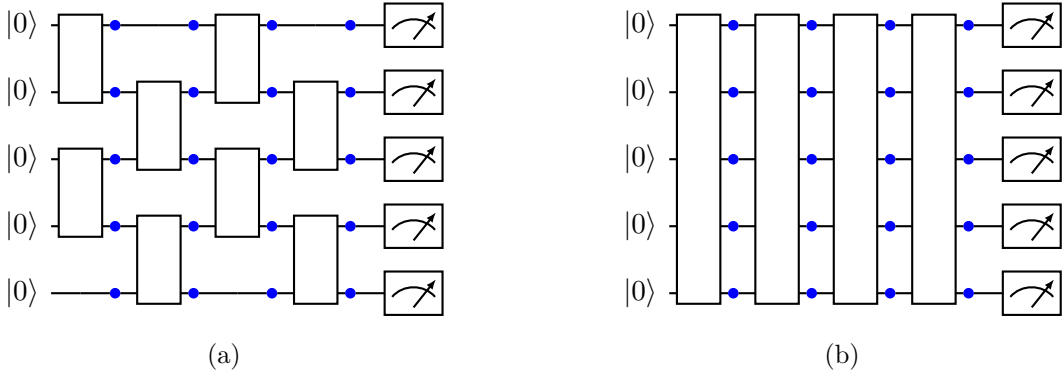

An important distinction between the noisy random circuit model and the ideal model brings interesting implications to our hardness result. In noisy random circuits, it was shown~\cite{aharonov1996limitations,gao2018efficient} that the output distribution converges to the uniform distribution very quickly as circuit depth increases. Therefore, trivially outputs $\frac{1}{2^n}$ is already a good approximation algorithm for $\przero{C,\mc N}$, while this is not the case for ideal random circuits, whose output probability has the Porter-Thomas distribution which is far from uniform in total variation distance. In other words, the amount of additive imprecision that we can tolerate in the hardness proof has an \emph{upper bound}: it cannot be larger than the distance between the noisy output distribution and the uniform distribution.

This behavior can be understood as follows: the decoherence caused by noise leaks quantum information to the environment (here we consider incoherent noise models), and the amount of information that remains in the system decreases as circuit depth increases, which eventually goes to 0 and corresponds to the maximally mixed state. It was first shown by Aharonov \textit{et al.}~\cite{aharonov1996limitations} that the output distribution of $\emph{any}$ circuit with i.i.d. depolarizing noise converges to the uniform distribution with rate $e^{-\gamma d}$, where $\gamma$ is the noise rate on each qubit at a time step, and $d$ is circuit depth. Gao and Duan~\cite{gao2018efficient} proved a similar convergence speed for random circuits with general i.i.d. Pauli noise. However, it was conjectured~\cite{boixo2018characterizing} that the actual convergence speed for noisy random circuits is much faster, with rate $e^{-\gamma nd}$ (or $e^{-\gamma O(m)}$), which was indeed observed in experiments~\cite{arute2019quantum}. While the proof of this behavior for the local random circuit model (Fig.~\ref{fig:randomcircuitsampling}a) is out of the scope for this paper, we conjecture that it is true at least for some parameter region of $(\gamma,n,d)$. Furthermore, we provide a rigorous analysis for the toy model with global random gates (Fig.~\ref{fig:randomcircuitsampling}b), which we present in Appendix~\ref{appendix:noisyconvergence}.

\begin{lemma}\label{lemma:noisyconvergence}
The output distribution of the global noisy random circuit model (Fig.~\ref{fig:randomcircuitsampling}b) converges to the uniform distribution with speed $e^{-\Omega(\gamma nd)}$. More specifically, let $p_e$ denote the output distribution of the noisy circuit with i.i.d. depolarizing noise with noise rate $\gamma$. Then
\begin{equation}
    \E\left[2^n \sum_x p_{e}^2(x)-1\right]\leq\exp(-c\gamma n (d-d_0)),
\end{equation}
where $c,d_0>0$ are universal constants, and the expectation is over the global random circuit ensemble.
\end{lemma}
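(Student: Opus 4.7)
The plan is to reduce the bound on $\E\!\left[2^n\sum_x p_e^2(x) - 1\right]$ to controlling the averaged purity $P_d := \E[\Tr(\rho_d^2)]$ after $d$ layers. The pinching inequality $\sum_x|\bra{x}\rho\ket{x}|^2 \leq \sum_{x,y}|\bra{x}\rho\ket{y}|^2 = \Tr(\rho^2)$ immediately yields $\E[2^n\sum_x p_e^2 - 1] \leq 2^n P_d - 1$, so it suffices to show that $2^n P_d - 1$ decays exponentially in $\gamma n d$. To this end, introduce the two-copy second moment $T_d := \E[\rho_d \otimes \rho_d]$ and the SWAP operator $F$ between the two copies, so that $\Tr(T_d) = 1$ and $\Tr(F\,T_d) = P_d$.

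Each layer applies a global Haar-random unitary $U$ followed by single-qubit depolarizing noise $\mathcal{N}_\gamma^{\otimes n}$. By the Haar second-moment formula (Schur--Weyl duality), the twirl $\E_U[U^{\otimes 2}\, T_{d-1}\, U^{\dagger\otimes 2}] = \alpha I + \beta F$ lies in the commutant of $U^{\otimes 2}$, where $(\alpha,\beta)$ is determined by $\Tr(T_{d-1})=1$ and $\Tr(F\,T_{d-1})=P_{d-1}$ via the standard formulas involving $D := 2^n$. The noise then acts: unitality gives $\mathcal{N}^{\otimes 2}(I) = I$, and writing $F = F_1^{\otimes n}$ with $F_1 = \tfrac{1}{2}(II + XX + YY + ZZ)$ together with $\mathcal{N}_\gamma(\sigma) = (1-\gamma)\sigma$ for every non-identity Pauli yields $\Tr\!\left(F\,\mathcal{N}^{\otimes 2}(F)\right) = (1+3(1-\gamma)^2)^n =: \mu$. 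Combining these pieces gives the closed one-dimensional linear recursion
\begin{equation}
\delta_d = \frac{\mu - 1}{D^2 - 1}\, \delta_{d-1}, \qquad \delta_d := P_d - 2^{-n},
\end{equation}
with initial condition $\delta_0 = 1 - 2^{-n}$ for $\rho_0 = \ketbra{0^n}$.

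Solving yields $\delta_d = \delta_0 \bigl((\mu-1)/(D^2-1)\bigr)^d$, and the contraction factor is bounded by $2\bigl(\tfrac{1 + 3(1-\gamma)^2}{4}\bigr)^n \leq 2\,e^{-c_0 \gamma n}$ for some constant $c_0 > 0$ (using $1 + 3(1-\gamma)^2 \leq 4\,e^{-3\gamma/2}$ for $\gamma$ below a small constant, which is a direct convexity check). Iterating, $2^n \delta_d \leq e^{n\ln 2 + d\ln 2 - c_0 \gamma n d}$, which rewrites as $e^{-c\gamma n (d - d_0)}$ for suitable universal constants $c, d_0 > 0$; the threshold $d_0 = O(1/\gamma)$ simply absorbs the $e^{n\ln 2}$ overhead coming from the loose $(D-1)$ factor in $\delta_0$. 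The main subtlety is that after noise acts, $T_d$ leaves $\mathrm{span}\{I,F\}$, but this is immaterial because the \emph{next} Haar twirl projects it back onto this commutant, so the two scalars $\Tr T_d$ and $\Tr(F\,T_d)$ form a closed system across layers; the rest is routine Haar second-moment bookkeeping.
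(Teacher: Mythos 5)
Your proposal is essentially the paper's computation in different clothing: the paper tracks the Pauli--Fourier coefficients $\delta(p,q)$ of $\E[\rho\otimes\rho]$ through alternating global Haar twirls and depolarizing layers, and your swap-operator/commutant bookkeeping produces the identical one-dimensional contraction --- your $\frac{\mu-1}{D^2-1}$ with $\mu=(1+3(1-\gamma)^2)^n$ is exactly the paper's decay coefficient $\beta$, and your closed recursion for $P_d-2^{-n}$ is correct. The one substantive difference is the readout step. The paper computes the classical collision probability exactly by restricting the final sum to the diagonal Paulis $p\in\{0,3\}^n$, which contributes a prefactor $\bigl((1+(1-\gamma)^2)^n-1\bigr)/(2^n+1)\leq e^{-\Omega(\gamma n)}$ that already cancels the $2^n$ normalization; this is why the paper gets $d_0=O(1)$ independent of $\gamma$. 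Your pinching inequality $\sum_x p_e(x)^2\leq\Tr(\rho^2)$ discards that damping, sums over all $4^n$ Paulis in the last layer, and leaves you with a bare $2^n$ (plus an avoidable $2^d$ from the loose bound $\frac{\mu-1}{D^2-1}\leq 2\mu/D^2$; note $\frac{\mu-1}{D^2-1}\leq\mu/D^2$ holds outright since $\mu\leq D^2$). Consequently you must take $d_0=O(1/\gamma)$, which matches the lemma's ``universal constants'' only if $\gamma$ is itself a constant --- true in the paper's regime, but strictly weaker than what is proved. Two small fixes: replace the claimed inequality $1+3(1-\gamma)^2\leq 4e^{-3\gamma/2}$ (which fails for $\gamma$ near $1$) by $\frac{1+3(1-\gamma)^2}{4}=1-(\tfrac{3\gamma}{2}-\tfrac{3\gamma^2}{4})\leq e^{-3\gamma/4}$, valid for all $\gamma\in[0,1]$; and if you want the lemma verbatim, keep the classical readout in the last layer as the paper does rather than pinching to the purity.
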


To apply this result, we can use a simple Markov's inequality which suggests that the output distribution of \emph{most} noisy quantum circuits satisfy $2^n \sum_x p_{e}^2(x)-1=e^{-\Omega(\gamma n d)}$. Then, for each of these circuits, the total variation distance between the noisy distribution and the uniform distribution can be bounded by
\begin{equation}
    D_{\mathrm{TV}}(p_e,p_{\mathrm{uniform}})=\frac{1}{2}\sum_{x}\left|p_e(x)-\frac{1}{2^n}\right|\leq \frac{1}{2}\sqrt{2^n\sum_x p_e^2(x)-1}
\end{equation}
via Cauchy–Schwarz inequality.

An immediate observation is that this convergence to uniformity behavior implies that our hardness result in the noisy case is essentially optimal. Indeed, for a family of noisy random circuits where the distance to uniform is bounded by $e^{-O(m)}$, our hardness result says that it is still hard to compute the output probability to within $e^{-O(m\log m)}$ additive imprecision, while it is trivial to achieve $e^{-O(m)}$ additive imprecision. The key point here is that our proof is noise and architecture agnostic, as long as they enables error detection. Therefore, it is impossible to improve our hardness result to have better robustness, as long as the agnostic property remains.

The same limitation also applies to the noiseless case. In our proof, the noisy hardness result is a direct generalization of the noiseless case due to the linearity and convex structure, which means that our proof for the noiseless case cannot be improved either, as long as it can still be generalized to the noisy case. Therefore, any attempt to address the hardness of approximate sampling via a reduction to computing output probabilities (which requires a robustness of $2^{-n}/\poly(n)$) in the ideal case cannot be natually adapted to noisy circuits. In this sense our result can be viewed as a complement of \cite{napp2020efficient}, which shows that the proof technique needs to be sensitive to depth because of an efficient simulation algorithm for shallow random circuits.

\section{Robust Berlekamp-Welch}\label{section:robustBW}

We prove the following theorem for robust polynomial interpolation. All polynomials considered here have real coefficients, and our results can be generalized to the complex case, as for polynomials with complex coefficients we can deal with the real and imaginary parts separately. Below our theorem is stated with success probability $\frac{2}{3}$, which can be amplified to $1-1/\expp(d)$ by taking the median of $\poly(d)$ independent experiments.

\begin{theorem}[Robust Berlekamp-Welch]\label{Thm:robustBW}
For a degree $d$ polynomial $P(x)$, suppose there is a set of data points $D=\{(x_i,y_i)\}$ such that $|D|=100d^2$ and $\{x_i\}$ is equally spaced in the interval $[0,\Delta]$ ($\Delta<1$). Furthermore, assume that each point $(x_i,y_i)$ satisfies
\begin{equation}\label{eq:datapoint}
    \Pr\left[\left|y_i-P(x_i)\right|\geq \delta\right]\leq\eta,
\end{equation}
where $\eta<\frac{1}{4}$ is a constant, then there exists a $\P^{\NP}$ algorithm which, takes $D$ as input, returns a number $p$ such that $|p-P(1)|\leq \delta e^{d\log \Delta^{-1} + O(d)}$, with success probability at least $\frac{2}{3}$.
\end{theorem}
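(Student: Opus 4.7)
The plan is to split the argument into two phases: a search phase that uses the $\NP$ oracle to exhibit some degree-$d$ polynomial $q$ consistent with most of the data, and an information-theoretic ``uniqueness up to divergence'' phase that shows any two degree-$d$ polynomials both consistent with most of the data cannot differ by more than $\delta \cdot e^{d\log\Delta^{-1}+O(d)}$ at $x=1$. Combining these with $p := q(1)$ gives the theorem, since $P$ itself will be a valid candidate so the bound applies to $|q(1)-P(1)|$.

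For the search phase, I would first show existence with high probability. Each of the $N = 100d^2$ samples is independently $\delta$-close to $P(x_i)$ with probability at least $1-\eta > 3/4$, so a Chernoff bound gives that with probability $1-e^{-\Omega(d^2)}$ the polynomial $P$ itself $\delta$-agrees with at least $(1-\eta')N$ of the data points for some fixed $\eta' \in (\eta,1/4)$. Call any degree-$d$ polynomial with this property \emph{valid}. A $\P^{\NP}$ search then produces one valid $q$: after bounding coefficients by $2^{O(d)}$ (inherited from the uniform bound on $P$), we binary-search each coefficient bit, at each step asking the $\NP$ oracle whether there still exists a valid completion. Existence of $P$ ensures a positive answer at every step, and any returned $q$ is valid by construction.

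The structural heart of the proof is the uniqueness bound. Set $r := q - P$, a polynomial of degree at most $d$. Since each of $q,P$ is $\delta$-valid, their agreement sets on the data overlap in at least $(1-2\eta')N \geq N/2 = 50d^2$ points, on which $|r(x_i)| \leq 2\delta$. I would then carry out two sub-steps. (i) Lift this pointwise bound to a uniform bound $\|r\|_{\infty,[0,\Delta]} = O(\delta)$. The idea is to combine the density of good points (at least half of $100d^2$ equally spaced points, hence every interior point in $[0,\Delta]$ is within distance $O(\Delta/d^2)$ of a good point) with the Markov brothers' inequality $\|r'\|_{\infty,[0,\Delta]} \leq d^2\|r\|_{\infty,[0,\Delta]}/\Delta$: if $|r|$ exceeded $2\delta\cdot K$ anywhere, then between that point and a nearby good sample the mean-value theorem would force $|r'|$ to exceed $K\cdot d^2/\Delta$ times a constant, contradicting Markov for $K$ sufficiently large. (ii) Extrapolate to $x=1$ via coefficient bounds: the rescaled polynomial $\tilde r(x) := r(\Delta x)$ is degree $d$ and uniformly $O(\delta)$ on $[0,1]$, so standard Chebyshev-type coefficient estimates give $|\tilde c_i| \leq O(\delta)\cdot 2^{O(d)}$, hence $|c_i| \leq O(\delta)\cdot 2^{O(d)}\Delta^{-i}$, and therefore
\begin{equation}
|r(1)| \;\leq\; \sum_{i=0}^{d}|c_i| \;\leq\; (d+1)\cdot O(\delta)\cdot 2^{O(d)}\cdot \Delta^{-d} \;=\; \delta\cdot e^{d\log \Delta^{-1}+O(d)},
\end{equation}
which is the claimed bound.

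The main obstacle is sub-step (i), the uniform bound on $[0,\Delta]$. The subtlety flagged in the overview is that good and bad sample points may be interspersed arbitrarily, so one cannot just interpolate through $d+1$ consecutive good points; the Markov-derivative argument has to be executed carefully to avoid an artificial error blowup from the bad points that happen to be near a would-be maximum of $|r|$. The paper's ``bunching'' induction (gradually swapping neighboring good/bad points without enlarging the uniform error) is designed exactly to reduce to the much easier case where the $\geq N/2$ good points form a contiguous block, in which the Markov/Chebyshev argument is clean. Everything else — the Chernoff existence proof, the $\NP$ bit-by-bit search, and the coefficient-bound extrapolation — is routine once the uniform bound is in hand.
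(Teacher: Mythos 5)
Your overall architecture matches the paper's: an $\NP$-oracle search for \emph{some} degree-$d$ polynomial consistent with most of the data, followed by an information-theoretic argument that any two such polynomials stay within $\delta e^{d\log\Delta^{-1}+O(d)}$ of each other at $x=1$, the latter decomposed into a uniform bound on $[0,\Delta]$ plus a coefficient-based extrapolation (your sub-step (ii) is essentially the paper's Lemma~\ref{lemma:longdistanceextrapolation} verbatim). However, there are two genuine gaps. The more serious one is sub-step (i). The mechanism you sketch --- ``every interior point of $[0,\Delta]$ is within distance $O(\Delta/d^2)$ of a good point, then apply the mean value theorem against Markov's inequality'' --- does not work, because the hypothesis only guarantees a \emph{constant fraction} of good points and says nothing about their placement: the bad points may form a contiguous block of length $\Theta(\eta'\Delta)$, so the nearest good point to the maximizer of $|r|$ can be a constant fraction of $\Delta$ away, and the Markov comparison then yields a vacuous inequality $M-2\delta\leq \Omega(d^2)\cdot M$. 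Moreover, even the target of sub-step (i) is overstated: the uniform bound on $[0,\Delta]$ cannot be $O(\delta)$ in general (a Chebyshev polynomial scaled to be $2\delta$-small on the first $98\%$ of the interval already reaches $\delta e^{\Theta(d)}$ at the right endpoint); the correct and sufficient statement, which is what the paper's Lemma~\ref{Lemma:zeph} proves, is $\delta e^{O(d)}$. You correctly identify that handling interspersed good/bad points is the crux and that a ``bunching'' reduction to the contiguous case is needed, but you do not supply that argument, and it is the main technical content of the proof (in the paper it proceeds by reflecting roots about the maximizer and then iteratively shifting roots to make the good points consecutive, checking at each step that the set of good points is preserved and the maximum only increases). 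As written, the heart of the theorem is asserted rather than proved.

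The second, more minor gap is your use of a Chernoff bound in the search phase: the hypothesis~\eqref{eq:datapoint} is a marginal guarantee for each point and carries no independence assumption, and in the intended application the events $\{|y_i-P(x_i)|\geq\delta\}$ are heavily correlated (all the $C(\theta_i)$ share one random seed). The paper instead applies Markov's inequality to the expected number of bad points, which needs no independence; you should do the same (this costs you in the constants --- it is why the paper works out the argument for a small explicit $\eta$ rather than all $\eta<1/4$ at once --- but it is the only step available under the stated hypotheses). Your coefficient-wise binary search with the $\NP$ oracle is a legitimate alternative to the paper's certificate-space search, provided you also justify a coefficient bound for \emph{every} valid candidate $q$ (not just for $P$) so the search space is bounded; that bound again follows only after the uniform-bound lemma is in place.
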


\begin{remark}\label{rmk:rakhmanov}
    Note that our result can be improved to using only $O(d)$ data points, by applying a result of Rakhmanov~\cite{Rakhmanov2007bounds}. However, the number of data points we use does not affect our main results, as long as it is polynomial in $d$. Here we give a simpler proof using $O(d^2)$ data points.
\end{remark}

\begin{proof}
For simplicity, our proof is presented by specifying $\eta=\frac{1}{300}$, which can be naturally generalized to any $\eta<\frac{1}{4}$. Say a point $(x_i,y_i)$ is correct if $|y_i-P(x_i)|\leq \delta$. By Eq.~\eqref{eq:datapoint}, the expected number of wrong points is less than $\frac{1}{300}$ fraction. By Markov's inequality, with probability at least $\frac{2}{3}$, $0.99$ fraction of points in $D$ are correct. In the following we show that conditioned on $0.99$ fraction of points being correct (denote the set of correct points as $F$), there exists a deterministic $\P^{\NP}$ algorithm that computes a number $p$ that satisfies $|p-P(1)|\leq \delta e^{d\log \Delta^{-1} + O(d)}$. This implies the statement of the theorem.

Consider the following computational problem: 

\begin{problem}\label{prob:interpolation}
    Given $100d^2$ points $D$, decide if there exists at least $0.99$ fraction of them (denoted as $F'$) and a degree $d$ polynomial $Q(x)$ that satisfy $|y_i-Q(x_i)|\leq \delta,\forall(x_i,y_i)\in F'$.
\end{problem}
Problem~\ref{prob:interpolation} is in $\NP$ because a certificate $(F',Q)$ can be efficiently verified by checking each point in $F'$.

When giving $D$ in the statement of the theorem as input, this problem has a satisfying certificate $(F,P)$. Having access to a $\NP$ oracle, as this problem is guaranteed to have a solution, we can find a certificate $(F',Q)$ by performing binary search in the certificate space, which may be different from $(F,P)$. However, in the following we show that any solution will satisfy the requirement of the theorem, and the algorithm simply outputs $Q(1)$.

Let $R(x)=P(x)-Q(x)$. As $|F\cap F'|\geq 0.98|D|$, there are at least 0.98 fraction of points in $D$ that satisfies
\begin{equation}
    |R(x_i)|\leq |P(x_i)-y_i|+|Q(x_i)-y_i|=2\delta.
\end{equation}
Recall that $D$ is a set of equally spaced points in $[0,\Delta]$. Lemma~\ref{Lemma:zeph} says that in such condition $R(x)$ can be uniformly bounded in the interval $[0,\Delta]$:
\begin{equation}\label{eq:firstmissing}
    |R(x)|\leq \delta e^{O(d)},\,\,\,\,\forall x\in[0,\Delta].
\end{equation}
Finally, as $R$ is uniformly bounded in $[0,\Delta]$, we can use Lemma~\ref{lemma:longdistanceextrapolation} to bound the error at $x=1$,
\begin{equation}\label{eq:secondmissing}
    |R(1)|=|P(1)-Q(1)|\leq \delta e^{O(d)}e^{d\log\Delta^{-1} + d\log 8}=\delta e^{d\log \Delta^{-1} + O(d)},
\end{equation}
which concludes the proof.
\end{proof}

The above proof needs several additional results which we develop in the following. First, recall the following result of Paturi, which gives a bound of how fast a polynomial can grow when it is uniformly bounded in a small interval.

\begin{lemma}[Paturi~\cite{Paturi1992on}]\label{Lemma:Paturi}
Let $P$ be a degree $d$ polynomial which satisfies $|P(x)|\leq \varepsilon$ for $x\in[0,\Delta]$ ($\Delta<1$). Then we have $|P(1)|\leq \varepsilon e^{4d\Delta^{-1}}$.
\end{lemma}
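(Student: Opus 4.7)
The plan is to derive Paturi's bound from the classical extremal property of Chebyshev polynomials: among all degree-$d$ polynomials $Q$ satisfying $|Q(u)| \le 1$ on $[-1,1]$, the Chebyshev polynomial $T_d$ grows fastest outside this interval, i.e., for any $u_0 \notin [-1,1]$, $|Q(u_0)| \le |T_d(u_0)|$. This is the standard equi-oscillation theorem: if some $Q$ violated the bound at $u_0$, the difference $T_d - (T_d(u_0)/Q(u_0)) Q$ would have $d+1$ sign changes on $[-1,1]$ while being a degree-$d$ polynomial, a contradiction.

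First, I would reduce to the normalized interval via the affine bijection $L(x) = 2x/\Delta - 1$ sending $[0,\Delta] \to [-1,1]$. Setting $\tilde P(u) := P(L^{-1}(u))$, I get a degree-$d$ polynomial with $|\tilde P(u)| \le \varepsilon$ on $[-1,1]$. The target point $x=1$ maps to $u_\star = 2/\Delta - 1$, which lies to the right of $1$ since $\Delta < 1$. Applying the extremal property to $\tilde P / \varepsilon$ yields
\begin{equation}
|P(1)| = |\tilde P(u_\star)| \le \varepsilon \cdot |T_d(u_\star)|.
\end{equation}

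The remaining step is a direct calculation. For $u \ge 1$ one has the closed form $T_d(u) = \cosh(d \cdot \mathrm{arccosh}(u)) \le e^{d \cdot \mathrm{arccosh}(u)}$, and $\mathrm{arccosh}(u) = \log(u + \sqrt{u^2 - 1}) \le \log(2u)$. Plugging in $u_\star \le 2/\Delta$ gives $|T_d(u_\star)| \le e^{d \log(4/\Delta)} = (4/\Delta)^d$. Finally, using $\log y \le y$ for $y \ge 1$, and observing $4/\Delta \ge 1$ since $\Delta < 1$, I conclude $\log(4/\Delta) \le 4/\Delta$ and therefore $|P(1)| \le \varepsilon e^{4d/\Delta}$, as claimed.

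The only nontrivial ingredient is the Chebyshev extremal theorem itself, which is the true "obstacle" (in the sense of being the non-obvious input); everything else is bookkeeping. It is worth noting, as the paper itself observes in Section~\ref{sec:rescaling}, that the intermediate bound $(4/\Delta)^d = e^{d \log(4/\Delta)}$ obtained before the final simplification is sharper than the clean form $e^{4d/\Delta}$ and is exactly what one would want to exploit in the small-$\Delta$ regime — so a proof-writer might prefer to record the stronger statement for later use rather than immediately coarsen it.
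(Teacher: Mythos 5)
Your proof is correct. The paper itself does not prove this lemma — it simply points to ``Fact 2 and Corollary 2'' of Paturi's original paper — so what you have done is supply the self-contained argument that the paper outsources. Your route (normalize $[0,\Delta]$ to $[-1,1]$ by an affine map, invoke the Chebyshev extremal theorem to dominate the rescaled polynomial by $T_d$, and then bound $T_d$ outside $[-1,1]$ via $T_d(u)=\cosh(d\,\mathrm{arccosh}\,u)\le (2u)^d$) is the standard derivation and is in substance the same argument that lives in Paturi's paper; the final coarsening $\log(4/\Delta)\le 4/\Delta$ is what trades the sharper $(4/\Delta)^d$ for the clean $e^{4d/\Delta}$ stated in the lemma.

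One small inaccuracy in your parenthetical justification of the Chebyshev extremal property: $T_d$ attains $\pm 1$ alternately at $d+1$ extremal points in $[-1,1]$, which forces $T_d-(T_d(u_0)/Q(u_0))Q$ to have $d$ sign changes (hence $d$ zeros) inside $[-1,1]$, not $d+1$; the extra zero needed to exceed the degree comes from $u_0$ itself, where the difference vanishes by construction. This does not affect correctness since you are citing the classical theorem rather than relying on the sketch, but the count is worth fixing. Your closing observation — that the intermediate bound $(4/\Delta)^d=e^{d\log(4/\Delta)}$ is strictly stronger than $e^{4d/\Delta}$ in the small-$\Delta$ regime and is exactly what the paper later exploits (Lemma~\ref{lemma:longdistanceextrapolation} gives $e^{d\log\Delta^{-1}+d\log 8}$, same shape) — is accurate and well taken.
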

\begin{proof}
This is implied by Fact 2 and Corollary 2 of \cite{Paturi1992on}.
\end{proof}

The first missing step in the proof of Theorem~\ref{Thm:robustBW}, which is in Eq.~\eqref{eq:firstmissing}, is to show the following: if a degree $d$ polynomial is bounded by $\delta$ for a constant fraction of a set of uniformly spaced points in $[0,\Delta]$, then the maximum value of the polynomial in the interval $[0,\Delta]$ is at most $\delta e^{O(d)}$. We show this in the following lemma, which is the main technical part of the full proof.

\begin{lemma}\label{Lemma:zeph}
Let $P$ be a degree $d$ polynomial which satisfies $|P(x)|\leq 1$ for any $x\in A$, where $A\subseteq B$ and $B$ is a set of equally spaced points in $[0,1]$ such that $|B|=100d^2$ and $|A|\geq 0.98 |B|$. Then $P$ satisfies $|P(x)|\leq 2^{O(d)},\forall x\in[0,1]$.
\end{lemma}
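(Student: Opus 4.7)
The plan is a two-stage argument: first handle a bunched configuration of bad points in which the classical discrete-to-continuous Markov estimate and Paturi's inequality apply cleanly, then reduce the general case to the bunched one via a swap-based induction.

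\textbf{Stage 1 (Bunched case).} First I would prove the lemma under the assumption that the $\leq 2d^2$ points of $B \setminus A$ are all consecutive and located at one end of $B$, say the right end. In that case $P$ is bounded by $1$ at $\geq 98 d^2$ consecutive equally-spaced points of $B$ inside a sub-interval $[0,c]$ with $c \geq 0.98$. Because the spacing $1/(100d^2)$ is well below the Markov threshold $\Theta(1/d^2)$, a standard Ehlich--Zeller / Markov argument (the one the paper refers to as ``Markov's inequality for the maximum derivative of polynomials'') gives $\|P\|_{[0,c]} \leq 2$. Then Paturi's inequality (Lemma~\ref{Lemma:Paturi}), applied after a linear rescaling so that $[0,c]$ plays the role of $[0,\Delta]$ and the extrapolation ratio is $1/c = O(1)$, yields $\|P\|_{[0,1]} \leq 2\cdot e^{O(d)} = 2^{O(d)}$.

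\textbf{Stage 2 (Reduction to the bunched case).} For a general configuration with up to $2d^2$ interspersed bad points, I would inductively move the bad points toward the right end of $B$ one swap at a time. The key claim is a \emph{swap lemma}: if the uniform bound $\|P\|_{[0,1]} \leq M$ holds over all degree-$\leq d$ polynomials and all configurations with bad positions equal to some set $S$, then $\|P\|_{[0,1]} \leq CM$ holds over all configurations whose bad-position set $S'$ differs from $S$ by exchanging a single bad point with an adjacent good one. Here $C$ should be an absolute constant whose origin is a Chebyshev-type comparison: replacing one evaluation constraint by an adjacent one is a local perturbation that a degree-$d$ polynomial can absorb at constant cost. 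Applied to any configuration, a carefully ordered sequence of such swaps eventually produces the bunched configuration of Stage 1, so combining the swap lemma with Stage 1 yields the desired $2^{O(d)}$ bound.

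\textbf{Main obstacle.} The chief difficulty is controlling the cumulative blowup in Stage 2. A naive per-swap constant-factor loss, multiplied over the $O(d^2)$ swaps generally required to bunch all bad points, would give $2^{O(d^2)}$ — exponentially too large. The crucial insight must therefore be that one does not chain bounds from swap to swap. Instead, after each swap (or batch of swaps) one re-invokes Stage 1 on a modified configuration, so that the cost per swap enters additively in the exponent rather than multiplicatively, and the total cost telescopes to $2^{O(d)}$. Designing the inductive order on the bad-point positions to enable this telescoping, and verifying that the swap lemma can indeed be applied locally without blowing up the induction hypothesis, is the delicate technical core of the argument. Once this bookkeeping is in place, the bound $\|P\|_{[0,1]} \leq 2^{O(d)}$ follows, completing the proof and feeding directly into Theorem~\ref{Thm:robustBW} via the subsequent application of Lemma~\ref{lemma:longdistanceextrapolation} in Eq.~\eqref{eq:secondmissing}.
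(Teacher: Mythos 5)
Your Stage 1 matches the paper's base case exactly: bound $P$ on the sub-interval containing the consecutive good points via the discrete-to-continuous Markov argument (Lemma~\ref{lemma:bernsteinmarkov}), then extrapolate with Paturi (Lemma~\ref{Lemma:Paturi}) over an $O(1)$ ratio. The problem is Stage 2. Your reduction mechanism --- a ``swap lemma'' asserting that exchanging one bad evaluation point with an adjacent good one costs at most a constant factor $C$ in the uniform bound --- is not established (you offer only a heuristic ``Chebyshev-type comparison''), and even granting it, you correctly observe that chaining it over the $O(d^2)$ swaps needed to bunch the bad points gives $C^{O(d^2)}=2^{O(d^2)}$, which is fatally too large. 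Your proposed fix, that the per-swap cost should somehow ``enter additively in the exponent'' and ``telescope,'' is exactly the missing idea: you name the obstacle but do not overcome it. As written, the proof does not close.

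The paper's actual Stage 2 sidesteps this accumulation entirely by operating on the \emph{polynomial} rather than on the \emph{bound}. Writing $P(x)=L\prod_k(x-r_k)$, one first reflects all roots with real part exceeding the maximizer $x^*$ across $x^*$, producing $Q$ with $\max|Q|\geq\max|P|$, with $|Q|\leq|P|$ on $[0,x^*]$, and with its maximum attained at $x=1$; one then restricts attention to the good points in $[0,\tfrac12]$. After that, each ``bunching'' step shifts the roots with real part above a chosen threshold $y$ left by one grid spacing $\delta$. The key is that this operation is lossless: the new polynomial $R$ is still bounded by $1$ at a rearranged (more consecutive) set of good points of the same cardinality, and $|R(1)|\geq|Q(1)|$ because every factor $|1-r_k'|$ only grows. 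So the chain of maxima is monotonically nondecreasing with \emph{no} multiplicative constant per step, and the final polynomial is in the bunched configuration where Stage 1 applies. That root-manipulation idea --- comparing different polynomials evaluated at shifted point sets, rather than the same polynomial under different constraint sets --- is the ingredient your proposal lacks, and without it (or a genuinely worked-out telescoping scheme) the argument has a real gap.
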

\begin{proof}
Denote $A$ as good points at which $P$ is bounded, and denote $B-A$ as bad points. We first prove the bound under a special case, where the good points are exactly the first 0.98 fraction. More specifically, for the set of uniformly spaced points
\begin{equation}
    B=\{0,\delta,2\delta,\dots,(100d^2-1)\delta\}\subseteq [0,1]
\end{equation}
where $\delta=\frac{1}{100d^2}$, suppose the polynomial is bounded at the first 0.98 fraction of points,
\begin{equation}
    |P(x)|\leq 1,\,\,\,\,\forall x\in A=\{0,\delta,\dots,(98d^2-1)\delta\}.
\end{equation}
When a polynomial is bounded at consecutive uniformly spaced points in an interval, it can also be uniformly bounded at all points in that interval. Using Lemma~\ref{lemma:bernsteinmarkov} we get
\begin{equation}
    |P(x)|\leq O(1),\,\,\,\,\forall x\in [0,(98d^2-1)\delta].
\end{equation}
As $P(x)$ is uniformly bounded on a constant subinterval of $[0,1]$, we can then use Lemma~\ref{Lemma:Paturi} to obtain the desired bound $|P(x)|\leq 2^{O(d)},\forall x\in[0,1]$. Also note that both inequalities are saturated by Chebyshev polynomial, so the bound is tight.

We proceed by proving the bound for the general case where $A$ can be arbitrarily distributed in $B$. Let $M=\max_{x\in[0,1]}|P(x)|$ and suppose the maximum is achieved at $x^*\in[\frac{1}{2},1]$. Consider the roots of $P$
\begin{equation}
    P(x)=L\prod_{k=1}^d (x-r_k),\,\,\,\,L>0,
\end{equation}
where $r_k$ can have real and imaginary parts. We construct a new polynomial $Q(x)=L\prod_{k=1}^d (x-r_k')$ according to the following rule:
\begin{equation}
    r_k'=\begin{cases}r_k,&\Re(r_k)\leq x^*,\\
    2x^*-\Re(r_k)+i\Im(r_k),&\Re(r_k)> x^*.\end{cases}
\end{equation}
Let $M'=\max_{x\in[0,1]}|Q(x)|$. We show that this construction is useful because $Q(x)$ has the following properties:
\begin{enumerate}
    \item $\deg(Q)=\deg(P)=d$.\\
    \emph{Proof.} By construction.
    \item $M'\geq M$.\\
    \emph{Proof.} $M'\geq |Q(x^*)|=|P(x^*)|=M$.
    \item $|Q(x)|\leq |P(x)|,\forall x\in[0,x^*]$.\\
    \emph{Proof.} This follows because $|x-r_k'|\leq |x-r_k|,\forall k,\forall x\in[0,x^*]$.
    \item $M'=|Q(1)|$.\\
    \emph{Proof.} $|Q(x)|$ is monotonic increasing at $[x^*,1]$ because $\forall x^*\leq x<y\leq 1$, we have $|x-r_k'|\leq |y-r_k'|$. Also, $\forall x\in[0,x^*],|Q(x)|\leq |P(x)|\leq M=|Q(x^*)|$. This suggests that the maximum occurs at $x=1$.
\end{enumerate}
According to property 2, we proceed by proving an upper bound for $M'$. Let $B'=\{0,\delta,2\delta,\dots,(50d^2-1)\delta\}\subseteq[0,\frac{1}{2}]$. Among these points, at least $0.96$ fraction of them are good for $Q(x)$, denoted by $A'$. More specifically,
\begin{equation}
    |Q(x)|\leq |P(x)|\leq 1,\forall x\in A'\subseteq B'.
\end{equation}
If $A'$ is a consecutive set starting from 0 (i.e. $A'=\{0,\delta,2\delta,\dots,(|A'|-1)\delta\}$), then we can obtain the desired bound by repeating the argument at the beginning of this proof. When this is not the case, there exists $y\in B'$ such that $y$ is bad and $y+\delta$ is good. We proceed by converting $Q(x)$ to a new polynomial to eliminate this kind of points, while maintaining the good properties of $Q$.

Let $y$ be the maximum element in $B'$ such that $y$ is bad and $y+\delta$ is good (assuming such $y$ exists). By definition, all good points larger than $y$ is consecutive, denoted as
\begin{equation}
    A'\cap[y,\frac{1}{2}]=\{y+\delta,y+2\delta,\dots,y+T\delta\}
\end{equation}
for some $T\geq 1$. We perform the following operation on $Q(x)=L\prod_{k=1}^d (x-r_k)$, resulting in a new polynomial $R(x)=L\prod_{k=1}^d (x-r_k')$:
\begin{equation}
    r_k'=\begin{cases}r_k,&\Re(r_k)\leq y,\\
    \Re(r_k)-\delta+i\Im(r_k),&\Re(r_k)> y.
    \end{cases}
\end{equation}
We prove the following properties for $R$.
\begin{enumerate}\addtocounter{enumi}{4}
    \item $|R(x-\delta)|\leq 1,\forall x\in A'\cap[y,\frac{1}{2}]$.\\
    \emph{Proof.} $\forall x\in A'\cap[y,\frac{1}{2}]$, notice that $|x-\delta-r_k'|=|x-r_k|$ when $\Re(r_k)> y$, and $|x-\delta-r_k'|\leq|x-r_k|$ when $\Re(r_k)\leq y$. Therefore $|R(x-\delta)|\leq |Q(x)|\leq 1$.
    \item $|R(x)|\leq 1,\forall x\in A'\cap[0,y]$.\\
    \emph{Proof.} $\forall x\in A'\cap[0,y]$, notice that $|x-r_k'|=|x-r_k|$ when $\Re(r_k)\leq y$, and $|x-r_k'|\leq|x-r_k|$ when $\Re(r_k)> y$. Therefore $|R(x)|\leq |Q(x)|\leq 1$.
\end{enumerate}
From property 5-6 we conclude that $R(x)$ has the same number of good points as $Q(x)$ (if some point outside of $(A'\cap [0,y])\cup\{y,y+\delta,\dots,y+(T-1)\delta\}$ is good for $R$, we still label it as bad). As $y$ is good for $R$, we also have $\max\{z\in B':z+\delta\in A'\}<y$. Finally we show that $R$ has a larger maximum.
\begin{enumerate}\addtocounter{enumi}{6}
    \item Let $M''=\max_{x\in[0,1]}|R(x)|$. Then $M''\geq M'$.\\
    \emph{Proof.} As $|1-r_k'|\geq |1-r_k|$, we have $M''\geq |R(1)|\geq |Q(1)|=M'$.
\end{enumerate}
Next, we repeat the above process until $\{z\in B':z+\delta\in A'\}$ is empty, and denote the resulting polynomial as $\tilde{R}$, for which property 5-7 still holds. As $\tilde{R}$ is bounded by 1 at $\{0,\delta,2\delta,\dots,(48d^2-1)\delta\}$, we obtain
\begin{equation}
    M'\leq \max_{x\in[0,1]}|\tilde{R}(x)|\leq 2^{O(d)}
\end{equation}
by repeating the argument at the beginning of the proof, which concludes the full proof.
\end{proof}

A missing ingredient in the above proof is to show that when a polynomial in bounded on a set of $O(d^2)$ equally spaced points in an interval, then it can be uniformly bounded on that interval. As discussed in Remark~\ref{rmk:rakhmanov}, the number of points needed can be improved to $O(d)$ by using a powerful result of Rakhmanov~\cite{Rakhmanov2007bounds}.

\begin{lemma}\label{lemma:bernsteinmarkov}
Let $P$ be a degree $d$ polynomial which satisfies $|P(x)|\leq c$ for equally spaced points $x\in\{0,\frac{a}{N},\frac{2a}{N},\dots,a\}$ in the interval $[0,a]$, where $\frac{d^2}{N}\leq 1-\varepsilon$. Then $P$ is uniformly bounded as $|P(x)|\leq\frac{c}{\varepsilon},\forall x\in[0,a]$.
\end{lemma}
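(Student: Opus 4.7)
The plan is to apply Markov's inequality for polynomial derivatives, which says that any degree-$d$ polynomial $P$ bounded by $M$ on an interval $[0,a]$ satisfies $|P'(x)| \leq (2d^2/a) M$ on that interval. This will let me convert ``boundedness on a dense grid'' into ``boundedness on the whole interval'' via a simple mean value argument.

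First I would let $M := \sup_{x \in [0,a]} |P(x)|$ and pick $x^\star \in [0,a]$ where the supremum is attained (or nearly attained). Since the grid $\{0, a/N, 2a/N, \dots, a\}$ partitions $[0,a]$ into subintervals of length $a/N$, there exists a grid point $x_i$ with $|x^\star - x_i| \leq a/(2N)$. By Markov's inequality applied to $P$ on $[0,a]$ (which a priori is only bounded by $M$), the mean value theorem gives
\begin{equation}
|P(x^\star) - P(x_i)| \;\leq\; \frac{a}{2N} \cdot \sup_{[0,a]} |P'(x)| \;\leq\; \frac{a}{2N} \cdot \frac{2d^2}{a} M \;=\; \frac{d^2}{N}\, M.
\end{equation}

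Combining this with $|P(x_i)| \leq c$ (hypothesis at the grid points) and $|P(x^\star)| = M$ yields
\begin{equation}
M \;\leq\; |P(x_i)| + \frac{d^2}{N}\, M \;\leq\; c + \frac{d^2}{N}\, M,
\end{equation}
so $(1 - d^2/N)\, M \leq c$. Using the hypothesis $d^2/N \leq 1 - \varepsilon$, i.e. $1 - d^2/N \geq \varepsilon$, I conclude $M \leq c/\varepsilon$, which is the claimed uniform bound.

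The only subtle point is the circularity in invoking Markov's inequality with the bound $M$, which is itself what we are trying to bound. This is not actually a problem: $M$ is finite since $P$ is continuous on a compact interval, so the derivation is a valid self-consistency inequality that rearranges to $M \leq c/\varepsilon$. I would also note that the constants here are not tight; as remarked in the paper, Rakhmanov's bound~\cite{Rakhmanov2007bounds} would allow $N = O(d)$ instead of $O(d^2)$, but the Markov-based argument suffices for everything used in Theorem~\ref{Thm:robustBW}.
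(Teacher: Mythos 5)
Your proof is correct and is essentially identical to the paper's: both apply Markov's inequality $\max_{[0,a]}|P'|\leq \frac{2d^2}{a}\max_{[0,a]}|P|$, locate the extremum within $a/(2N)$ of a grid point, and rearrange the resulting self-consistency inequality $M\leq c+\frac{d^2}{N}M$ to get $M\leq c/\varepsilon$. Your explicit remark that the apparent circularity is harmless because $M$ is finite is a worthwhile clarification, but the argument is the same.
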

\begin{proof}
To prove a uniform bound for $P(x)$, we use a Markov's inequality~\cite{lorentz2005approximation} to bound the maximum derivative,
\begin{equation}
    \max_{x\in[0,a]}|P'(x)|\leq \frac{2d^2}{a}\max_{x\in[0,a]}|P(x)|.
\end{equation}
Let $M=\max_{x\in[0,a]}|P(x)|$ which is achieved at $t\frac{a}{N}\leq x\leq (t+1)\frac{a}{N}$. Then
\begin{nalign}
    \frac{2d^2}{a}M&\geq \max_{x\in[0,a]}|P'(x)|\geq (M-c)\frac{2N}{a},
\end{nalign}
which gives $M\leq \frac{c}{\varepsilon}$.
\end{proof}

We are now ready to fill in the final missing part in the proof of Theorem~\ref{Thm:robustBW} which is in Eq.~\eqref{eq:secondmissing}. A first idea is to directly apply Paturi's Lemma~\ref{Lemma:Paturi} which gives the final error bound $\delta e^{O(d\Delta^{-1})}$. However, this is clearly over pessimistic: for degree $d$ polynomials, the error growth should not be much bigger than $(1/\Delta)^d$ when $\Delta$ is very small. Based on this intuition, our following result gives an improvement for the error bound of long distance polynomial extrapolation.

\begin{lemma}\label{lemma:longdistanceextrapolation}
Let $P$ be a degree $d$ polynomial that satisfies $|P(x)|\leq \delta$, $\forall x\in[0,\Delta]$ ($\Delta<1$). Then we have
\begin{equation}
    |P(1)|\leq \delta e^{d\log\Delta^{-1} + d\log 8}.
\end{equation}
\end{lemma}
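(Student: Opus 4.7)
The plan is to prove the bound via Lagrange interpolation through $d+1$ carefully chosen nodes inside $[0,\Delta]$, which lets us avoid invoking Chebyshev extremal polynomial machinery and directly exposes the $\Delta^{-d}$ scaling. Concretely, I would pick equispaced nodes $x_j = j\Delta/d$ for $j=0,1,\dots,d$ and write
\begin{equation}
P(x) \;=\; \sum_{j=0}^{d} P(x_j)\,\ell_j(x), \qquad \ell_j(x) \;=\; \prod_{k\neq j}\frac{x-x_k}{x_j-x_k},
\end{equation}
which is exact because $P$ has degree at most $d$.

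Next I would evaluate at $x=1$ and take absolute values, using $|P(x_j)|\leq \delta$ (by hypothesis, since $x_j\in[0,\Delta]$) and $|1-x_k|\leq 1$ (since $x_k\in[0,\Delta]\subseteq[0,1)$). The denominator is purely combinatorial: $|x_j-x_k|=|j-k|\Delta/d$, so $\prod_{k\neq j}|x_j-x_k|^{-1}=(d/\Delta)^d/(j!(d-j)!)$. Summing over $j$ gives
\begin{equation}
|P(1)|\;\leq\;\delta\cdot\Bigl(\tfrac{d}{\Delta}\Bigr)^{d}\sum_{j=0}^{d}\frac{1}{j!(d-j)!}\;=\;\delta\cdot\Bigl(\tfrac{d}{\Delta}\Bigr)^{d}\cdot\frac{2^{d}}{d!}.
\end{equation}

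Finally I would close the argument with the elementary Stirling-type estimate $d^{d}/d!\leq e^{d}$, yielding
\begin{equation}
|P(1)|\;\leq\;\delta\cdot (2e)^{d}\,\Delta^{-d}\;\leq\;\delta\,e^{d\log\Delta^{-1}+d\log 8},
\end{equation}
as required, since $2e<8$. There is no real obstacle here beyond bookkeeping the constants; the only subtle point is remembering to exploit $|1-x_k|\leq 1$ rather than bounding $|1-x_k|$ by $1+\Delta$, so that the $2^d$ factor comes purely from $\sum_j \binom{d}{j}$ and not from any additional source. This proof is self-contained, uses only real-polynomial first principles, and — as advertised in the preceding discussion — replaces the exponential-in-$d\Delta^{-1}$ Paturi bound with a bound that scales polynomially in $\Delta^{-1}$, which is the key quantitative improvement exploited in Theorem~\ref{thm:averagecasehardness}.
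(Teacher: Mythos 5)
Your proof is correct, and it takes a genuinely different route from the paper's. The paper first maps $[0,\Delta]$ affinely onto $[-1,1]$ via $Q(x)=P\bigl(\tfrac{x+1}{2}\Delta\bigr)$, invokes the coefficient bound $\sum_i|a_i|\leq 4^d\max_{[-1,1]}|Q|$ (Lemma~\ref{lemma:boundedcoefficient}, which ultimately rests on Chebyshev extremal properties), and then evaluates the monomial expansion at $\tfrac{2}{\Delta}-1$, producing the factor $8^d\Delta^{-d}$. You instead interpolate $P$ exactly through $d+1$ equispaced nodes in $[0,\Delta]$ and bound the Lagrange basis functions at $x=1$ purely combinatorially: the computation $\prod_{k\neq j}|x_j-x_k|=(\Delta/d)^d\,j!\,(d-j)!$ is right, the numerator bound $|1-x_k|\leq 1$ uses $\Delta<1$ as you note, the sum $\sum_j 1/(j!(d-j)!)=2^d/d!$ is right, and $d^d/d!\leq e^d$ closes it with the constant $(2e)^d\leq 8^d$. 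Your argument is arguably more elementary (no appeal to an external coefficient lemma), yields a marginally better constant ($2e$ versus $8$ in the base of the exponent), and in fact only uses the hypothesis at $d+1$ points of $[0,\Delta]$ rather than on the whole interval. The only bookkeeping caveat is the degenerate case $d=0$, where the node set collapses but the claim is trivial since $P$ is constant.
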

\begin{proof}
Let $Q(x)=P\left(\frac{x+1}{2}\Delta\right)$. Then $Q(x)$ is a degree $d$ polynomial that satisfies $|Q(x)|\leq\delta$, $\forall x\in[-1,1]$. Next we use a well-known fact about polynomials, namely that the coefficients of bounded polynomials are at most exponential in the degree. Let $Q(x)=\sum_{i=0}^d a_i x^i$. Lemma \ref{lemma:boundedcoefficient} implies that
\begin{equation}
    \sum_{i=0}^d |a_i|\leq 4^d \delta.
\end{equation}
Then we have
\begin{nalign}
    |P(1)|&=\left|Q\left(\frac{2}{\Delta}-1\right)\right|\\
    &\leq \sum_{i=0}^d |a_i|\left(\frac{2}{\Delta}-1\right)^i\\
    &\leq \sum_{i=0}^d |a_i|\left(\frac{2}{\Delta}\right)^d\\
    &\leq \delta\frac{8^d}{\Delta^d}=\delta e^{d\log\Delta^{-1}+d\log 8}.
\end{nalign}

\end{proof}

\begin{lemma}[Lemma 4.1 in \cite{sherstov2013making}]\label{lemma:boundedcoefficient}
Let $P(x)=\sum_{i=0}^d a_i x^i$ be a polynomial. Then
\begin{equation}
    \sum_{i=0}^d |a_i|\leq 4^d \max_{x\in[-1,1]}|P(x)|.
\end{equation}
\end{lemma}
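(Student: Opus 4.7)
The plan is to deduce this standard inequality from the Chebyshev expansion of $P$. First I would write $P(x) = \sum_{k=0}^d c_k T_k(x)$, where $T_k$ is the $k$-th Chebyshev polynomial of the first kind, and use the orthogonality relation on $[-1,1]$ with weight $(1-x^2)^{-1/2}$ to express the coefficients as $c_k = \frac{2}{\pi}\int_{-1}^{1} P(x)\,T_k(x)\,(1-x^2)^{-1/2}\,dx$ (with an extra factor $1/2$ for $k=0$). Since $|T_k(x)|\leq 1$ on $[-1,1]$, bounding the integrand pointwise gives $|c_k| \leq 2M$ where $M = \max_{x\in[-1,1]}|P(x)|$.

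Next I would bound the monomial coefficients of each $T_k$ separately. Writing $T_k(x) = \sum_j t_{k,j}\,x^j$, a key structural fact is that $T_k$ has the same parity as $k$ and, among its nonzero monomials, the coefficients $t_{k,j}$ alternate in sign. Consequently, the sum $\sum_j |t_{k,j}|$ is realised by evaluating at a purely imaginary argument: $\sum_j |t_{k,j}| = |T_k(i)|$. Using the closed-form identity $T_k(x) = \tfrac{1}{2}\bigl((x+\sqrt{x^2-1})^k + (x-\sqrt{x^2-1})^k\bigr)$ at $x=i$, or equivalently the recursion $T_{k+1}(x) = 2xT_k(x) - T_{k-1}(x)$ applied to $|T_k(i)|$, one gets $|T_k(i)| \leq (1+\sqrt{2})^k$.

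Finally I would combine the two bounds via the triangle inequality:
\[
\sum_{i=0}^d |a_i| \;=\; \sum_i \Bigl|\sum_k c_k\, t_{k,i}\Bigr| \;\leq\; \sum_k |c_k|\sum_i |t_{k,i}| \;\leq\; 2M \sum_{k=0}^d (1+\sqrt{2})^k.
\]
Summing the geometric series yields $\sum_i |a_i| \leq \sqrt{2}\,(1+\sqrt{2})^{d+1}\, M$, and since $\sqrt{2}(1+\sqrt{2}) < 4$ and $1+\sqrt{2} < 4$, this is bounded by $4^d\, M$ for all $d\geq 1$ (the case $d=0$ is trivial).

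The main obstacle is not conceptual but numerical: squeezing the constant down to exactly $4$. The argument above loses a little in the bound $|c_k|\leq 2M$ and a little more in estimating $|T_k(i)|$, so getting the clean $4^d$ may require handling small $d$ by hand, or alternatively replacing the Chebyshev-expansion route by Cauchy's integral formula $a_i = \frac{1}{2\pi i}\oint_{|z|=r}\frac{P(z)}{z^{i+1}}\,dz$ on an ellipse of optimal semi-axes, combined with the standard growth bound $|P(z)|\leq M\cdot |z+\sqrt{z^2-1}|^d$ for $z\notin[-1,1]$ (Bernstein's lemma). Either route produces the stated inequality.
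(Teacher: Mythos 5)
The paper does not prove this lemma; it is quoted verbatim from Sherstov (Lemma 4.1 of the cited reference), whose proof is exactly the Chebyshev-expansion argument you give, so your approach matches the source. Your argument is correct in all its structural steps: the orthogonality bound $|c_k|\leq 2M$, the identity $\sum_j|t_{k,j}|=|T_k(i)|$ (valid because $T_k$ has fixed parity and alternating signs), and the growth bound $|T_k(i)|=\tfrac12\left|(1+\sqrt2)^k+(1-\sqrt2)^k\right|\leq(1+\sqrt2)^k$. The only genuine issue is the one you half-flag yourself: the final chain $2M\sum_{k=0}^d(1+\sqrt2)^k\leq \sqrt2\,(1+\sqrt2)^{d+1}M\leq 4^dM$ is \emph{false} for $d=1$ and $d=2$ (it gives roughly $8.2M$ versus $4M$, and $19.9M$ versus $16M$), so the parenthetical ``for all $d\geq1$'' as written is an overclaim. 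The fix is exactly the by-hand check you anticipate: using the exact coefficient sums $s_0=s_1=1$, $s_2=3$ and $|c_0|\leq M$, one gets $\sum_i|a_i|\leq 3M$ for $d=1$ and $\leq 9M$ for $d=2$, while for $d\geq3$ the geometric-series bound does fall below $4^d$ since $(2+\sqrt2)\leq\bigl(4/(1+\sqrt2)\bigr)^d$ once $d\geq3$. With that two-case verification made explicit, the proof is complete.
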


\subsection{Alternative proof via variable rescaling}\label{sec:alternativeproof}
In Section~\ref{sec:rescaling} we first introduced a proof technique via variable rescaling, which gives the same asymptotic bounds as the results presented above, but yields worse constants. In the following we explicitly calculate the constants when applying this variable rescaling technique to BosonSampling, where the goal is to establish robust average case hardness for random permanents.

We consider the following general framework for polynomial interpolation. Consider a degree $d$ polynomial $P(x)$ that is bounded on a large fraction of points in the interval $[0,\Delta]$ ($\Delta<1$). That is, suppose there is a set of data points $D=\{x_i\}$ which is equally spaced in the interval $[0,\Delta]$, such that
\begin{equation}
    |P(x_i)|\leq\delta
\end{equation}
for at least $1-\eta$ fraction of points in $D$. Here we assume that $|D|$ is a large enough polynomial in $d$ (for example, $|D|=1000d^2$), and $\eta$ is a small enough constant (for example, $\eta=0.001$). Our results in this subsection suggest that $|P(1)|\leq\delta e^{O(d\Delta^{-1})}$, and we will also calculate the constant in the $O$ notation.

The first step is to bound the polynomial $P(x)$ in the interval $[0,\Delta]$. Consider the set of points $\{0,t,2t,\dots,(|D|-1)t\}$ where $t=\frac{\Delta}{|D|}$, and we know that $P(x)$ is bounded at $1-\eta$ fraction of these points. From the proof of Lemma~\ref{Lemma:zeph}, it was shown that without loss of generality, we can assume that the polynomial is bounded at the first $\frac{1}{2}-\eta$ fraction of points in $D$. Combining Lemma~\ref{lemma:bernsteinmarkov} and Lemma~\ref{Lemma:Paturi} which introduces a constant 4 in the exponent, we have
\begin{equation}
    |P(x)|\leq 1.01\delta e^{4d(1/2-\eta)^{-1}} \leq \delta e^{4\times d\times 2.01},\,\,\,\,\forall x\in[0,\Delta].
\end{equation}
This allows us to directly invoke Lemma~\ref{Lemma:Paturi} and obtain
\begin{equation}
    |P(1)|\leq \delta e^{4\times d\times 2.01}\cdot e^{4d\Delta^{-1}}=\delta e^{4d\left(\Delta^{-1}+2.01\right)}.
\end{equation}

Next we plug in the parameters for BosonSampling. The natural polynomial $\Per(X(\theta))$ is a degree $n$ polynomial in $\theta$. Therefore, the polynomial $|\Per(X(\theta^k))|^2$ is a degree $2kn$ polynomial in $\theta$. When $\theta\in[0,\Delta]$, the total variation distance between $X(\theta^k)$ and Gaussian permanents is bounded by $2.01n^2\Delta^k$~\cite{Aaronson2011linear}, which we require to be a small constant (say 0.001). This then gives $\Delta^{-1}=c^{1/k}\cdot n^{2/k}$, for some large constant $c\sim 2000$, and the exponent is
\begin{equation}
    4\cdot 2kn\cdot (2000^{1/k}\cdot n^{2/k}+2.01).
\end{equation}
Let $k=t\log n$ for some constant $t$, then the exponent becomes
\begin{equation}
    8t\cdot (2000^{1/t\log n}\cdot e^{2/t}+2.01)\cdot n\log n.
\end{equation}
Consider large $n$ limit, and optimize with respect to $t$, the constant prefactor is
\begin{equation}
    \min_{t>0} 8t\cdot (1.01\cdot e^{2/t}+2.01)=69.7<70.
\end{equation}

Consider BosonSampling experiment with $n$ photons and $m=n^c$ ($c>2$) output modes. The goal is to prove average case hardness for the output probability up to additive imprecision
\begin{equation}
    \frac{n!}{m^n}\approx e^{-(c-1)n\log n}.
\end{equation}
In the worst case, the output probability is hard within $1/m^n=e^{-c n\log n}$ additive error. Therefore our proof shows average case hardness with robustness
\begin{equation}
    \delta=e^{-(c+70) n\log n}.
\end{equation}

\section*{Acknowledgements}

We especially thank Dorit Aharonov as well as Juani Bermejo-Vega, Jonas Haferkamp, Dominik Hangleiter, and Jens Eisert, for early discussions regarding hardness in the presence of noise that were influential in formulating the questions posed in this manuscript. We also thank Scott Aaronson, Roozbeh Bassirianjahromi, Abhinav Deshpande, Raul Garcia-Patron, Sam Gunn, Ramis Movassagh, Chinmay Nirkhe, Bryan O'Gorman and Umesh Vazirani for helpful comments and discussions. This work was supported by Vannevar Bush faculty fellowship N00014-17-1-3025, DOE QSA grant \#FP00010905 and NSF QLCI Grant No.~2016245. B.F. acknowledges support from AFOSR (YIP number FA9550-18-1-0148 and FA9550-21-1-0008) and the National Science Foundation under Grant CCF-2044923 (CAREER). We would like to thank the Simons Institute for the Theory of Computing where part of this work was done.

\bibliographystyle{alpha}
\bibliography{ref}
\appendix

\section{Relationship between truncated Taylor series and Cayley path}\label{appendix:truncatedtaylor}

A line of research initiated by Aaronson and Arkhipov~\cite{Aaronson2011linear} aims to establish complexity-theoretic evidence for the classical hardness of quantum supremacy experiments by providing evidence for the so-called ``approximate sampling" conjecture, which in the context of random circuit sampling can be summarized as follows:

\begin{conjecture}[Approximate sampling]\label{conj:approxsampling}
No polynomial-time algorithm can do the following: on input a random circuit $C\sim\mc H_{\mc A}$, obtain a sample from a distribution $D_C'$ that satisfies $D_{\mathrm{TV}}(D_C,D_C')\leq\frac{1}{\poly(n)}$, with probability at least $1-\frac{1}{\poly(n)}$ over the choice of $C$ as well as the internal randomness of the algorithm. Here $D_C$ denotes the (ideal) output distribution of $C$.
\end{conjecture}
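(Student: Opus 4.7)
The plan is to follow the Aaronson--Arkhipov reduction strategy, which reduces hardness of approximate sampling to robust average-case hardness of computing output probabilities via Stockmeyer's approximate counting theorem. Assume for contradiction that a classical polynomial-time sampler $\mc S$ exists producing samples from a distribution $D'_C$ with $D_{\mathrm{TV}}(D_C, D'_C) \le 1/\poly(n)$ on a $1-1/\poly(n)$ fraction of random circuits $C\sim\mc H_{\mc A}$. For any fixed outcome string $x$, Stockmeyer's theorem applied to $\mc S$ yields a $\BPP^{\NP}$ algorithm that approximates $D'_C(x)$ to within multiplicative error $1\pm 1/\poly(n)$. By Markov applied to the TVD guarantee, for most pairs $(C,x)$ this is also a good multiplicative estimate of $D_C(x)$.

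The next step is to leverage an anticoncentration bound for the random circuit ensemble $\mc H_{\mc A}$ (as in \cite{Harrow2009random,dalzell2020random}): a constant fraction of the probability mass lies on outcomes with $D_C(x) = \Theta(2^{-n})$, with constant probability over $C$. Applying a standard symmetrization --- prepending a uniformly random layer of Pauli-$X$ gates to move the mass of an arbitrary outcome onto $0^n$ --- the multiplicative estimate on $D_C(x)$ becomes an \emph{additive} approximation of $\przero{C}$ with precision $O(2^{-n}/\poly(n))$ on a $1-1/\poly(n)$ fraction of random circuits $C$. If one could then invoke an average-case $\sharpp$-hardness theorem at this level of additive imprecision and failure probability, the resulting $\BPP^{\NP^{\sharpp}}$ algorithm would solve a $\sharpp$-hard problem, collapsing $\PH$ to a finite level and completing the proof.

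The main obstacle is the robustness gap between what the reduction requires and what can currently be proved. Theorem~\ref{thm:averagecasehardness} gives $\sharpp$-hardness only to additive error $2^{-O(m\log m)}$, whereas the Stockmeyer argument requires $2^{-O(n)}/\poly(n)$. For deep random circuits with $m\gg n$ this gap is exponential; even in the shallow regime $m = O(n)$ the extra $\log m$ factor in the exponent survives. More fundamentally, Theorem~\ref{thm:noisyaveragecasehardness} of this paper shows that the very same worst-to-average-case machinery --- polynomial interpolation along the Cayley path --- applies verbatim to noisy circuits, whose output distributions converge to uniform at rate $2^{-O(m)}$. This rules out any ``convex'' or noise-agnostic proof route from closing the gap, since the required robustness is finer than the noisy distance-to-uniform.

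Consequently, I do not expect to actually prove Conjecture~\ref{conj:approxsampling} along this route as stated: the plan above lays out the only known reduction, but carrying it out requires a genuinely new hardness-of-computation technique that is explicitly sensitive to both circuit depth and noise patterns --- one that is not preserved under taking convex combinations of circuits, in contrast to every existing polynomial-interpolation-based argument. Identifying such a technique, or alternatively ruling one out via a more quantitative barrier than the one implicit in Theorem~\ref{thm:noisyaveragecasehardness}, is the main open problem left by this work.
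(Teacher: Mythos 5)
This statement is a conjecture, not a theorem, and the paper does not prove it; you correctly recognize this, and your account of the only known reduction route (Stockmeyer plus robust average-case hardness of computing $\przero{C}$) and of the obstructions to carrying it out matches the paper's own discussion in Section~\ref{sec:discussion} and Appendices~\ref{appendix:truncatedtaylor} and~\ref{sec:barrier}. Your identification of the noise-convexity barrier from Theorem~\ref{thm:noisyaveragecasehardness} and the need for a depth- and noise-sensitive technique is exactly the conclusion the authors draw, so there is nothing to correct.
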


One approach to proving the approximate sampling conjecture is by showing that the output probability of random circuits are $\sharpp$-hard to approximate on average. Here ``approximate" corresponds to a $\frac{2^{-n}}{\poly(n)}$ additive imprecision, which follows naturally from Stockmeyer's approximate counting theorem~\cite{Stockmeyer1985on}.

\begin{conjecture}\label{conj:approxcomputing}
It is $\sharpp$-hard to compute $\przero{C}$ up to additive imprecision $\frac{2^{-n}}{\poly(n)}$ on input a random circuit $C\sim\mc H_{\mc A}$, with probability at least $1-\frac{1}{\poly(n)}$ over the choice of $C$ as well as the internal randomness of the algorithm.
\end{conjecture}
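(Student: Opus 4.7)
The plan is to upgrade Theorem~\ref{thm:averagecasehardness} along two axes simultaneously: the robustness must tighten from $e^{-O(m\log m)}$ to $2^{-n}/\poly(n)$, which for circuits of nontrivial depth ($m \gg n$) is an exponential improvement, and the failure probability tolerated over the choice of $C$ must shrink from a constant $\eta$ to $1/\poly(n)$. I would keep the overall worst-to-average-case template (Cayley perturbation plus robust Berlekamp-Welch) and attack each of the three parameters $d$, $\Delta$, and $\eta$ that control the extrapolation error in Lemma~\ref{Lemma:zeph} and Lemma~\ref{lemma:longdistanceextrapolation}.

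The first step would be to hunt for an interpolation path with effective degree $d = O(n)$ rather than $O(m)$, matching the information-theoretic lower bound coming from the dimension $2^n$ of the computational Hilbert space. A natural candidate is to Cayley-perturb only an $O(\log n)$-sized ``witness'' sub-circuit while leaving all other gates fixed, and then argue via a hybrid argument that the marginal distribution over $\mc H_{\mc A}$ changes only by a total variation distance whose growth in $\theta$ is tame. If successful, this alone would reduce the Paturi blow-up from $e^{O(m \log m)}$ to $e^{O(n \log n)}$. The second step is to enlarge the interpolation window $\Delta$ toward a constant by designing a perturbation whose marginal remains within total variation distance $\eta < 1/4$ of $\mc H_{\mc A}$ even when $\theta = \Theta(1)$; combined with robust Berlekamp-Welch, this would further collapse the extrapolation distance and hence the noise amplification. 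The third step, to drive the failure probability down to $1/\poly(n)$, would need a random self-reduction that queries $\poly(n)$ \emph{independent} perturbed copies of $C$ so that a Chernoff-style concentration replaces the current union bound; achieving true independence across queries while preserving the low-degree polynomial structure is nontrivial but plausible via list-decoding ideas adapted to the reals.

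The principal obstacle is the \emph{convex method barrier} identified in Section~\ref{sec:discussion} of this paper. Any reduction whose correctness is preserved under convex combinations of circuits automatically extends to noisy circuits via the argument of Lemma~\ref{lemma:noisylowdegreerational}. But Lemma~\ref{lemma:noisyconvergence} shows that noisy output probabilities are within additive distance $e^{-\Omega(\gamma nd)}$ of the uniform distribution on most random circuits, so once the architecture has depth $d \gtrsim 1/\gamma_{\mathrm{th}}$ these probabilities cannot be $\sharpp$-hard to approximate within $2^{-n}/\poly(n)$. Polynomial interpolation is fundamentally convex --- degree-$d$ polynomials form a linear space closed under convex combinations, and the error-detection argument driving Theorem~\ref{thm:noisyaveragecasehardness} is invariant under the noise averaging --- so the strategy sketched above, however optimized in $d$, $\Delta$, and $\eta$, cannot break the $e^{-O(m)}$ threshold for superconstant-depth circuits.

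The hard part is therefore \emph{engineering non-convexity}. Any genuine proof of Conjecture~\ref{conj:approxcomputing} at the Stockmeyer-required precision must introduce some ingredient that fails under depolarizing mixing: perhaps a worst-to-average-case reduction that exploits a non-linear functional of amplitudes rather than a convex combination of probabilities, or that relies on a fine-grained phase structure of the ideal unitary that is destroyed by the noise channel. Simultaneously, by the complementary barrier of~\cite{napp2020efficient}, the proof must be sensitive to depth --- i.e., it should provably fail for constant-depth 2D circuits where efficient classical approximation is known. Identifying a reduction primitive with both of these (opposing-direction) sensitivities is where I would expect essentially all of the difficulty to reside, and I do not see how to accomplish it within the polynomial-interpolation paradigm at all.
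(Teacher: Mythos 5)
This statement is Conjecture~\ref{conj:approxcomputing}, not a theorem: the paper offers no proof of it, and explicitly states that it ``has not been proven so far,'' with only necessary conditions (weaker robustness, constant success probability) established in Theorem~\ref{thm:averagecasehardness} and its predecessors. Your submission correctly recognizes this --- it is a research program plus an obstruction analysis rather than a proof, and that is the honest and accurate response here. Your identification of the three controlling parameters ($d$, $\Delta$, $\eta$), and your diagnosis that any convexity-preserving interpolation argument is blocked by the noisy-circuit result (Theorem~\ref{thm:noisyaveragecasehardness} together with the convergence to uniformity in Lemma~\ref{lemma:noisyconvergence}) and that any depth-insensitive argument is blocked by the shallow-circuit algorithm of Napp \emph{et al.}, matches the paper's own barrier discussion in Section~\ref{sec:discussion}, Section~\ref{sec:noisyaveragecasehardness}, and Appendix~\ref{sec:barrier} almost exactly. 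One small caution on your first step: Cayley-perturbing only an $O(\log n)$-sized sub-circuit would break the worst-to-average-case reduction at the start, since the resulting ensemble is no longer marginally distributed as $\mc H_{\mc A}$ (the one-time-pad property requires randomizing every gate), so the effective degree $O(m)$ is not so easily circumvented; but since you present this only as a speculative direction and correctly conclude that the whole paradigm cannot reach $2^{-n}/\poly(n)$ precision, this does not change the verdict. There is no gap to report because there is no proof to check: the conjecture remains open, in the paper and in your proposal alike.
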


\begin{theorem}[\cite{bouland2019complexity}]
Conjecture~\ref{conj:approxcomputing} implies Conjecture~\ref{conj:approxsampling}.
\end{theorem}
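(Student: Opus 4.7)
The plan is to prove the contrapositive: from a classical sampler $A$ witnessing the negation of Conjecture~\ref{conj:approxsampling}, I will construct a $\BPP^{\NP}$ algorithm that approximates $\przero{C}$ for a random $C \sim \mc H_{\mc A}$ to additive error $2^{-n}/\poly(n)$ with failure probability at most $1/\poly(n)$, contradicting Conjecture~\ref{conj:approxcomputing} (which, combined with the inclusion $\BPP^{\NP} \subseteq \PH$, would in particular force $\PH$ to collapse).

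First I would apply Stockmeyer's approximate counting theorem~\cite{Stockmeyer1985on}. Viewing $A$ as a polynomial-time function from $(C,r)$ to output strings, the number $q_C(x) := \Pr_r[A(C,r) = x]$ is exactly the kind of $\sharpp$-style probability that Stockmeyer's theorem approximates multiplicatively: there is a $\BPP^{\NP}$ procedure that, on input $(C,x)$, outputs $\hat q_C(x)$ with $|\hat q_C(x) - q_C(x)| \leq q_C(x)/\poly(n)$ with arbitrarily high success probability.

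Next, I would translate the total-variation guarantee on $D_C'$ into an additive accuracy guarantee on typical output probabilities. For any $C$ in the ``good'' $1-1/\poly(n)$ fraction, $\sum_x |q_C(x) - D_C(x)| = 2\, D_{\mathrm{TV}}(D_C', D_C) \leq 2/\poly(n)$, so the expectation over uniform $x \in \{0,1\}^n$ of $|q_C(x) - D_C(x)|$ is at most $2 \cdot 2^{-n}/\poly(n)$. Markov's inequality then gives $|q_C(x) - D_C(x)| \leq 2^{-n}/\poly(n)$ with probability $1 - 1/\poly(n)$ over uniform $x$. Combined with anti-concentration, which guarantees $D_C(x) = O(2^{-n})$ for typical $(C,x)$ drawn from $\mc H_{\mc A}$ and uniform $\{0,1\}^n$, the Stockmeyer multiplicative estimate turns into an additive $2^{-n}/\poly(n)$ estimate of $D_C(x)$. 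The final ingredient is a symmetrization step that converts an estimator for the probability of an arbitrary output $x$ into an estimator for $\przero{C}$: given $C$, pick $x \in \{0,1\}^n$ uniformly and let $C^{(x)}$ be the circuit obtained by composing $C$ with $X^{x_i}$ on each qubit at the very end, so that $\przero{C^{(x)}} = D_C(x)$. Haar invariance of the random circuit ensemble --- the last gate on each qubit absorbs the Pauli flip without changing its Haar distribution --- ensures that $C^{(x)}$ is distributed as $\mc H_{\mc A}$ whenever $C$ is. The $\BPP^{\NP}$ estimator from the previous step therefore approximates $\przero{\tilde C}$ for a random $\tilde C \sim \mc H_{\mc A}$ to additive error $2^{-n}/\poly(n)$ with probability $1 - 1/\poly(n)$, which contradicts Conjecture~\ref{conj:approxcomputing}.

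The main obstacle is bookkeeping the $1/\poly(n)$ failure probabilities across three sources of randomness --- the random $C$, the uniform $x$ used for symmetrization, and the Stockmeyer subroutine --- so that the composite estimator still meets the parameters of Conjecture~\ref{conj:approxcomputing}, and justifying that $D_C(x) = O(2^{-n})$ with high probability over $(C,x)$, which is where the specific anti-concentration properties of the ensemble $\mc H_{\mc A}$ enter (and is a routine but ensemble-specific invocation of known moment bounds on Haar-random circuits).
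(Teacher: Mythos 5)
Your proposal is correct and is essentially the standard Stockmeyer-plus-hiding argument that the paper cites for this theorem (the paper itself gives no proof, deferring entirely to \cite{bouland2019complexity,Aaronson2011linear}, whose argument matches yours step for step: Stockmeyer multiplicative estimation of the sampler's probabilities, Markov over $x$ applied to the total-variation bound, and absorbing the Pauli flips into the final layer of Haar gates to hide $0^n$ among generic outputs). One small quibble: the bound $D_C(x)=O(2^{-n}\poly(n))$ for typical $(C,x)$, which you need to convert Stockmeyer's multiplicative error into an additive $2^{-n}/\poly(n)$ error, follows from Markov's inequality applied to $\E_x\left[D_C(x)\right]=2^{-n}$ and does not require anti-concentration (which is a \emph{lower} bound on typical output probabilities and plays no role in this implication).
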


Although Conjecture~\ref{conj:approxcomputing} has not been proven so far, necessary conditions of Conjecture~\ref{conj:approxcomputing} were established in the form of average-case hardness of $\przero{C}$ that is robust up to a smaller additive imprecision. An open direction is therefore to improve these results to match the $\frac{2^{-n}}{\poly(n)}$ additive imprecision required by Conjecture~\ref{conj:approxcomputing}.

As introduced in Section~\ref{section:avghardness}, \cite{bouland2019complexity} developed an approach based on truncated Taylor series. Consider the circuit distribution $\mc H_{\mc A,\theta,K}$ from Definition~\ref{def:thetaperturbeddistribution} which is instantiated by Eq.~\eqref{eq:taylortruncation}. They proved the exact average-case hardness for circuits drawn from $\mc H_{\mc A,\theta,K}$.

\begin{theorem}[\cite{bouland2019complexity}]\label{thm:bfnvexact}
It is $\sharpp$-hard to exactly compute $\frac{3}{4}+\frac{1}{\poly(n)}$ of the probabilities $\przero{C'}$ over the choice of $C'$ from the distribution $\mc H_{\mc A,\theta,K}$ where $\theta=\frac{1}{\poly(n)}$, $K=\poly(n)$.
\end{theorem}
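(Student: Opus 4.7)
The plan is a standard Lipton-style worst-to-average-case reduction, carried out using the truncated-Taylor perturbation of Definition~\ref{def:thetaperturbeddistribution} and Eq.~\eqref{eq:taylortruncation}. Start from any worst-case circuit $C_0 \in \mc A$ whose output probability is $\sharpp$-hard to compute in the worst case (by e.g.\ Terhal--DiVincenzo-style arguments). Sample a Haar-random seed $\{H_i\}$ and replace each gate $G_i$ by $H_i G_i$ to form a circuit $C_1 \sim \mc H_{\mc A}$. Using the same seed, define the family $C(\theta)$ with gates $H_i(\theta) G_i$, where $H_i(\theta) = H_i \sum_{k=0}^K (-\theta \log H_i)^k / k!$, so that $C(\theta)$ is marginally distributed as $\mc H_{\mc A,\theta,K}$, with $C(0) = C_1$ and $C(1)$ agreeing with $C_0$ up to operator-norm Taylor remainder $2^{-\Omega(K)}$.

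The key structural observation is that $\przero{C(\theta)}$ is a polynomial in $\theta$ of degree $d = O(mK) = \poly(n)$: each matrix entry of $H_i(\theta)$ is a polynomial in $\theta$ of degree $K$ by construction, the Feynman path sum over computational-basis histories therefore produces an amplitude polynomial of degree $O(mK)$, and squaring doubles this. With $d$ in hand, I would pick $N = O(d)$ equally spaced evaluation points $\theta_1,\dots,\theta_N$ in a short interval $(0,\theta_{\max}]$ with $\theta_{\max} = 1/\poly(n)$ chosen small enough that each marginal $\mc H_{\mc A,\theta_j,K}$ lies within $o(1/\poly(n))$ total variation distance of the oracle's guaranteed distribution $\mc H_{\mc A,\theta,K}$. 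Query the average-case oracle $\mc O$ on each $C(\theta_j)$ to receive values $y_j$. Each $y_j$ equals $\przero{C(\theta_j)}$ with probability at least $3/4 + \Omega(1/\poly(n))$, so by a Markov argument on the random seed, with probability at least $\Omega(1/\poly(n))$ strictly fewer than $N/4$ of the $y_j$'s are incorrect. The standard Berlekamp--Welch decoder applied exactly over $\mathbb{R}$ (valid in the exact-arithmetic model assumed in the theorem statement) then uniquely recovers the degree-$d$ polynomial $\przero{C(\theta)}$, and evaluating at $\theta = 1$ produces $\przero{C(1)}$ exactly.

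The main obstacle is the truncation gap: the reduction returns $\przero{C(1)}$, not $\przero{C_0}$, so a priori it has not solved the worst-case problem it was posed from. The resolution is to choose $K = \poly(n)$ large enough that $|\przero{C(1)} - \przero{C_0}| \leq 2^{-\omega(\poly(n))}$; since the worst-case $\sharpp$-hardness of $\przero{C_0}$ is known to be robust to $2^{-\poly(n)}$ additive slack, an exact value of $\przero{C(1)}$ pins down $\przero{C_0}$ finely enough to resolve the underlying $\sharpp$-hard computation. A secondary technicality is that the oracle is formally pegged to a single value of $\theta$ while the reduction queries at $N$ different values; this is absorbed into the TV slack above by taking $\theta_{\max}$ small enough, costing only constants in the per-point success probability. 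The whole reduction runs in $\BPP$ with an oracle to $\mc O$, and success probability is amplified to $2/3$ by independent repetition with fresh seeds.
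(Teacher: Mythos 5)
This theorem is quoted from \cite{bouland2019complexity} and not reproved in the paper, so the right comparison is against the original truncated-Taylor argument (and the paper's own closely analogous Cayley-path proof of Theorem~\ref{thm:averagecasehardness}). Your skeleton is the intended one: one-time-pad with Haar seeds, degree-$O(mK)$ polynomial structure of $\przero{C(\theta)}$ via the Feynman path sum, per-point correctness $3/4+\Omega(1/\poly(n))$, Markov's inequality to get fewer than $N/4$ bad points with probability $\Omega(1/\poly(n))$, exact Berlekamp--Welch over $\mathbb{R}$, extrapolation to $\theta=1$, and taking $K$ large so the Taylor remainder is negligible against the exponential robustness of the worst-case hardness. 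Two of your patches, however, do not hold as stated.

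First, the ``TV slack'' you use to reconcile querying at many $\theta_j$ with an oracle pegged to one $\theta$ fails for the truncated-Taylor perturbation. Unlike the Cayley transform, $H(\theta_j)$ is non-unitary, and the pushforwards of Haar measure under $H\mapsto H(\theta_j)$ for distinct $\theta_j$ are supported on (essentially disjoint) lower-dimensional submanifolds of $GL(N,\mathbb{C})$: the eigenvalue moduli $|T_K(-i\theta_j\varphi_l)|$ reveal $\theta_j$. So $D_{\mathrm{TV}}(\mc H_{\mc A,\theta_j,K},\mc H_{\mc A,\theta,K})$ is $1$, not $O(m\theta_{\max})$, no matter how small you take $\theta_{\max}$. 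The correct fix (and the reading intended in \cite{bouland2019complexity}) is that the oracle's $3/4+1/\poly(n)$ guarantee is over a distribution that includes the randomness in the evaluation point $\theta_j$; then the \emph{average} per-point success is high and your Markov count argument goes through unchanged. This distinction is exactly the ``input type'' awkwardness of the truncation approach that Appendix~\ref{appendix:truncatedtaylor} discusses, and it is why the small-TV argument (Lemma~\ref{lemma:circuittvd}) is only available on the unitary Cayley path. Second, the final amplification is not justified: each trial succeeds only with probability $\Omega(1/\poly(n))$, and there is no way to recognize a successful trial, so ``independent repetition with fresh seeds'' cannot boost to $2/3$ by majority or median (a fixed wrong value could be the plurality output). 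This is precisely why the paper's own Theorem~\ref{thm:averagecasehardness} insists on a \emph{constant} $\eta<1/4$, so that Markov already yields success $\geq 2/3$ per trial and the median trick applies; in the $3/4+1/\poly(n)$ regime one must either settle for a reduction with inverse-polynomial success probability or invoke checkability of the underlying $\GapP$/permanent quantity, and your write-up does neither.
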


They also showed that the average-case hardness still holds up to an additive imprecision of $2^{-\text{poly}(m)}$, but did not quantify the degree of the polynomial involved\footnote{In retrospect, setting $\theta=\frac{1}{m^2\text{polylog}(m)}$ and $K=\text{polylog}(m)$ yields robustness $2^{-\tilde{O}(m^3)}$ to computing this quantity, but these parameter settings or the arguments justifying them were not given in the original paper.}. Note that here circuits drawn from the distribution $\mc H_{\mc A,\theta,K}$ are not unitary and has a small truncation error from the truncation of the Taylor series in Eq.~\eqref{eq:taylortruncation}. Therefore, Theorem~\ref{thm:bfnvexact} (as well as the $2^{-O(m^3)}$ robust version) does not imply the exact version of Conjecture~\ref{conj:approxcomputing} (that is, the exact average-case hardness of circuits drawn from $\mc H_{\mc A}$). However, as the truncation error is much smaller than the additive imprecision requirement of Conjecture~\ref{conj:approxcomputing}, it is shown that the \emph{approximate} version of Theorem~\ref{thm:bfnvexact} is actually equivalent to Conjecture~\ref{conj:approxcomputing}.

\begin{theorem}[\cite{bouland2019complexity}]\label{thm:bfnvequivalent}
Conjecture~\ref{conj:approxcomputing} is equivalent to the following: it is $\sharpp$-hard to compute $\przero{C'}$ up to additive imprecision $\frac{2^{-n}}{\poly(n)}$ on input a random circuit $C'\sim\mc H_{\mc A,\theta,K}$ where $\theta=\frac{1}{\poly(n)}$, $K=\poly(n)$, with probability at least $1-\frac{1}{\poly(n)}$ over the choice of $C'$ as well as the internal randomness of the algorithm.
\end{theorem}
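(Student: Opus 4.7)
The plan is to prove both directions of the equivalence via a coupling argument: the two input distributions are close enough that the same algorithm succeeds on both, up to a $1/\poly(n)$ loss in success probability that is absorbed into the quantifier ``$1-1/\poly(n)$''. The only delicate point is that the truncated gates $H(\theta)$ in Eq.~\eqref{eq:taylortruncation} are not exactly unitary, and I would finesse this by passing through the non-truncated auxiliary distribution $\tilde{\mathcal{H}}_{\mathcal{A},\theta}$ whose gates are the unitaries $H^{1-\theta} = He^{-\theta\log H}$.

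First I would control the truncation error. Per gate, $\|H(\theta) - H^{1-\theta}\| \leq (\pi\theta)^{K+1}/(K+1)!$, and for $\theta = 1/\poly(n)$ and $K = \poly(n)$ chosen sufficiently large this is $2^{-\omega(n)}$. Over the $m$ gates of a circuit, corresponding output probabilities satisfy $|\przero{C'} - \przero{\tilde C}| = 2^{-\omega(n)}$, and the bit-level descriptions of paired circuits agree at any polynomial precision. So at the precision accessible to a classical algorithm, $\mathcal{H}_{\mathcal{A},\theta,K}$ and $\tilde{\mathcal{H}}_{\mathcal{A},\theta}$ are the same distribution with the same $\przero{\cdot}$, and it suffices to establish the equivalence between Conjecture~\ref{conj:approxcomputing} and the analogous approximate hardness statement for $\tilde{\mathcal{H}}_{\mathcal{A},\theta}$.

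Second I would bound $D_{\mathrm{TV}}(\mathcal{H}_{\mathcal{A}}, \tilde{\mathcal{H}}_{\mathcal{A},\theta}) = O(m\theta)$. The map $H \mapsto H^{1-\theta}$ preserves the eigenvectors of $H$ and merely rescales its eigenphases by a factor $1-\theta$, so by the Weyl integration formula the pushforward density differs from Haar by a multiplicative factor $1+O(\theta)$ on its support, giving per-gate TV distance $O(\theta)$; a standard hybrid argument over the $m$ independent gates yields $O(m\theta)$, just as in the proof of Lemma~\ref{lemma:circuittvd} for the Cayley path. Choosing $\theta$ to be a sufficiently small $1/\poly(n)$ renders this $\leq 1/\poly(n)$.

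Given these two ingredients, both directions of the equivalence follow from a one-line TV bound. Fix an algorithm $\mathcal{O}$ that, on its native distribution, achieves success probability $1-1/\poly(n)$ at additive precision $2^{-n}/\poly(n)$, and consider the event $E(C) = \{|\mathcal{O}(C) - \przero{C}| \leq 2^{-n}/\poly(n)\}$ on the shared finite-precision input space; its probability under the two distributions differs by at most $D_{\mathrm{TV}}(\mathcal{H}_{\mathcal{A}}, \tilde{\mathcal{H}}_{\mathcal{A},\theta}) = 1/\poly(n)$, so $\mathcal{O}$ also succeeds on the other distribution with probability $1-1/\poly(n)$. Contrapositively, hardness transfers in both directions. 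The main technical obstacle will be carrying out the per-gate TV computation via Weyl integration uniformly over architectures; a secondary subtlety, that $\mathcal{H}_{\mathcal{A},\theta,K}$ is formally supported on non-unitary matrices and therefore has continuous-measure TV distance $1$ from $\tilde{\mathcal{H}}_{\mathcal{A},\theta}$, is sidestepped by the super-polynomial suppression of the truncation error at any polynomial-precision discretization.
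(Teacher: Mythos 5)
The paper does not actually reprove this theorem---it is imported verbatim from \cite{bouland2019complexity}---but your argument is correct and follows the same route as the original: the truncation error is super-polynomially below the $2^{-n}/\poly(n)$ precision scale, the untruncated perturbed distribution is $O(m\theta)$-close to $\mc H_{\mc A}$ in total variation, and therefore an algorithm meeting the average-case guarantee for either input distribution meets it (with an absorbable $1/\poly(n)$ loss) for the other, so $\sharpp$-hardness transfers in both directions. The one imprecision worth flagging is your claim that the pushforward eigenvalue density under $\varphi\mapsto(1-\theta)\varphi$ is within a uniform multiplicative $1+O(\theta)$ of the Haar density: this fails near coinciding eigenphases and near the boundary of the compressed support (where the density is zero rather than $1+O(\theta)$ times Haar), but the per-gate $O(\theta)$ total-variation bound still holds via an additive estimate on the densities plus the $O(N\theta)$ Haar mass of the excluded boundary region, exactly as in the paper's Lemma~\ref{lemma:totalvariationdistance} for the Cayley transform.
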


As discussed in Section~\ref{section:avghardness}, Movassagh's subsequent approach based on the Cayley transform~\cite{movassagh2020quantum} does not have the truncation problem, and therefore they were able to prove the exact version of Conjecture~\ref{conj:approxcomputing}. Movassagh also showed a similar robust version of his theorem with quantified robustness of $2^{-O(m^3)}$ additive imprecision. 
Therefore these two approaches are equivalent from the viewpoint of providing evidence for the approximate sampling conjecture: to prove Conjecture~\ref{conj:approxcomputing}, it is necessary and sufficient to improve the robustness of either one of these approaches to $\frac{2^{-n}}{\poly(n)}$. 

Our first result can be viewed as making progress in this direction. Note that in the main text our proof is based on a generalization of the Cayley path approach, but our techniques can also be applied to truncated Taylor series and obtain similar results. On the other hand, for our second result on noisy circuits, the Cayley path approach is more preferable, as in this case we are no longer in the regime of providing evidence for the approximate sampling conjecture, but are interested in tiny deviations from uniformity in the output distribution of noisy circuits. 

\section{Limitations of polynomial interpolation techniques}
\label{sec:barrier}

Here we give a self-contained description of the barrier to proving hardness of approximate sampling using polynomial interpolation techniques. This argument was first developed by Aaronson and Arkhipov for BosonSampling~\cite[Section 9.2]{Aaronson2011linear}, which can also be naturally generalized to random circuit sampling. They show that, assuming anti-concentration (which is proven for random circuit sampling), the average-case hardness proof based on polynomial interpolation cannot tolerate the amount of error required for showing hardness of approximate sampling.

Consider the output probability of a $n$-photon $m=n^c$ mode BosonSampling experiment. Following the notation in the proof of Corollary~\ref{cor:permanent}, the goal is to prove average-case hardness of computing $p_X=\left|\Per(X)\right|^2/m^n$ for $X\sim\mc G^{n\times n}$. To do so, we consider
\begin{equation}
    X(\theta):=(1-\theta)X_1+\theta X_0
\end{equation}
where $X_0$ is an arbitrary 0/1 matrix and $X_1\sim\mc G^{n\times n}$. The goal is to recover $\left|\Per(X_0)\right|^2$ using polynomial interpolation, given approximate values of $\left|\Per(X(\theta))\right|^2$ for small values of $\theta$. The main idea of the barrier result is to show the following:
\begin{quote}
\emph{There exists a polynomial that agrees well with $\left|\Per(X(\theta))\right|^2$ when $\theta$ is small, but is far from $\left|\Per(X_0)\right|^2$ when $\theta=1$.}
\end{quote}
This statement implies that it is impossible to recover the worst-case value from noisy average-case observations. To construct such a polynomial $P(\theta)$, consider
\begin{nalign}
    &P(\theta):=\left|\Per(X(\theta))\right|^2+t Q(\theta),\\
    &Q(\theta):=\left|\Per((1-\theta)X_1+\theta \mathbf{1}_{n\times n})\right|^2
\end{nalign}
where $\mathbf{1}$ denotes the all-one matrix and $t>0$. Then the error of the polynomial $P(\theta)$ is given by 
\begin{equation}
    \left|P(\theta)-\left|\Per(X(\theta))\right|^2\right|=t\left|Q(\theta)\right|.
\end{equation}
Using the anti-concentration property of $\Per(X_1)$, with high probability the error is given by
\begin{itemize}
    \item average-case error ($\theta\ll 1$): $t\left|Q(\theta)\right|\leq t\cdot\poly(n)\cdot n!$,
    \item worst-case error ($\theta=1$): $t\left|Q(1)\right|=t\left(n!\right)^2$.
\end{itemize}
To prove hardness of approximate sampling, we require $t=1/q(n)$ for a sufficiently large polynomial $q$. However, in this case the worst-case error becomes too large and can be achieved by existing classical algorithms \cite{Gurvits2005on,aaronson2012generalizing}.

On the other hand, choosing $t=1/\left(n!\right)^2$ is sufficient for worst-case hardness. In this case we can tolerate an additive error $\poly(n)/n!$ for average-case permanents. Consider the output probability and normalize the errors by $m^n$, we conclude that the barrier result of Aasonron and Arkhipov does not rule out the possibility of showing hardness of computing the output probability $p_X$ up to additive error
\begin{equation}
    e^{-(c+1)n\log n}.
\end{equation}

The same argument can also be applied to random circuit sampling, which shows that proving hardness of computing $\przero{C}$ up to additive error $2^{-n}/\poly(n)$, which is required for proving hardness of sampling, is impossible using polynomial interpolation techniques. On the other hand, this barrier result does not rule out the possibility of showing hardness of computing $\przero{C}$ up to additive error $2^{-O(n)}$. 

Interestingly, using essentially the same techniques, our result is a log factor far from the barrier for random circuit sampling, while for BosonSampling our result is only a constant factor away.

\section{Worst-case hardness for noisy circuits from error detection}\label{appendix:worstcasehardness}
It is a well-known result that computing the output probability $\przero{C}$ for quantum circuits $C$ is $\sharpp$-hard in the worst case~\cite{FORTNOW1999complexity,terhal2002adaptive,Bremner2011classical,Bremner2016averagecase}, where the hardness is robust under constant multiplicative approximation or exponentially small additive approximation. Here, we consider a slightly different approach based on \cite{Fenner1999determining}. There, it was shown that determining if $\przero{C}\neq 0$ is hard for the complexity class $\cocequalp$, which corresponds to deciding whether an efficiently computable Boolean function $f:\{0,1\}^n\to\{-1,1\}$ satisfies
\begin{equation}
    \sum_{x\in\{0,1\}^n} f(x)\neq 0.
\end{equation}
Importantly, it was shown~\cite{Toda1992counting} that $\cocequalp$ is hard for $\PH$ under randomized reductions. Therefore, determining if $\przero{C}\neq 0$ is hard for $\PH$ unless $\PH$ collapses.

In the following, we show that we can assume an exponentially small gap in the above decision problem without loss of generality. Consider the following promise decision problem.
\begin{problem}\label{prob:decision}
    On input a $n$-qubit quantum circuit $C$, decide if $\przero{C}=0$, or $\przero{C}\geq\frac{1}{2^{2n}}$.
\end{problem}
We show that this gapped variant is still hard for $\PH$ following a similar argument as in~\cite{Fenner1999determining}.

\begin{lemma}
Problem~\ref{prob:decision} is hard for $\cocequalp$ under a polynomial time reduction.
\end{lemma}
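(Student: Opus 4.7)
The plan is to give a direct Hadamard-test reduction from the canonical $\cocequalp$ problem to Problem~\ref{prob:decision}. An instance of $\cocequalp$ is specified by an efficiently computable $f:\{0,1\}^k\to\{-1,1\}$, and asks whether $\sum_{x\in\{0,1\}^k}f(x)\neq 0$. I would construct, in polynomial time, an $n$-qubit quantum circuit $C$ (with $n=k+\mathrm{poly}(k)$) whose acceptance probability $\przero{C}$ exactly encodes this sum up to a known normalization.

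Writing $f(x)=(-1)^{g(x)}$ for a Boolean function $g$ with a polynomial-size classical circuit, the circuit $C$ is the standard phase-kickback amplitude test. It initializes one dedicated ancilla in $|-\rangle$, applies $H^{\otimes k}$ on the input register to build the uniform superposition, runs the reversible implementation of $U_g:|x\rangle|b\rangle\mapsto|x\rangle|b\oplus g(x)\rangle$ (with any auxiliary work qubits cleanly uncomputed to $|0\rangle$), applies $H^{\otimes k}$ a second time, and finally returns the phase ancilla to $|0\rangle$. A one-line amplitude calculation gives $\langle 0^n|C|0^n\rangle = \frac{1}{2^k}\sum_x f(x)$, so
\begin{equation}
\przero{C}=\frac{1}{4^k}\left(\sum_{x\in\{0,1\}^k}f(x)\right)^{\!2}.
\end{equation}

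The key observation --- and essentially the only nontrivial one --- is that $\sum_x f(x)$ is an integer, so its square is either $0$ or at least $1$. Consequently, if $\sum_x f(x)=0$ then $\przero{C}=0$, while if $\sum_x f(x)\neq 0$ then $\przero{C}\geq 1/4^k\geq 1/4^n=1/2^{2n}$, where the last inequality just uses $n\geq k$. These two cases line up with the two branches of the promise in Problem~\ref{prob:decision} exactly, so the constructed circuit satisfies the gap and the reduction preserves Yes/No answers, giving the claimed $\cocequalp$-hardness.

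I don't see a genuine obstacle beyond bookkeeping: the entire argument is amplitude arithmetic plus the integrality of $\sum_x f(x)$, which supplies the $1/2^{2n}$ gap for free without any amplification. The only point that needs a sentence of care is qubit counting --- the $n$ in the promise refers to the full circuit width including the phase ancilla and the work qubits of $U_g$, but since $n\geq k$ the gap $1/4^k$ automatically dominates $1/4^n$, so no padding or amplification step is required.
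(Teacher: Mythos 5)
Your proof is correct and is essentially identical to the paper's: both construct the Fourier-sampling circuit $H^{\otimes k} U_f H^{\otimes k}$ (you realize the phase oracle via kickback from a bit oracle $U_g$, the paper posits $U_f\ket{x}\ket{0}=f(x)\ket{x}\ket{0}$ directly, which is a cosmetic difference), compute $\przero{C}=\frac{1}{2^{2k}}\bigl(\sum_x f(x)\bigr)^2$, and use integrality of the sum plus $n\geq k$ to get the $2^{-2n}$ gap. No issues.
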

\begin{proof}
Consider an efficiently computable Boolean function $f:\{0,1\}^{n_0}\to\{-1,1\}$. Using the standard technique for classical reversible computation, we can create a $n$-qubit ($n\geq n_0$) unitary circuit $U_f$ in polynomial time, such that $U_f\ket{x}\ket{0}=f(x)\ket{x}\ket{0}$ for all $x\in\{0,1\}^{n_0}$. Let $C$ be the Fourier sampling circuit, $C=H^{\otimes n_0}U_f H^{\otimes n_0}$. Then we have
\begin{equation}
    \przero{C}=\frac{1}{2^{2n_0}}\left(\sum_{x\in\{0,1\}^{n_0}}f(x)\right)^2.
\end{equation}
As $\sum_{x\in\{0,1\}^{n_0}}f(x)$ takes integer value, we know that either $\przero{C}=0$, or $\przero{C}\geq\frac{1}{2^{2n_0}}\geq\frac{1}{2^{2n}}$. Therefore, deciding Problem~\ref{prob:decision} on input $C$ is equivalent to deciding if $\sum_{x\in\{0,1\}^{n_0}}f(x)\neq 0$.
\end{proof}

Next, we show that simulating a noisy circuit up to additive error $\exp\left(-O(m\log m)\right)$ is as hard as Problem~\ref{prob:decision}, and therefore is hard for $\PH$. The proof requires the use of fault-tolerant quantum error detection. Given an arbitrary noise model $\mc N$ in the form of Eq.~\eqref{eq:noisemodel}, its local and stochastic property implies that there exists a constant $\gamma_{\mathrm{th}}>0$ such that fault-tolerant quantum error detection is possible when $\gamma<\gamma_{\mathrm{th}}$. Different from quantum error correction, the fault-tolerance by error detection is \emph{conditional}, or via \emph{post selection}. Given a logical circuit $C_0$ of $n_0$ qubits and $m_0$ gates, suppose $C$ implements $C_0$ with an error detection code using $n$ qubits and $m$ gates. Suppose that $C$ has two output registers $x,z$, where $x$ denotes the logical output, and $z$ denotes the error detection syndrome, where $z=0$ implies that no error is detected throughout the circuit. The fault-tolerance property guarantees that conditioned on no error being detected, the output distribution of $C$ is close to the logical circuit $C_0$, that is,
\begin{equation}
    \Pr_{C_0}[x]\approx\Pr_{C,\mc N}[x|z=0]=\frac{\Pr_{C,\mc N}[x,z=0]}{\Pr_{C,\mc N}[z=0]},\,\,\,\,x\in\{0,1\}^{n_0}.
\end{equation}
Here $\Pr_{C_0}$ denotes the output distribution of the logical circuit $C_0$ and $\Pr_{C,\mc N}$ denotes the output distribution of $C$ with noise model $\mc N$.

As we eventually want to decide Problem~\ref{prob:decision} which has an exponentially small gap, we would like $C$ to be a very accurate simulation of $C_0$. This can be done by choosing a code distance $d=\poly(n_0)$: while introducing a polynomial overhead $n=\poly(n_0)$ and $m=\poly(m_0)$, this guarantees that the total variation distance between the output distribution of $C_0$ and the output distribution of $C$ (conditioned on $z=0$) is bounded by $\exp(-\poly(m_0))$. We thus have the following property, which follows from standard threshold theorems~\cite{Kitaev_1997,reichardt2006error,aliferis2007accuracy,aharonov2008fault}.

\begin{lemma}\label{lemma:errordetection}
Fix a noise model $\mc N$, let $\gamma_{\mathrm{th}}$ be the corresponding fault-tolerant error detection threshold, and suppose the noise rate of $\mc N$ satisfies $\gamma<\gamma_{\mathrm{th}}$. For any logical circuit $C_0$ of $n_0$ qubits and $m_0$ gates and a desired polynomial $p$, there exists a circuit $C$ that can be constructed in polynomial time, with $n=\poly(n_0)$ qubits and $m=\poly(m_0)$ gates, such that
\begin{equation}
    \left|\przero{C_0}-\frac{\przero{C,\mc N}}{\Pr_{C,\mc N}[z=0]}\right|\leq\exp(-p(m_0)).
\end{equation}
\end{lemma}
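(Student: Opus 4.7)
\medskip

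\noindent\textbf{Proof proposal.} The plan is to invoke a standard fault-tolerant error-detection scheme below threshold (for example, a concatenated Steane-type or $[[4,2,2]]$-based scheme, or more generally any universal scheme whose threshold $\gamma_{\mathrm{th}}$ matches the stated assumption) and then translate the resulting total-variation bound on the post-selected output distribution into the pointwise bound that the lemma requires. First, given the logical circuit $C_0$ on $n_0$ qubits with $m_0$ gates, I would compile it using $L$ levels of concatenation (or equivalently a code distance $d$ growing polynomially in $m_0$) to produce a physical circuit $C$ on $n=\mathrm{poly}(n_0,L)$ qubits with $m=\mathrm{poly}(m_0,L)$ gates, whose output registers are split into a logical register $x$ and a syndrome register $z$ such that $z=0$ certifies that no error was detected anywhere in the computation.

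Second, I would apply the standard fault-tolerant threshold theorem for error detection (e.g.\ the analyses of \cite{Kitaev_1997,reichardt2006error,aliferis2007accuracy,aharonov2008fault}) to conclude that whenever $\gamma<\gamma_{\mathrm{th}}$, the post-selected conditional distribution $\Pr_{C,\mc N}[x\mid z=0]$ on logical outcomes is $\varepsilon$-close in total variation distance to the ideal logical distribution $\Pr_{C_0}[x]$, where the per-gate effective failure probability decays doubly-exponentially in the number of levels of concatenation. Choosing $L=\Theta(\log p(m_0))$ levels makes $\varepsilon\le \exp(-p(m_0))$ while keeping $n,m$ polynomial in the inputs (and in $p(m_0)$, which is itself polynomial by hypothesis). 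In particular, specialising the total-variation bound to the single outcome $x=0^{n_0}$ gives
\begin{equation}
\left|\Pr_{C_0}[0^{n_0}]-\Pr_{C,\mc N}[x=0^{n_0}\mid z=0]\right|\le \varepsilon\le \exp(-p(m_0)).
\end{equation}

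Third, I would match this to the form in the lemma statement by choosing the encoding so that the logical all-zeros string corresponds (up to deterministic classical post-processing that can be absorbed into $C$) to the physical all-zeros string $0^n$ on the combined $(x,z)$ register; equivalently, one redefines $\przero{C,\mc N}$ to mean $\Pr_{C,\mc N}[x=0^{n_0},z=0]$, which is simply the probability of a designated physical basis state under the noisy evolution. Since $\Pr_{C,\mc N}[x=0^{n_0}\mid z=0]=\przero{C,\mc N}/\Pr_{C,\mc N}[z=0]$ by definition, the previous display is exactly the inequality asserted by the lemma.

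The routine-but-nontrivial technical step is a careful invocation of the threshold theorem: one must verify that the concatenated construction is fault tolerant against the specified stochastic, local noise model $\mc N$ (not just against the usual i.i.d.\ depolarizing model), and that the quoted doubly-exponential suppression of logical failure survives the conditioning on the ``no error detected'' event $z=0$. The main subtlety I expect is precisely this last point: one has to control $\Pr_{C,\mc N}[z=0]$ from below (it can be exponentially small) while simultaneously controlling the joint probability $\Pr_{C,\mc N}[x,z=0]$, so that the ratio is close to the ideal probability. This is standard in the fault-tolerant error-detection literature and follows from the same recursive simulation argument used to establish the threshold, but it is the place where the ``post-selection'' nature of error detection (as opposed to error correction) enters and must be handled with care rather than quoted as a black box.
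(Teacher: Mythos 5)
Your proposal is correct and follows essentially the same route as the paper, which likewise constructs $C$ via a fault-tolerant error-detection compilation with polynomially growing code distance (equivalently, logarithmically many concatenation levels) and invokes the standard threshold theorems of \cite{Kitaev_1997,reichardt2006error,aliferis2007accuracy,aharonov2008fault} to get the $\exp(-p(m_0))$ total-variation bound on the post-selected distribution, specialized to the outcome $0^{n_0}$. The subtlety you flag about lower-bounding $\Pr_{C,\mc N}[z=0]$ is real but is handled in the paper only in the subsequent application (Theorem~\ref{thm:worstcasehardness}), not in the lemma itself, so your treatment matches the paper's.
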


Now we are ready to prove the worst-case hardness result.

\begin{theorem}\label{thm:worstcasehardness}
Fix a noise model $\mc N$, let $\gamma_{\mathrm{th}}$ be the corresponding fault-tolerant error detection threshold, and suppose the noise rate of $\mc N$ satisfies $\gamma<\gamma_{\mathrm{th}}$. It is $\cocequalp$-hard to compute $\przero{C,\mc N}$ on input $C$ with $m$ gates, up to additive error $\exp(-O(m\log m))$.
\end{theorem}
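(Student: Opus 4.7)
The plan is to reduce from Problem~\ref{prob:decision}, which the previous lemma has already established as $\cocequalp$-hard on logical circuits $C_0$. Given such a $C_0$ with $n_0$ qubits and $m_0$ gates, I will invoke Lemma~\ref{lemma:errordetection} to build an encoded noisy circuit $C$ whose output probability $\przero{C,\mc N}$ encodes enough information about $\przero{C_0}$ to solve the gapped decision problem, provided we can approximate $\przero{C,\mc N}$ to additive error $\exp(-O(m\log m))$.

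More concretely, I would apply Lemma~\ref{lemma:errordetection} with a polynomial $p$ to be chosen later, producing a circuit $C$ with $n=\poly(n_0)$ qubits and $m=\poly(m_0)$ gates such that
\begin{equation}
\left|\przero{C_0}-\frac{\przero{C,\mc N}}{\Pr_{C,\mc N}[z=0]}\right|\leq e^{-p(m_0)}.
\end{equation}
In the no-case $\przero{C_0}=0$ this rearranges to $\przero{C,\mc N}\leq e^{-p(m_0)}$, while in the yes-case $\przero{C_0}\geq 2^{-2n_0}$ it gives $\przero{C,\mc N}\geq \Pr_{C,\mc N}[z=0]\cdot\bigl(2^{-2n_0}-e^{-p(m_0)}\bigr)$. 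The crucial observation is that $\Pr_{C,\mc N}[z=0]$ is not too small: since each of the $r=O(m)$ stochastic noise channels acts as the identity with probability $1-\gamma$, the event that \emph{no} error occurs anywhere has probability at least $(1-\gamma)^r=e^{-O(m)}$, and in that event the syndrome register deterministically reads $z=0$, so $\Pr_{C,\mc N}[z=0]\geq e^{-O(m)}$.

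Combining these estimates, the yes- and no-cases of Problem~\ref{prob:decision} translate to values of $\przero{C,\mc N}$ separated by at least $e^{-O(m)}\cdot 2^{-2n_0-1}-e^{-p(m_0)}$. Choosing $p$ large enough that $e^{-p(m_0)}$ is dominated by $e^{-O(m)}\cdot 2^{-2n_0}$ (any sufficiently large polynomial suffices since $n_0\leq m_0$ and $m=\poly(m_0)$), the net gap is at least $e^{-\Omega(m)}$, which is vastly larger than the $\exp(-O(m\log m))$ promised approximation error. A single call to the hypothesized algorithm therefore distinguishes the two cases, and since the whole reduction runs in polynomial time we conclude that computing $\przero{C,\mc N}$ to additive error $\exp(-O(m\log m))$ is $\cocequalp$-hard.

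The main step to verify carefully is the lower bound $\Pr_{C,\mc N}[z=0]\geq e^{-O(m)}$: it uses crucially that the noise is stochastic of the form $(1-\gamma)\mc I+\gamma\mc E_k$ so that ``no error occurs'' is a well-defined event of probability $\geq(1-\gamma)^r$, and that the fault-tolerant error detection construction is designed so that a noiseless execution produces syndrome $z=0$. The remaining work is merely parameter bookkeeping, juggling the three exponentially small scales $2^{-2n_0}$, $e^{-O(m)}$, and $e^{-p(m_0)}$; the $\exp(-O(m\log m))$ robustness of the theorem leaves ample slack relative to the $e^{-O(m)}$ gap we are able to produce, so no additional subtlety is expected beyond choosing $p$ appropriately.
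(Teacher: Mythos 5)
Your proposal is correct and follows essentially the same route as the paper's proof: reduce from the gapped decision problem, apply Lemma~\ref{lemma:errordetection}, lower-bound $\Pr_{C,\mc N}[z=0]$ by the probability $(1-\gamma)^r=e^{-O(m)}$ that no error occurs, and observe that the resulting $e^{-\Omega(m)}$ gap between the yes- and no-case values of $\przero{C,\mc N}$ dwarfs the $\exp(-O(m\log m))$ approximation error. The only cosmetic difference is that the paper also records an upper bound on $\Pr_{C,\mc N}[z=0]$ so that both case bounds are explicitly computable, but your bounds are equally computable and the argument goes through.
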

\begin{proof}
We prove a polynomial time reduction from Problem~\ref{prob:decision}. On input a logical circuit $C_0$ of $n_0$ qubits and $m_0$ gates, we would like to decide if $\przero{C_0}=0$, or $\przero{C_0}\geq\frac{1}{2^{2n_0}}$. Choose $p$ such that $\exp(-p(m_0))\ll\frac{1}{2^{2n_0}}$, and construct a circuit $C$ with $n=\poly(n_0)$ qubits and $m=\poly(m_0)$ gates according to Lemma~\ref{lemma:errordetection}. Consider the probability $\Pr_{C,\mc N}[z=0]$ of detecting no errors. We bound it in two directions. First,
\begin{equation}
    \Pr_{C,\mc N}[z=0]\geq\Pr[\text{no error occurs}]=\exp(-O(\gamma m)).
\end{equation}
On the other hand,
\begin{nalign}
    \Pr_{C,\mc N}[z=0]&=\Pr[\text{no error occurs}]+\Pr[\text{some error occurs but undetected}]\\
    &\leq \exp(-O(\gamma m))\left(1+\exp(-\poly(m_0))\right),
\end{nalign}
where the second term takes into account both the probability of the error pattern as well as the probability of beating the code distance (also see~\cite{fujii2016noise}). Next, consider the two cases of Problem~\ref{prob:decision},
\begin{enumerate}
    \item When $\przero{C_0}=0$, we have
    \begin{nalign}
        \przero{C,\mc N}&\leq \exp(-p(m_0))\Pr_{C,\mc N}[z=0]\\
        &\leq \exp(-p(m_0))\exp(-O(\gamma m)).
    \end{nalign}
    \item When $\przero{C_0}\geq\frac{1}{2^{2n_0}}$, we have
    \begin{nalign}
        \przero{C,\mc N}&\geq\left(\frac{1}{2^{2n_0}} -\exp(-p(m_0))\right)\Pr_{C,\mc N}[z=0]\\
        &\geq\left(\frac{1}{2^{2n_0}} -\exp(-p(m_0))\right) \exp(-O(\gamma m)).
    \end{nalign}
\end{enumerate}
The above bounds can be explicitly computed. Moreover, they have a large gap, which is at least
\begin{equation}
    O\left(\frac{1}{2^{2n_0}}\right) \exp(-O(\gamma m))\gg \exp(-O(m\log m)).
\end{equation}
Therefore, computing $\przero{C,\mc N}$ up to additive error $\exp(-O(m\log m))$ is sufficient for deciding Problem~\ref{prob:decision}.
\end{proof}

Using the fact that $\cocequalp$ is hard for $\PH$ under randomized reductions~\cite{Toda1992counting}, Theorem~\ref{thm:mainworstcasehardness} is a direct corollary of Theorem~\ref{thm:worstcasehardness}.

\section{Proof of Lemma~\ref{lemma:noisyconvergence}}\label{appendix:noisyconvergence}
The proof uses techniques that are developed by Harrow and Low~\cite{Harrow2009random}, which were developed to show that random quantum circuits converge to unitary 2-design. Below we show that noisy circuits correspond to entirely different convergence behavior. These techniques can also be applied to analyze local random circuits, which we leave as important future work.

Consider the quantity $CP$ (collision probability/classical purity) defined as follows,
\begin{equation}
    CP=2^n \E_{U\sim\mathrm{GRC}(n,d)}\left[\sum_x p_{e}^2(x)\right]-1.
\end{equation}
Here $\mathrm{GRC}(n,d)$ denotes the global random circuit distribution on $n$ qubits with depth $d$, as shown in Fig.~\ref{fig:randomcircuitsampling}b, and $p_e$ denotes the output distribution of the noisy circuit. We will drop the subscript $U\sim\mathrm{GRC}(n,d)$ unless necessary. For simplicity, here we consider i.i.d. depolarizing noise, where each blue dot represents a single qubit noise channel
\begin{equation}
    \mc E(\rho)=(1-\gamma)\rho + \gamma \frac{I}{2}\Tr[\rho],
\end{equation}
where $\gamma$ is the noise rate and $I$ is the identity matrix. Using Cauchy–Schwarz inequality, it is easy to see that for each specific circuit, the classical purity is an upper bound of the total variation distance, as
\begin{equation}
    D_{\mathrm{TV}}(p_e,p_{\mathrm{uniform}})=\frac{1}{2}\sum_{x}\left|p_e(x)-\frac{1}{2^n}\right|\leq \frac{1}{2}\sqrt{2^n\sum_x p_e^2(x)-1}.
\end{equation}

To see why we use $CP$ to represent the distance, notice that $CP$ is a linear function of the two-copy output density matrix $\rho_e\otimes \rho_e$. Therefore, by linearity of expectation, we can consider an initial state $\ketbra{0^n}^{\otimes 2}$ going through an average quantum channel; that is, to understand the convergence behavior of $CP$, it suffices to track the evolution of the second moment of the density matrix $\E[\rho_e\otimes \rho_e]$ as the circuit depth increases. More specifically, we can write any Hermitian operator in the Fourier basis as
\begin{equation}
    \E[\rho_e\otimes \rho_e]=\frac{1}{2^n}\sum_{p,q}\delta(p,q)\sigma_p\otimes \sigma_q,
\end{equation}
where $p,q\in\{0,1,2,3\}^{n}$ are $2n$-bit strings, $\sigma_p,\sigma_q\in\{I,X,Y,Z\}^{\otimes n}$ are the corresponding Pauli operators indexed by $0\to I,1\to X,2\to Y,3\to Z$, and
\begin{equation}
    \delta(p,q)=\frac{1}{2^n}\Tr[\E[\rho_e\otimes \rho_e]\sigma_p\otimes \sigma_q]=\frac{1}{2^n}\E\Tr[\left(\rho_e\otimes \rho_e\right)\sigma_p\otimes \sigma_q]
\end{equation}
are the Fourier coefficients, where the second equality follows from linearity.

As we take expectation over random quantum circuits $U\sim\mathrm{GRC}(n,d)$, the Fourier coefficients $\delta(p,q)$ only depend on $(\mc E,n,d)$, i.e. noise channel, system size, and circuit depth, and they contain all the information needed to evaluate $CP$. More specifically we have
\begin{nalign}
    &CP=2^n \E\left[\sum_x p_{e}^2(x)\right]-1=2^n\sum_{p\in\{0,3\}^n\setminus\{0^n\}}\delta(p,p)=2^n\sum_{p\in\{0,3\}^n\setminus\{0^n\}}\alpha(p),
\end{nalign}
where we have defined $\alpha(p):=\delta(p,p)$.

In the following we will study the evolution of these coefficients $\delta(p,q)$ as gates and noise are applied. For the initial state $\ketbra{0^n}^{\otimes 2}$, we have $\delta(p,q)=\frac{1}{2^n}$ for any $p,q\in\{0,3\}^n$ and 0 otherwise. Next, consider the application of a global random unitary on the Pauli basis, it was shown \cite{Harrow2009random} that
\begin{nalign}\label{eq:twoqubitgateaction}
    &\E_{U\sim \mathbb{U}(2^n)}\left[U\sigma_1 U^\dag\otimes U\sigma_2 U^\dag\right],\,\,\,\,\sigma_1,\sigma_2\in\{I,X,Y,Z\}^{\otimes n}\\
    &=\begin{cases}
        0, & \sigma_1\neq \sigma_2,\\
        I^{\otimes 2n}, & \sigma_1= \sigma_2=I^{\otimes n},\\
        \frac{1}{2^{2n}-1}\left(\sum_{p\in\{0,1,2,3\}^n\setminus\{0^n\}}\sigma_p\otimes \sigma_p\right),&\sigma_1= \sigma_2\neq I^{\otimes n}.
    \end{cases}
\end{nalign}
Therefore the off-diagonal coefficients $\delta(p,q)$ ($p\neq q$) becomes 0 for depth $d\geq 1$. From now on we only consider diagonal coefficients $\alpha(p)=\delta(p,p)$. The evolution of $\alpha$ under a global random unitary is therefore given by
\begin{equation}
    \alpha'(p)=\begin{cases}
            \alpha(p), & p=0^n,\\
            \frac{1}{2^{2n}-1}\sum_{q\in\{0,1,2,3\}^n\setminus\{0^n\}}\alpha(q), & p\neq 0^n.
            \end{cases}
\end{equation}
It is easy to observe the following properties: (1) $\alpha(0^n)=\frac{1}{2^n}$ is a constant; (2) for $p\neq 0^n$, $\alpha'(p)$ does not depend on $p$; (3) $\sum_{p\in\{0,1,2,3\}^n}\alpha(p)$ is preserved by unitary evolution.

Finally, consider the effect of depolarizing noise on a single qubit, which gives
\begin{nalign}
    &I\otimes I\to I\otimes I,\\
    &X\otimes X\to (1-\gamma)^2 X\otimes X,\\
    &Y\otimes Y\to (1-\gamma)^2 Y\otimes Y,\\
    &Z\otimes Z\to (1-\gamma)^2 Z\otimes Z.
\end{nalign}
Therefore the effect of noise can be summarized as
\begin{nalign}
    \alpha(p)\to (1-\gamma)^{2|p|}\alpha(p),
\end{nalign}
where $|p|$ denotes the Hamming weight of $p$, i.e. the number of non-zero terms.

Combining everything together, we can write the evolution of $\sum_{p\in\{0,1,2,3\}^n\setminus\{0^n\}}\alpha(p)$ in one cycle (i.e. one global gate followed by a layer of noise channels) as
\begin{equation}
    \sum_{p\in\{0,1,2,3\}^n\setminus\{0^n\}}\alpha(p)\to \frac{\sum_{p\in\{0,1,2,3\}^n\setminus\{0^n\}}(1-\gamma)^{2|p|}}{2^{2n}-1}\sum_{p\in\{0,1,2,3\}^n\setminus\{0^n\}}\alpha(p).
\end{equation}
Define the decay coefficient as
\begin{equation}
    \beta=\frac{\sum_{p\in\{0,1,2,3\}^n\setminus\{0^n\}}(1-\gamma)^{2|p|}}{2^{2n}-1}=\frac{\sum_{k=1}^n \binom{n}{k}3^k(1-\gamma)^{2k}}{2^{2n}-1}=\frac{\left(1+3(1-\gamma)^2\right)^n-1}{2^{2n}-1}.
\end{equation}
Therefore at depth $d$ we have
\begin{equation}
    \sum_{p\in\{0,1,2,3\}^n\setminus\{0^n\}}\alpha(p)=\frac{2^n-1}{2^n}\beta^d.
\end{equation}
To compute $CP$ at depth $d$, note that we can first compute $\sum_{p\in\{0,1,2,3\}^n\setminus\{0^n\}}\alpha(p)$ at depth $d-1$. These coefficients are then uniformly redistributed by the global unitary, and then decay according to noise. Overall we have
\begin{nalign}
    CP&=2^n\sum_{p\in\{0,3\}^n\setminus\{0^n\}}\alpha(p)\\
    &=2^n\sum_{p\in\{0,3\}^n\setminus\{0^n\}}(1-\gamma)^{2|p|}\frac{\frac{2^n-1}{2^n}\beta^{d-1}}{2^{2n}-1}\\
    &=\frac{\left(1+(1-\gamma)^2\right)^n-1}{2^{n}+1}\left(\frac{\left(1+3(1-\gamma)^2\right)^n-1}{2^{2n}-1}\right)^{d-1}.
\end{nalign}
To obtain an upper bound of $CP$, we have
\begin{nalign}
    CP&=\frac{\left(1+(1-\gamma)^2\right)^n-1}{2^{n}+1}\left(\frac{\left(1+3(1-\gamma)^2\right)^n-1}{2^{2n}-1}\right)^{d-1}\\
    &\leq \left(1-\left(\gamma-\frac{\gamma^2}{2}\right)\right)^n\cdot \left(1-\left(\frac{3\gamma}{2}-\frac{3\gamma^2}{4}\right)\right)^{nd-n}\\
    &=\exp\left(-\Omega(\gamma n (d-d_0))\right).
\end{nalign}
where $d_0$ is a universal constant.

\section{Random unitary and Cayley transform}\label{appendix:cayleytransform}
In this section we introduce the Cayley transform~\cite{movassagh2020quantum} and establish some useful properties. These properties are originally proved in~\cite{movassagh2020quantum}, and here we give a self-consistent presentation as we are using different notations.

We consider the Haar measure $\mu_H$ over the unitary group $\mathbb{U}(N)$ of $N\times N$ unitary matrices.
Any unitary matrix can be diagonalized as $U=V\Lambda V^\dag$, where $V$ is unitary and $\Lambda=\mathrm{diag}(e^{i\varphi_1},\dots,e^{i\varphi_N})$, $\varphi_j\in[-\pi,\pi]$. From now on we use ``eigenvalues" to refer to the angles $\varphi_j$. Consider the following transform for the eigenvalues,
\begin{nalign}\label{eq:eigvaltransform}
&f_\theta:[-\pi,\pi]\to[-\pi,\pi],\\
&f_\theta(\varphi)=2\arctan\left((1-\theta)\tan\frac{\varphi}{2}\right).
\end{nalign}
It is easy to verify that applying $f_\theta$ to all eigenvalues of a unitary matrix is equivalent to applying the Cayley transform as given in Definition~\ref{def:cayleytransform}.

Define $\mu_\theta$ ($\theta\geq 0$) as the probability measure on $\mathbb{U}(N)$ induced by applying $f_\theta$ to all eigenvalues of the unitary matrix drawn from $\mu_H$. From the definition it is easy to see that $\mu_0=\mu_H$ and $\mu_1$ is the indicator for the identity matrix. 

In the following we establish the bound of the total variation distance between $\mu_\theta$ and $\mu_H$ when $\theta$ is a small positive number.

\begin{lemma}\label{lemma:totalvariationdistance}
For $N=O(1)$ and $\theta\in(0,0.1)$, the total variation distance between $\mu_\theta$ and $\mu_H$ is bounded by
\begin{equation}
    D_{\mathrm{TV}}(\mu_\theta,\mu_H)=\int_{\mathbb{U}(N)}\frac{1}{2}\left|1-\frac{d\mu_\theta}{d\mu_H}\right|d\mu_H=O\left(\theta\right).
\end{equation}
\end{lemma}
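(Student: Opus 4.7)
The plan is to reduce the TV distance on $\mathbb{U}(N)$ to a comparison of eigenvalue densities on the torus via a change of variables, then verify that the resulting Radon--Nikodym derivative differs from $1$ by $O(\theta)$ uniformly. First, observe that the Cayley map $V\Lambda V^\dagger \mapsto V f_\theta(\Lambda) V^\dagger$ is conjugation-equivariant, so $\mu_\theta$ inherits conjugation invariance from $\mu_H$. For two conjugation-invariant probability measures on $\mathbb{U}(N)$, the TV distance coincides with the TV distance between their eigenvalue marginals (the conditional distribution of the eigenbasis given the eigenvalues is Haar for both, hence identically couplable). By Weyl's integration formula, the Haar eigenvalue density on $[-\pi,\pi]^N$ is
\begin{equation}
p_H(\varphi_1,\dots,\varphi_N) = \frac{1}{N!\,(2\pi)^N}\prod_{j<k}\bigl|e^{i\varphi_j}-e^{i\varphi_k}\bigr|^2,
\end{equation}
and the eigenvalue density of $\mu_\theta$ is its pushforward under the coordinate-wise map $\psi_j = f_\theta(\varphi_j)$. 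Substituting $\psi = f_\theta(\varphi)$ in the TV integral gives
\begin{equation}
2\,D_{\mathrm{TV}}(\mu_\theta,\mu_H) = \int_{[-\pi,\pi]^N}\Bigl|\,p_H(\varphi) - p_H(f_\theta(\varphi))\prod_{j=1}^N f_\theta'(\varphi_j)\,\Bigr|\,d\varphi.
\end{equation}

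Next, I would show that the integrand is bounded pointwise by $O_N(\theta)\,p_H(\varphi)$, so that integrating against $\int p_H = 1$ yields the claim. Direct differentiation of $f_\theta$ gives $f_\theta'(\varphi) = 1 - \theta\cos\varphi + O(\theta^2)$, with the uniform bound $f_\theta'(\varphi)\in[1-\theta,(1-\theta)^{-1}]$ for $\theta\in(0,0.1)$, so $\prod_j f_\theta'(\varphi_j) = 1 + O(N\theta)$. For each Vandermonde factor, the mean value theorem gives $f_\theta(\varphi_j)-f_\theta(\varphi_k) = (1+O(\theta))(\varphi_j-\varphi_k)$; using $|e^{i\alpha}-e^{i\beta}|^2 = 4\sin^2((\alpha-\beta)/2)$ and the fact that $\sin$ on $[-\pi,\pi]$ vanishes only at $\{0,\pm\pi\}$ to linear order, I would conclude that the ratio $|e^{if_\theta(\varphi_j)}-e^{if_\theta(\varphi_k)}|^2/|e^{i\varphi_j}-e^{i\varphi_k}|^2$ equals $1 + O(\theta)$ uniformly on $[-\pi,\pi]^2$. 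Since $N = O(1)$, the product over $\binom{N}{2}$ factors stays $1 + O_N(\theta)$, giving the desired bound on the integrand.

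The main obstacle is the collision locus $\{\varphi_j \equiv \varphi_k \pmod{2\pi}\}$, where the Vandermonde ratio is a priori $0/0$; this demands a genuinely uniform control of the ratio including near $\alpha-\beta \to \pm 2\pi$, i.e.\ the circular collision coming from endpoint identification. Because $f_\theta$ fixes $\pm\pi$ and lifts to a smooth diffeomorphism of the circle, circular collisions map to circular collisions at the same linear order, and the Vandermonde ratio extends continuously through the collision locus to a value $1 + O(\theta)$ (by the MVT applied to the periodized map). Away from collisions, both squared sines are bounded below by a positive constant times the squared circular distance, so the ratio is controlled by the Lipschitz constant of $f_\theta - \mathrm{id}$, which is $O(\theta)$. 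The remaining bookkeeping -- verifying that the $O_N$-constant is independent of $\varphi$ and integrating -- is routine.
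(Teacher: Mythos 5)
Your proposal is correct, and the first half (conjugation-equivariance of the Cayley map, reduction of the TV distance to the eigenvalue marginals, Weyl's integration formula, change of variables) coincides with the paper's argument. Where you diverge is in the key estimate. The paper writes the pushforward density as $g_\theta(\varphi)=g_H(f_\theta^{-1}(\varphi_1),\dots,f_\theta^{-1}(\varphi_N))\prod_i|df_\theta^{-1}(\varphi_i)/d\varphi_i|$ and bounds $|g_\theta-g_H|$ \emph{additively}: since $\|f_\theta^{-1}-\mathrm{id}\|_\infty=O(\theta)$, $\|(f_\theta^{-1})'-1\|_\infty=O(\theta)$, and $g_H$ is a bounded Lipschitz function of its arguments for $N=O(1)$, the density difference is uniformly $O(\theta)$ and one integrates over the bounded domain $[-\pi,\pi]^N$. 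This sidesteps the collision locus entirely, because no ratio of Vandermonde factors ever appears. You instead prove a \emph{multiplicative} bound, $p_H(f_\theta(\varphi))\prod_j f_\theta'(\varphi_j)=(1+O(\theta))\,p_H(\varphi)$, which forces you to control the ratio $|e^{if_\theta(\varphi_j)}-e^{if_\theta(\varphi_k)}|^2/|e^{i\varphi_j}-e^{i\varphi_k}|^2$ uniformly through both the ordinary and the circular collisions. Your handling of this is sound: $f_\theta$ fixes $0$ and $\pm\pi$ and has derivative in $[1-\theta,(1-\theta)^{-1}]$, so the mean value theorem applied to the periodized map gives both $f_\theta(\alpha)-f_\theta(\beta)=(1+O(\theta))(\alpha-\beta)$ and $2\pi-(f_\theta(\alpha)-f_\theta(\beta))=(1+O(\theta))(2\pi-(\alpha-\beta))$, and since $4\sin^2(u/2)$ vanishes to exactly second order at $u=0$ and $u=\pm 2\pi$ and is bounded below in between, the ratio is $1+O(\theta)$ everywhere. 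The upshot is that your route is more work but proves a strictly stronger statement (a uniform $1+O(\theta)$ bound on the Radon--Nikodym derivative itself, i.e.\ closeness in a multiplicative or $\chi^2$-type sense), whereas the paper's additive Lipschitz argument is the minimal effort needed for the $O(\theta)$ TV bound when $N=O(1)$. Neither argument tracks the $N$-dependence of the constants, and neither needs to for the application.
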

\begin{proof}

\begin{figure}[t]
    \centering
    \includegraphics[width=10cm]{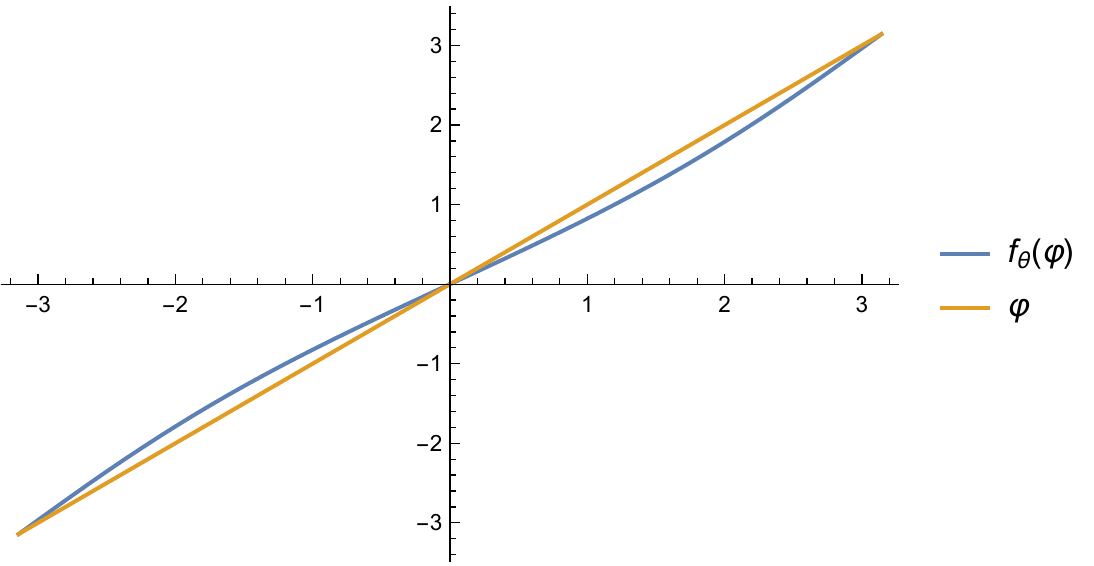}
    \caption{Plot of the eigenvalue transform $f_\theta(\varphi)$ for $\theta=0.2$.}
    \label{fig:transform}
\end{figure}
First, notice that $\mu_\theta$ and $\mu_H$ has the same eigenvector distribution. As the eigenvalues and eigenvectors are independently distributed in $\mu_H$, the integral only concerns the eigenvalue density.

To see this, we use the two-sided invariance of the Haar measure. Let $U\sim_{\mu_H}\mathbb{U}(N)$ be expressed as $U=V\Lambda V^\dag$, where $V$ is the eigenvectors and $\Lambda$ is the eigenvalues. For any unitary $Q\in\mathbb{U}(N)$, $QUQ^\dag$ is equal to $U$ in distribution. Therefore, $QUQ^\dag$ is also equal to $U$ in distribution when $Q$ is independently drawn from $\mu_H$. As $QUQ^\dag=(QV)\Lambda(QV)^\dag$, the eigenvectors $V$ is distributed the same as $QV$ up to global phase, which is distributed the same as $Q$, which is independent from $\Lambda$. To summarize, the eigenvectors and eigenvalues of Haar random unitary are independently distributed, where the eigenvectors are distributed the same as $\mu_H$ up to global phase, and we quote the eigenvalue density from~\cite{Hiai2006semicircle}:
\begin{equation}
g_H(\varphi_1,\dots,\varphi_N)=\frac{1}{(2\pi)^N N!} \prod_{1\leq k<l\leq N}
    \left|e^{i\varphi_k}-e^{i\varphi_l}\right|^2
\end{equation}
where $\varphi_i\in[-\pi,\pi]$. Let $g_\theta$ be the eigenvalue density of $\mu_\theta$. The total variation distance is then reduced to the following integral over Lebesgue measure on $[-\pi,\pi]^N$,
\begin{equation}
    D_{\mathrm{TV}}(\mu_\theta,\mu_H)=\int_{[-\pi,\pi]^N}\frac{1}{2}\left|g_\theta(\varphi_1,\dots,\varphi_N)-g_H(\varphi_1,\dots,\varphi_N)\right|d\varphi_1\cdots d\varphi_N.
\end{equation}

An example of the transform $f_\theta$ is plotted in Fig.~\ref{fig:transform}. It is easy to prove that $f_\theta$ is monotonic and has range $[-\pi,\pi]$.

To prove the upper bound of the total variation distance, we need to calculate the transformed eigenvalue density $g_\theta(\varphi_1,\dots,\varphi_N)$, by applying $f_\theta$ to $\varphi_1,\dots,\varphi_N$ jointly distributed according to $g_H$. Then $g_\theta$ can be written as
\begin{nalign}
    g_\theta(\varphi_1,\dots,\varphi_N)&=g_H\left(f_\theta^{-1}(\varphi_1),\dots,f_\theta^{-1}(\varphi_N)\right)\prod_{i=1}^N\left|\frac{d f_\theta^{-1}(\varphi_i)}{d\varphi_i}\right|,\,\,\,\,\forall \varphi_i\in [-\pi,\pi].
\end{nalign}
where the inverse transform of $f_\theta$ is given by
\begin{equation}
    f_{\theta}^{-1}(\varphi)=2\arctan\left(\frac{1}{1-\theta}\tan\frac{\varphi}{2}\right)
\end{equation}
and its derivative given by
\begin{equation}
    \frac{df^{-1}_\theta(\varphi)}{d\varphi}=\frac{1-\theta}{1-(2\theta-\theta^2)\cos^2\frac{\varphi}{2}}.
\end{equation}
It is easy to prove the following bounds by a simple calculation: assume $\theta$ is small enough (say $\theta<0.1$), then for all $\varphi\in[-\pi,\pi]$,
\begin{nalign}\label{eq:cayleytransformbounds}
    &\left|f_\theta^{-1}(\varphi)-\varphi\right|\leq O(\theta)\\
    &\left|\frac{df^{-1}_\theta(\varphi)}{d\varphi}-1\right|\leq O(\theta),
\end{nalign}
where the constants in the above two equations is independent of $\varphi$. The proof then follows by directly bounding the difference between the densities. Consider
\begin{nalign}
    &\left|g_\theta(\varphi_1,\dots,\varphi_N)-g_H(\varphi_1,\dots,\varphi_N)\right|\\
    &=\left|g_H\left(f_\theta^{-1}(\varphi_1),\dots,f_\theta^{-1}(\varphi_N)\right)\prod_{i=1}^N\left|\frac{d f_1^{-1}(\varphi_i)}{d\varphi_i}\right|-g_H(\varphi_1,\dots,\varphi_N)\right|\\
    &\leq \left|\left(g_H\left(\varphi_1,\dots,\varphi_N\right)+\sum_i O\left(\theta\right)\right)\prod_{i=1}^N\left(1+O\left(\theta\right)\right)-g_H(\varphi_1,\dots,\varphi_N)\right|\\
    &\leq O(\theta),
\end{nalign}
where the third line follows from Eq.~\eqref{eq:cayleytransformbounds} and the continuity of $g_H$. Finally the integral of the difference between the densities is still bounded by $O(\theta)$,
which concludes the full proof.
\end{proof}

Note that in the matrix form of the Cayley transform
\begin{equation}
    H(\theta)=\sum_j\frac{1+i(1-\theta)\tan\frac{\varphi_j}{2}}{1-i(1-\theta)\tan\frac{\varphi_j}{2}}\ketbra{\psi_j},
\end{equation}
where $H=\sum_j e^{i\varphi_j}\ketbra{\psi_j}$, the magnitude of the denominator in the above equation goes to infinity as $\varphi_j$ is close to $\pm\pi$, which will affect the accuracy of the polynomial interpolation in our worst-to-average-case reduction. However, the probability of such an event is small when $H$ is drawn from $\mu_H$, as shown in the following lemma.

\begin{lemma}\label{lemma:haardistributionrange}
The Haar distribution $\mu_H$ over the unitary group $\mathbb{U}(N)$ satisfies
\begin{nalign}
\Pr_{\mu_H}\left[\varphi_j\in [-\pi+\delta,\pi-\delta],\,\,\forall j\right]\geq 1-\frac{N\delta}{\pi},
\end{nalign}
where $\varphi_j$ ($j=1\dots N$) denotes the eigenvalues of the random unitary distributed according to $\mu_H$.
\end{lemma}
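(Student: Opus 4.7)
The plan is to reduce the statement to the marginal distribution of a single eigenvalue via a union bound, and then show that each such marginal is the uniform distribution on $[-\pi,\pi]$.

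First, I would establish the following claim: for any fixed $j\in\{1,\dots,N\}$, the marginal distribution of $\varphi_j$ under $\mu_H$ is uniform on $[-\pi,\pi]$. The cleanest way to see this is to exploit the left-invariance of the Haar measure under $U\mapsto e^{i\alpha} U$ for $\alpha\in\mathbb{R}$. This global-phase multiplication sends each eigenvalue $\varphi_j$ to $\varphi_j+\alpha\pmod{2\pi}$. Hence the joint distribution of the eigenvalue angles on the circle is invariant under a simultaneous rotation by $\alpha$. Integrating out $\varphi_{j'}$ for $j'\neq j$ over $[-\pi,\pi]$ (which is itself rotation-invariant when regarded as the circle) yields that the marginal of $\varphi_j$ is invariant under rotation, and is therefore uniform on $[-\pi,\pi]$. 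Alternatively, this can be verified directly from the joint density
\[
g_H(\varphi_1,\dots,\varphi_N)=\frac{1}{(2\pi)^N N!}\prod_{k<l}\left|e^{i\varphi_k}-e^{i\varphi_l}\right|^2,
\]
since each pairwise factor $|e^{i\varphi_k}-e^{i\varphi_l}|^2$ only depends on the difference $\varphi_k-\varphi_l$, so the density is translation-invariant on the torus.

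Given uniformity of the marginal, the event $\{\varphi_j\notin[-\pi+\delta,\pi-\delta]\}$ corresponds to an arc of length $2\delta$ on the circle of total length $2\pi$, so
\[
\Pr_{\mu_H}\!\left[\varphi_j\notin[-\pi+\delta,\pi-\delta]\right]=\frac{2\delta}{2\pi}=\frac{\delta}{\pi}.
\]
A union bound over $j=1,\dots,N$ then yields
\[
\Pr_{\mu_H}\!\left[\exists\, j:\varphi_j\notin[-\pi+\delta,\pi-\delta]\right]\le \frac{N\delta}{\pi},
\]
from which the stated lower bound follows immediately.

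There is no real obstacle here; the only subtlety is justifying the marginal uniformity rigorously, and I expect the two-line argument via left-invariance of Haar under $U\mapsto e^{i\alpha}U$ to be the cleanest route, with the density-based computation serving as a sanity check.
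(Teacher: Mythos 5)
Your proposal is correct and follows essentially the same route as the paper: a union bound over the $N$ eigenvalues reduced to the single-eigenvalue marginal, which is uniform on $[-\pi,\pi]$. The only difference is that the paper's final line asserts $\Pr_{\mu_H}\left[\varphi_1\notin[-\pi+\delta,\pi-\delta]\right]=\delta/\pi$ without comment, whereas you supply the (correct) justification via rotation-invariance of Haar measure under $U\mapsto e^{i\alpha}U$ (equivalently, translation-invariance of the Weyl density on the torus).
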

\begin{proof}
The proof is a simple union bound:
\begin{nalign}
    \Pr_{\mu_H}\left[\varphi_j\in [-\pi+\delta,\pi-\delta],\,\,\forall j\right]&=1-\Pr_{\mu_H}\left[\exists j:\varphi_j\notin [-\pi+\delta,\pi-\delta]\right]\\
    &\geq 1-\sum_{j=1}^N\Pr_{\mu_H}\left[\varphi_j\notin [-\pi+\delta,\pi-\delta]\right]\\
    &= 1-N\Pr_{\mu_H}\left[\varphi_1\notin [-\pi+\delta,\pi-\delta]\right]\\
    &=1-\frac{N\delta}{\pi}.
\end{nalign}
\end{proof}

\end{document}